\documentclass[10pt,a4paper]{article}
\usepackage{amsmath,amssymb,amsfonts,amsthm,mathtools}
\usepackage{dsfont} 
\usepackage{stmaryrd}
\usepackage{epsf}
\usepackage{amscd}
\usepackage{amsopn}

\usepackage[english]{babel}
\usepackage[T1]{fontenc}      
\usepackage[utf8]{inputenc}
\usepackage[colorlinks=true, allcolors=blue]{hyperref}
\usepackage{setspace}
\usepackage{mathrsfs}
\usepackage{natbib}
\usepackage{epsfig}
\usepackage{graphicx}
\usepackage{bm}
\usepackage{xcolor}
\usepackage{comment}
\usepackage{tabularx}
\usepackage{multirow}
\newtheorem{coro}{Corollary}[section]
\newtheorem{Theo}{Theorem}[section]
\newtheorem{assumption}[Theo]{Assumption}
\newtheorem{lemm}{Lemma}[section]
\newtheorem{rema}{Remark}[section]
\newtheorem{prop}{Proposition}[section]

\newcommand{\real}{\mathbb R}
\newcommand{\Rplus}{\mathbb{R}_{+}}

\newcommand{\dint}{\displaystyle\int}

\newcommand{\calC}{{\mathcal C}}

\newcommand{\calF}{{\mathcal F}}

\newcommand{\calL}{{\mathcal L}}

\newcommand{\bc}{\begin{center}}
	\newcommand{\ec}{\end{center}}
\newcommand{\benu}{\begin{enumerate}}
	\newcommand{\eenu}{\end{enumerate}}

\newcommand{\E}{\mathbb{E}}
\newcommand{\p}{\mathbb{P}}

\newcommand{\tto}{\longrightarrow}
\newcommand{\dd}{\,\mathrm{d}}

\usepackage{cases}
\usepackage{systeme}
\usepackage{enumerate}
\usepackage{cancel}
\usepackage{fancyhdr}
\usepackage{cleveref}
\usepackage{chngcntr}
\numberwithin{equation}{section}

\newcommand{\seclabel}[1] {\label{#1}}

\usepackage{xspace}
\usepackage{lmodern} 
\usepackage{fullpage}

\usepackage{lipsum}
\makeatletter
\def\blfootnote{\gdef\@thefnmark{}\@footnotetext}
\makeatother

\usepackage{geometry}
\newcommand{\ind}{\mathds{1}}
\usepackage{booktabs}
\usepackage{float}

\newcommand{\R}{\mathbb{R}}

\usepackage{algorithm}
\usepackage{algorithmic} 

\title{\bf Random Jump Intensities and Bernstein Density Estimation in Ergodic Basic Affine Jump-Diffusion Processes}

\author{Hamdi Fathallah\thanks{Universit\'e de Sousse, Laboratoire LAMMDA, \'Ecole Sup\'erieure des Sciences et de Technologie de Hammam Sousse, Rue Lamine El Abbessi 4011 Hammam Sousse, Tunisie. E-mail: \texttt{hamdi.fathallah@essths.u-sousse.tn}.}}
\date{}
\begin{document}
\maketitle
\begin{abstract}
This paper develops a random-effects extension of the ergodic basic affine
jump-diffusion (BAJD) model, originally proposed by Duffie and Gârleanu
\cite{DG2001} as a jump-type extension of the
Cox-Ingersoll-Ross (CIR) model. We consider a
population of independent BAJD trajectories whose jump intensities are
individual unobserved random effects, while the drift parameters, diffusion
coefficient, and jump-size distribution are common across the population. The
statistical objective is to estimate both the individual jump intensities and
their common density. For continuous-time observation, the jumps are
pathwise observable, and the extended maximum likelihood estimator (MLE) of each
random intensity is the empirical jump frequency. It coincides with the
ordinary MLE after the first observed jump. We establish its strong
consistency and stable mixed-normal limit as the observation horizon tends to
infinity. Using the estimated intensities as pseudo-observations, we then
construct a Bernstein-polynomial estimator of the random-effect density and
derive its bias, variance, mean integrated squared error, and pointwise
asymptotic normality under a sequential asymptotic framework. The general
theory is developed for an arbitrary density of positive jump-sizes within a
finite-activity jump framework and is then specialized to the
Gamma\((k,\lambda)\) family, including
the exponential and Erlang models. In this specialization, we derive a
pooled MLE of the common rate parameter, establish its consistency and
asymptotic normality, and prove its first-order conditional asymptotic
independence from the individual intensity estimators. We also construct a
consistent plug-in estimator of the population stationary mean. The
theoretical results are complemented by a controlled simulation study and a
reproducible empirical illustration based on daily financial
realized-volatility data. The empirical analysis applies the two-stage
procedure to Exchange-Traded Funds (ETFs) and large-cap equities and examines its sensitivity to the jump-detection threshold.
\end{abstract}

\noindent\textbf{Keywords:} Basic affine jump-diffusion process; jump-type CIR process; random effects; compound Poisson jumps; maximum likelihood estimation; nonparametric density estimation; Bernstein polynomials; exponential jumps; Gamma jumps; empirical financial data.

\noindent\textbf{Mathematics Subject Classification 2020:} 60H10; 60J75; 62F12; 62G07; 91G70.
\blfootnote{\sffamily This research is supported by the PHC-Utique project no.17G1505.}
\section{Introduction}

Stochastic differential equations (SDEs) provide a natural mathematical framework for modeling systems that combine a deterministic trend with random fluctuations. By incorporating a stochastic input, typically a Brownian motion, into the classical framework of ordinary differential equations, SDEs make it possible to account for random perturbations and exogenous shocks affecting the evolution of a system. Such models are widely used in physics, mathematical finance, biology, engineering and economics. Statistical studies of these so-called diffusion processes are by now a mature field, thoroughly documented in several monographs; see, e.g., 
\cite{BDF2023a,BDF2025,BDF2023b,BK2013,Bishwal2008,DMR2022,FH2013,FK2012,JKR2017,Kutoyants2004,Prakasa1999} 
and the references therein.
Classical SDE models implicitly assume a homogeneous dynamic shared by all observed units. In many applications, however clinical trials, longitudinal data, and panels of financial time series observations share a common dynamical structure while also displaying substantial variability from one subject, or series, to another.
Stochastic differential equations with random effects (SDEREs) were introduced to bridge this gap by letting random coefficients enter the drift and/or the diffusion coefficient; the approach was pioneered in population pharmacokinetics/pharmacodynamics (PK/PD) and in population dynamics, where it captures subject-specific unobserved heterogeneity while preserving a common underlying mechanism. Asymptotic inference for such continuous-time random-effects models has since developed considerably, moving from parametric diffusion estimation towards non-parametric, high-frequency \cite{CFS2025,CGCS2013,DGCS2013,DGC2016,Prakasa2021} 
and the references therein.

The Cox-Ingersoll-Ross (CIR) diffusion is one of the most classical models for non-negative stochastic dynamics. Introduced in financial mathematics to model short-term interest rates, it has since become a basic building block in stochastic volatility, credit risk, population dynamics, and other applied areas; see, for instance, \cite{CIR1985,F1951}. Its square-root diffusion coefficient guarantees non-negativity and captures state-dependent volatility. Statistical inference for CIR-type processes has generated a large literature, including maximum likelihood, least-squares, and high-frequency methods; see, among others, \cite{Alfonsi2005,BK2012,BK2013,Kutoyants2004,O1998,OR1997}.

Purely continuous diffusive dynamics are, however, not always sufficient: economic, financial, biological, and physical systems may be subject to sudden upward shocks. Jump-diffusion frameworks, which underlie the present work, have accordingly been the subject of sustained development over the last decade; see, e.g., \cite{BBKP2018,BECE2025b,FJKR2023,JRT2016a,SGKOU2002} and the references therein. A closely related and widely used class of models is that of compound Poisson processes, which arise naturally in queuing and insurance theory. Motivated by the fact that both the jump-size distribution and the Poisson intensity are typically unknown in practice, \cite{CCGC2014} construct parametric estimators for intensity functionals, along with an adaptive nonparametric estimator of the jump-size density.

These considerations motivate jump-type extensions of the CIR process. A basic affine jump-diffusion (BAJD), also called a jump-type CIR process, is obtained by superimposing a positive compound Poisson subordinator on the CIR drift-diffusion dynamics. In its fixed effect form, that is, with a common, non-random jump intensity shared by all trajectories, it satisfies
\begin{equation*}
	\dd Y_t=(a-bY_t)\dd t+\sigma\sqrt{Y_t}\dd W_t+\dd J_t,
	\quad t\geq0,
\end{equation*}
where $a\in (0,\infty)$, $b\in \real$, $\sigma \in (0,\infty)$, $W$ is a standard Brownian motion, and $J$ is a positive finite-activity jump process.

The BAJD model was introduced by Duffie and G\^arleanu \cite{DG2001} to describe the dynamics of default intensity, and was subsequently used as a short-rate model in \cite{Filipovic2001,KS2008}. Motivated by these financial applications, its long-time behavior has since been extensively studied: according to the main results of \cite{Keller2011,KS2008}, the BAJD process possesses a unique invariant probability measure, whose distributional properties were later investigated in \cite{JRT2016a,KM2012}. We note that the results of \cite{Keller2011,KM2012} are in fact very general, holding for a large class of affine processes with state space $[0,\infty)$. Existence results and approximations for the transition densities of the BAJD model can be found in \cite{FMS2013,JRT2016a}.

More broadly, the BAJD model belongs to the affine class and is closely connected to continuous-state branching processes with immigration; its ergodic properties, Laplace transforms, and statistical inference have been investigated in works such as \cite{BBKP2018,BBKP2019_Heston,BDF2023a,BDF2025,BDF2023b,JRT2016b,JRT2016a,LM2015}.

The present paper goes in a different direction. Instead of treating jump intensity as a fixed deterministic parameter common across all observations, we introduce population heterogeneity. We consider $n$ independent BAJD trajectories
$
Y^{(1)},\ldots,Y^{(n)},
$
where the $i$-th trajectory has its own jump intensity $C_i$. The random variables $C_1,\ldots,C_n$ are unobserved independent and identically distributed positive random effects with common density $f_{C}$. The other structural parameters are common across the population. This model is natural whenever different individuals share the same diffusion mechanism but differ in the frequency of shocks.
The interpretation is direct. In finance, $C_i$ may represent the jump-arrival intensity of the $i$-th asset, sector or portfolio. In insurance and credit risk, it may measure the frequency of abrupt losses or default-related shocks. In biomedical applications, it may model the rate of impulses, external interventions or abrupt changes affecting a subject. Thus the density $f_C$ describes the distribution of shock frequencies in the population and becomes an object of statistical interest.

The methodology follows a two-stage hybrid scheme, in the spirit of random-effects estimation for stochastic differential equations. First, each latent intensity $C_i$ is treated as an individual parameter and estimated from the corresponding trajectory by maximum likelihood. Second, the resulting estimates are used as pseudo-observations to construct a nonparametric estimator of the common density $f_C$. Since Bernstein estimators are well adapted to compact supports and exhibit good boundary behaviour, we use Bernstein polynomials to estimate the density of the random effects.

A key advantage of the present BAJD setting is that the random effect acts on
the total mass of the jump L\'evy measure. Under continuous-time observation,
the jump times are pathwise observable because the jumps of \(Y^{(i)}\)
coincide with the jumps of \(J^{(i)}\). Consequently,
$
N_T^{(i)}=\sum_{0<t\leq T}\ind_{\{\Delta Y_t^{(i)}>0\}}
$
is observed and, conditionally on \(C_i\), is Poisson with mean \(C_iT\). 
Therefore, as shown in Proposition~\ref{prop:mle-Ci}, the individual maximum
likelihood estimator is simply
$
\widehat C_{i,T}=\frac{N_T^{(i)}}{T}.
$
This explicit form makes the model analytically transparent and creates a
direct bridge between fixed-effect BAJD likelihood theory and nonparametric
density estimation for random effects.

After the general theory has been developed for an arbitrary jump-size density
\(q_\lambda\), we specialize it to the Gamma\((k,\lambda)\) family:
$
q_\lambda^{(k)}(z)
=
\frac{\lambda^k}{\Gamma(k)}
z^{k-1}e^{-\lambda z}\ind_{(0,\infty)}(z),
\; k\geq1,
$
where the classical exponential model corresponds to \(k=1\) and the
Erlang model of order two corresponds to \(k=2\). Importantly, the
estimator of \(C_i\) is unchanged by this specialization, because the likelihood
contribution of \(C_i\) depends only on the jump count.

The paper is organized into three substantive parts.
Section~\ref{sec:general-framework} develops the BAJD random-effect framework
for an arbitrary positive jump-size density \(q_\lambda\). It derives the
individual extended MLE, studies its asymptotic behavior, and introduces the
Bernstein estimator of the random-effect density together with its bias,
variance, MISE and pointwise limit distribution.
Section~\ref{sec:applications} specializes the framework to the
Gamma\((k,\lambda)\) family. It studies the pooled MLE of the common rate
parameter, its joint asymptotic behavior with the individual intensity
estimators, the resulting estimator of the population stationary mean, and
the exponential and Erlang special cases.
Section~\ref{sec:numerical-illustration} contains a controlled simulation
study and a reproducible empirical illustration based on financial
realized-volatility proxies. The empirical analysis includes a sensitivity
study for the jump-detection threshold. Finally, Appendix~\ref{app:proofs-auxiliary-lemmas} collects the proofs of
the four auxiliary lemmas used in the asymptotic analysis, while
Appendix~\ref{app:martingale-gclt} recalls the generalized central limit
theorem for martingales employed in the stable limit arguments.

\section{General BAJD random-effect model with arbitrary jump-size density}
\label{sec:general-framework}

\subsection{BAJD preliminaries}
Let $(\Omega,\calF,(\calF_t)_{t\geq0},\p)$ be a filtered probability space satisfying the usual conditions. We denote convergence in probability, in distribution, stable convergence in distribution and almost surely by $\stackrel{\mathbb{P}}{\longrightarrow}$, $\stackrel{\mathcal D}{\longrightarrow}$, $\xrightarrow[]{\mathcal D,\,\mathrm{st}}$ and $\stackrel{\mathrm{a.s.}}{\longrightarrow}$, respectively, and we write ``$:=$'' for ``is defined to equal''. We write
$\mathbb R_{--}:=(-\infty,0)$, 
$\mathbb R_{-}:=(-\infty,0]$,
$\mathbb R_{++}:=(0,+\infty)$, $\mathbb
R_+:=[0,+\infty)$ and $\mathbb N=\{0,1,2,\ldots\}$. Let $C^2([0,1])$ denote the space of twice continuously differentiable real-valued functions on $[0,1]$, and let $\Pi_{[0,1]}(x)$ denote the projection of $x$ onto $[0,1]$.

A BAJD process is the unique strong solution of
\begin{equation}
	Y_t=Y_0+\int_0^t(a-bY_s)\dd s+\sigma\int_0^t\sqrt{Y_s}\dd W_s+J_t,
	\quad t\geq0,
	\seclabel{eq:bajd-integral}
\end{equation}
where $Y_0=y_0>0$, $a \in \real_+$, $b\in\R$, $\sigma \in \real_{++}$, and $J$ is a compound Poisson subordinator independent of $W$ and $Y_0$.
In the ergodic case where $b \in \mathbb{R}_{++}$, the process admits a unique invariant probability distribution under standard integrability assumptions on the Lévy measure (see \cite[Theorem 2.4(i)]{BBKP2018}). Furthermore, if $a \in \mathbb{R}_{++}$ and suitable conditions on the Lévy measure of $(J_t)_{t \ge 0}$ hold, the process $(Y_t)_{t \ge 0}$ is exponentially ergodic (see \cite[Theorem 2.4(ii)]{BBKP2018}). For further stochastic properties within a more general framework, see \cite{FJKR2023, JRT2016b}.

Throughout the paper, the jump component is assumed to be of finite activity,
with L\'evy measure of the form
$
	m(\dd z)=c q_\lambda(z)\dd z,
	\; z>0,
	\label{eq:generic-levy-fixed}
$
where \(c\in \mathbb R_{++}\) is the total jump intensity and \(q_\lambda\) is a probability
density on $\mathbb R_{++}$ satisfying the finite first-moment condition
\begin{equation}
	\int_0^\infty zq_\lambda(z)\dd z<\infty.
	\label{eq:q-first-moment}
\end{equation}
Hence
$
J_t=\sum_{k=1}^{N_t}Z_k,
$
where $N$ is a Poisson process with intensity $c$ and $(Z_k)_{k\geq1}$ are  independent and identically distributed (i.i.d.) with density $q_\lambda$.
For a test function $f\in C_c^2(\Rplus,\R)$, the infinitesimal generator is
$
	\calL f(y)=(a-by)f'(y)+\frac{\sigma^2}{2}y f''(y)
	+\int_0^\infty \bigl(f(y+z)-f(y)\bigr)cq_\lambda(z)\dd z.
	$
The corresponding immigration mechanism is
$
	F(u)=au+c\int_0^\infty (e^{uz}-1)q_\lambda(z)\dd z,
	\; u\in\R \text{ such that } \int_0^\infty e^{uz}q_\lambda(z)\dd z<\infty,
	\label{eq:immigration-generic}
$
while the branching mechanism is
$
	R(u)=\frac{\sigma^2}{2}u^2-bu.
	\label{eq:branching}
$
Let
$
\mu_1(\lambda):=\int_0^\infty zq_\lambda(z)\dd z.
$
Then, conditionally on a fixed intensity $c$, one has
\begin{equation}
	\E(J_t)=ct\mu_1(\lambda),
	\label{eq:J-mean-generic}
\end{equation}
and, in the ergodic case $b>0$,
\begin{equation}
	\lim_{t\to\infty}\E(Y_t)=\frac{a+c\mu_1(\lambda)}{b}.
	\label{eq:stationary-mean-generic}
	\end{equation}

Being the strong solution of an SDE driven jointly by a Brownian motion and a compound Poisson process, $(Y_t)_{t\in\real_+}$ is a semimartingale on $(\Omega,\calF,(\calF_t)_{t\geq0},\p)$; see e.g. Jacod and Shiryaev \cite[Definition II.2.6 and Remark II.2.8]{JS2003}. Because the jump component has finite activity ($c<\infty$) and $\mu_1(\lambda)<\infty$, no truncation function is needed to compensate for small jumps: the process
$
M_t:=J_t-c\mu_1(\lambda)t,\; t\geq0,
$
is a centered compound Poisson $(\calF_t)$-martingale, and \eqref{eq:bajd-integral} takes the following Grigelionis (semimartingale) form:
\begin{equation}
	Y_t=Y_0+\int_0^t\bigl(a+c\mu_1(\lambda)-bY_u\bigr)\dd u+\sigma\int_0^t\sqrt{Y_u}\dd W_u+M_t,
	\quad t\geq0.
	\label{eq:grigelionis-form}
\end{equation}
This decomposition isolates the local martingale part $\bigl(\sigma\int_0^\cdot\sqrt{Y_u}\dd W_u+M_t\bigr)$ from the finite-variation drift part, and is consistent with \eqref{eq:J-mean-generic}--\eqref{eq:stationary-mean-generic}, obtained by taking expectations in \eqref{eq:grigelionis-form}.

\subsection{BAJD model with a random jump intensity}

We now introduce a population version of the BAJD model. For $i=1,\ldots,n$, the $i$-th individual trajectory is observed continuously on $[0,T]$ and satisfies
\begin{equation}
	\dd Y_t^{(i)}=(a-bY_t^{(i)})\dd t+
	\sigma\sqrt{Y_t^{(i)}}\dd W_t^{(i)}+
	\dd J_t^{(i)},
	\quad Y_0^{(i)}=y_0>0.
	\label{BAJD-Random}
\end{equation}

The parameters $a,b,\sigma,\lambda\in \mathbb R_{++}$ are common structural parameters. The individual heterogeneity enters only through the jump intensity. More precisely,
$
J_t^{(i)}=\sum_{k=1}^{N_t^{(i)}}Z_k^{(i)},
\; t\geq0,
$
where, conditionally on $C_i$, $N^{(i)}$ is a Poisson process with intensity $C_i$, and $(Z_k^{(i)})_{k\geq1}$ are i.i.d. positive jump-sizes with density $q_\lambda$.
The conditional L\'evy measure of the $i$-th trajectory is therefore
$
m_i(\dd z)=C_i q_\lambda(z)\dd z.
\label{eq:random-c-levy}
$
The random variable $C_i$ is the individual jump-frequency effect.

\begin{rema}
\begin{enumerate}
    \item The assumption that all trajectories start from the same deterministic initial
value \(Y_0^{(i)}=y_0>0\) is made only for notational simplicity. The estimation
of the random jump intensity \(C_i\) is based on the jump count
\(N_T^{(i)}\), and conditionally on \(C_i\) one has
$
N_T^{(i)}\sim \operatorname{Poisson}(C_iT),
$
independently of the initial value. Therefore, the consistency and stable
central limit theorem for
$
\widehat C_{i,T}=\frac{N_T^{(i)}}{T}
$
remain unchanged if \(Y_0^{(i)}\) is replaced by an i.i.d. positive random
initial condition, independent of \(C_i\) and of the driving Brownian and jump
components, with a distribution not depending on \(C_i\). We keep the
deterministic notation \(Y_0^{(i)}=y_0\) in order to avoid additional notation.
\item The condition \(b\in \mathbb R_{++}\) is retained throughout the paper in order to work within
the ergodic BAJD framework. It guarantees the existence of a conditional
invariant distribution for the trajectory \(Y^{(i)}\) given \(C_i=c\), and this
invariant distribution depends on the individual jump intensity \(c\). Hence
\(b\in \mathbb R_{++}\) is essential for the stationary interpretation of the model and for
long-run quantities such as the conditional stationary mean and the population
stationary mean.
However, the likelihood estimation of the random intensity \(C_i\) itself does
not use the invariant distribution of \(Y^{(i)}\). For continuous observation,
the jumps of \(Y^{(i)}\) are pathwise observable, and conditionally on \(C_i\)
the counting process \(N^{(i)}\) is Poisson with intensity \(C_i\). Therefore the
likelihood component depending on \(C_i\), the estimator
$
\widehat C_{i,T}=\frac{N_T^{(i)}}{T},
$
and its consistency and stable central limit theorem are driven by the counting
process rather than by the ergodic distribution of \(Y^{(i)}\). The assumption
\(b\in \mathbb R_{++}\) is thus structural for the ergodic interpretation of the BAJD model,
while the first-stage jump-intensity estimation relies on the Poisson jump-count
structure.
\end{enumerate}
\end{rema}
In what follows, we introduce two technical assumptions concerning the
population structure and the support of the density used for Bernstein
estimation.

\begin{assumption}
\label{ass:population}
The random variables \(C_1,\ldots,C_n\) are independent and identically
distributed with common density \(f_C\). They are independent of the Brownian
motions, the jump-size sequences and the Poisson mechanisms. Conditionally on
\(C_1,\ldots,C_n\), the trajectories
\(Y^{(1)},\ldots,Y^{(n)}\) are independent.
\end{assumption}

\begin{assumption}
\label{ass:support}
The common density \(f_C\) of the random effects is supported on \([0,1]\),
and
$
\mathbb P(C_i>0)=1.
$
Moreover, \(f_C\in C^2([0,1])\). In particular, each \(C_i\) is bounded and
integrable.
\end{assumption}
\begin{rema}\label{rem:support-normalization}
The normalization of the support to \([0,1]\) is not restrictive for the
Bernstein step. If the original random effect is supported on a compact interval
\([\alpha,\beta]\subset\mathbb R_{++}\), one may apply the affine transformation
$
U_i=\frac{C_i-\alpha}{\beta-\alpha}\in[0,1].
$
If the support is \(\mathbb R_{++}\), one may instead use the monotone
transformation
$
U_i=\frac{C_i}{1+C_i}\in(0,1).
$
The Bernstein estimator may then be applied to the density of the transformed
variable on \([0,1]\), and the density of the original random effect is recovered
by the corresponding change of variables.
\end{rema}

Since the continuous martingale part of $Y^{(i)}$ (driven by $W^{(i)}$) has no jumps, $Y^{(i)}$ jumps exactly when $J^{(i)}$ does, with $\Delta Y_t^{(i)}=\Delta J_t^{(i)}$. We denote by
$
\mu^{Y^{(i)}}(\dd t,\dd z):=\sum_{k\geq1}\epsilon_{(T_k^{(i)},Z_k^{(i)})}(\dd t,\dd z)
$
the associated jump measure on $\real_+\times \real_{++}$, where $\epsilon_{(t,z)}$ denotes the Dirac measure at $(t,z)$ and $(T_k^{(i)})_{k\geq1}$ are the successive jump times of $N^{(i)}$; equivalently, 
$\mu^{Y^{(i)}}(\dd t,\dd z)=\sum_{s>0}\mathds 1_{\{\Delta Y_s^{(i)}\neq0\}}\epsilon_{(s,\Delta Y_s^{(i)})}(\dd t,\dd z)$. Conditionally on $C_i$, $\mu^{Y^{(i)}}$ has compensator $C_iq_\lambda(z)\dd z\,\dd t$.
For each trajectory, define the observed jump count and cumulative jump-size by
\begin{equation}
N_T^{(i)}:=\int_0^T\int_0^\infty \mu^{Y^{(i)}}(\dd t,\dd z),
\quad
J_T^{(i)}:=\int_0^T\int_0^\infty z\,\mu^{Y^{(i)}}(\dd t,\dd z).
\label{eq:individual-count-size}
\end{equation}

$\Delta Y_t^{(i)}=\Delta J_t^{(i)}\in \mathbb R_{++}$ at jump times, both $N_T^{(i)}$ and $J_T^{(i)}$ are measurable functions of the continuously observed trajectory.

\subsection{Maximum likelihood estimation of the individual random intensities}
In this section, for each fixed $i=1,\ldots,n$, we construct the MLE of
the individual jump intensity $C_i$ from the continuous observation
$(Y_t^{(i)})_{t\in[0,T]}$.The common structural parameters $(a,b,\sigma,\lambda)$ are regarded as
fixed. Conditional on $C_i$, the L\'evy measure of the jump component of
the $i$-th trajectory is
$
    m_i(dz\mid C_i)
    =
    C_i q_\lambda(z)\,dz.
$
Since $q_\lambda$ is a common probability density, the total mass of this
conditional L\'evy measure is
$
    \int_0^\infty m_i(dz\mid C_i)
    =
    C_i\int_0^\infty q_\lambda(z)\,dz
    =
    C_i.
$
Thus, $C_i$ enters the model only through the total mass of the conditional
L\'evy measure, whereas the conditional distribution of the jump-sizes,
given that a jump occurs, remains $q_\lambda$. Consequently, the likelihood
component depending on $C_i$ is entirely determined by the observed jump
count $N_T^{(i)}$.
For the likelihood calculation, we condition on $C_i=c$ and regard $c$ as
an unknown deterministic parameter. The corresponding conditional L\'evy
measure is therefore
\begin{equation}
    m_c(dz)
    :=
    m_i(dz\mid C_i=c)
    =
    c q_\lambda(z)\,dz.
    \label{eq:condition-m}
\end{equation}
For $c\in\real_{++}$, let $\mathbb{P}_c^{(i)}$ denote the probability
measure induced by the BAJD trajectory $Y^{(i)}$ when its individual jump
intensity is equal to $c$, with all common structural parameters kept
fixed. Let $\mathbb{P}_{c,T}^{(i)}$ denote the restriction of this measure
to the observations

time $T$. The following proposition gives the
likelihood ratio between two possible values of the individual jump
intensity.

\begin{prop}
\seclabel{prop:likelihood-ratio-Ci}

Fix $i\in\{1,\ldots,n\}$. Let $a\in\real_{++}$, $b\in\real$,
$\sigma\in\real_{++}$, and $\lambda\in\real_{++}$, and let $q_\lambda$ be
a probability density on $\real_{++}$ satisfying
\eqref{eq:q-first-moment}. For each $c\in\real_{++}$, denote by
$\mathbb{P}_c^{(i)}$ the law of the unique strong solution of
\eqref{BAJD-Random} when the individual random effect is fixed at $C_i=c$,
so that its conditional jump L\'evy measure is $m_c$ as defined in
\eqref{eq:condition-m}. Denote by $\mathbb{P}_{c,T}^{(i)}$ the restriction
of this law to the $\sigma$-algebra generated by
$(Y_t^{(i)})_{0\leq t\leq T}$.
Then, for every $c,\widetilde c\in\real_{++}$ and $T\in\real_{++}$, the
measures $\mathbb{P}_{c,T}^{(i)}$ and
$\mathbb{P}_{\widetilde c,T}^{(i)}$ are mutually absolutely continuous,
and
\begin{equation}
    \log\left(
        \frac{\dd\mathbb{P}_{c,T}^{(i)}}
        {\dd\mathbb{P}_{\widetilde c,T}^{(i)}}
        \bigl(Y^{(i)}\bigr)
    \right)
    =
    N_T^{(i)}
    \log\left(\frac{c}{\widetilde c}\right)
    -
    (c-\widetilde c)T,
    \label{eq:likelihood-ratio-Ci}
\end{equation}
$\mathbb{P}_{\widetilde c,T}^{(i)}$-almost surely.
\end{prop}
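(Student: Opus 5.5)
We will use the Girsanov-type theorem for semimartingales in Jacod and Shiryaev \cite[Theorem III.5.34]{JS2003} (or \cite[Theorem III.3.24]{JS2003}), as in the fixed-effect BAJD likelihood analyses of \cite{BBKP2018}. Under $\mathbb{P}_c^{(i)}$, the process $Y^{(i)}$ is a semimartingale whose characteristics relative to the identity truncation (possible because of finite activity and \eqref{eq:q-first-moment}, cf.\ \eqref{eq:grigelionis-form}) are:
\[
B_t^{(c)}=\int_0^t\bigl(a+c\mu_1(\lambda)-bY_u^{(i)}\bigr)\dd u,
\qquad
C_t=\sigma^2\int_0^tY_u^{(i)}\dd u,
\qquad
\nu^{(c)}(\dd t,\dd z)=c\,q_\lambda(z)\dd z\,\dd t .
\]
First, we note that the continuous parts coincide for $c$ and $\widetilde c$. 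Second, the jump compensators are related by $\nu^{(c)}=V\cdot\nu^{(\widetilde c)}$ with the deterministic Radon--Nikodym density $V(z)\equiv c/\widetilde c$. Third, the drift difference must be absorbed by the jumps. Indeed,
\[
B^{(c)}_t-B^{(\widetilde c)}_t=(c-\widetilde c)\mu_1(\lambda)t=\int_0^t\int_0^\infty z\bigl(V(z)-1\bigr)\widetilde c\,q_\lambda(z)\dd z\dd u ,
\]
so the Girsanov drift $\beta$ for the continuous martingale part can be taken to be zero. Well-posedness of the martingale problem is needed for the local uniqueness hypothesis. It follows from pathwise uniqueness of strong solutions of \eqref{BAJD-Random}, because a compound Poisson jump part superimposed on the CIR equation has unique strong solutions. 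We will cite this from \cite{BBKP2018} or \cite{FJKR2023}.

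The next step is to check the integrability condition on the Hellinger-type process:
\[
\int_0^T\int_0^\infty\bigl(1-\sqrt{V(z)}\bigr)^2\widetilde c\,q_\lambda(z)\dd z\dd t=\bigl(\sqrt c-\sqrt{\widetilde c}\bigr)^2T<\infty ,
\]
which holds deterministically. Jacod and Shiryaev then yield $\mathbb{P}_{c,T}^{(i)}\ll\mathbb{P}_{\widetilde c,T}^{(i)}$. By symmetry in $(c,\widetilde c)$ we get mutual absolute continuity. The density is the Doléans-Dade exponential
\[
\frac{\dd\mathbb{P}_{c,T}^{(i)}}{\dd\mathbb{P}_{\widetilde c,T}^{(i)}}=\exp\Bigl\{\int_0^T\!\!\int_0^\infty\log V(z)\,\mu^{Y^{(i)}}(\dd t,\dd z)-\int_0^T\!\!\int_0^\infty\bigl(V(z)-1\bigr)\widetilde c\,q_\lambda(z)\dd z\dd t\Bigr\}.
\]
This form is valid because the jumps have finite activity, so the stochastic integral against $\mu^{Y^{(i)}}-\nu$ may be split into two separate finite terms. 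Substituting $V=c/\widetilde c$ and using \eqref{eq:individual-count-size}, the first integral equals $N_T^{(i)}\log(c/\widetilde c)$ and the second equals $(c/\widetilde c-1)\widetilde cT=(c-\widetilde c)T$. This gives \eqref{eq:likelihood-ratio-Ci}.

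As an independent sanity check, we could compute the ratio by hand. Conditioning on the jump path $(T_k^{(i)},Z_k^{(i)})$ and $W^{(i)}$, the map from $(W^{(i)},J^{(i)})$ to $Y^{(i)}$ is the same for both parameters. Hence the ratio of path laws equals the ratio of laws of the marked point process, which is the classical Poisson likelihood $(c/\widetilde c)^{N_T}e^{-(c-\widetilde c)T}$; the mark densities $q_\lambda$ cancel. The main obstacle is rigor in this transfer. The observed $\sigma$-algebra generated by $Y^{(i)}$ must contain the jump measure, which holds because $\Delta Y^{(i)}=\Delta J^{(i)}$. One must also verify that the ratio computed on the larger space of driving noises projects correctly onto the observed $\sigma$-algebra. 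The ratio is $\sigma(Y^{(i)})$-measurable, so conditional expectation leaves it unchanged. For the main proof we therefore rely on the semimartingale Girsanov route, where this issue is handled by the local uniqueness hypothesis.
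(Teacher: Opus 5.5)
Your proof is correct, but your main route differs from the paper's; what you relegate to a ``sanity check'' is essentially the paper's proof.

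\textbf{The paper's route.} It writes down the classical likelihood ratio of two Poisson random measures with compensators $\widetilde c\,q_\lambda(z)\dd z\,\dd t$ and $c\,q_\lambda(z)\dd z\,\dd t$. It then notes that $W^{(i)}$ and $\mu^{Y^{(i)}}$ are independent, with the same Wiener law, under both parameters. By pathwise uniqueness, it transfers the ratio from the driving noises to the law of $Y^{(i)}$. It uses that $\mu^{Y^{(i)}}$ is a functional of the observed path, since every jump of $Y^{(i)}$ is a jump of $J^{(i)}$. Finally, it checks $\mathbb E_{\widetilde c}^{(i)}[L_{i,T}]=1$ through the Poisson generating function and obtains equivalence from strict positivity of the density.

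\textbf{Your route.} You invoke the Jacod--Shiryaev absolute-continuity theorem for semimartingales, with the density given as a Dol\'eans-Dade exponential. You identify the characteristics, note that $C$ is unchanged and $\nu^{(c)}=(c/\widetilde c)\,\nu^{(\widetilde c)}$, and check that the drift difference is absorbed by the jumps, so $\beta=0$. You take local uniqueness from pathwise uniqueness and verify the deterministic Hellinger bound $(\sqrt c-\sqrt{\widetilde c})^2T$. The exponential then collapses to the same expression.

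\textbf{What each buys.} Your route is heavier: it needs the canonical space and the local-uniqueness hypothesis. There is also a small technical point: Jacod--Shiryaev truncation functions are bounded. You should either use a bounded $h$, in which case the compatibility identity reads $B^{(c)}-B^{(\widetilde c)}=h(z)(V-1)*\nu^{(\widetilde c)}$ and the argument is unchanged, or justify the identity truncation via the special-semimartingale version. In return, it gives absolute continuity and the density in one stroke, with no separate normalization check. It is also the template of the fixed-effect BAJD literature and extends directly to changes in $a$ or $b$ through a nonzero $\beta$.

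The paper's route is elementary and self-contained, but it relies specifically on the fact that only the jump intensity changes, so the Brownian component is untouched. Its passage from the noise space to $\sigma\bigl(Y^{(i)}_t,\,t\leq T\bigr)$ rests on exactly the point you state explicitly: a density that is measurable with respect to a sub-$\sigma$-algebra is also the density of the restricted laws. The paper leaves this step implicit.
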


\begin{proof}[\bf Proof of Proposition~\ref{prop:likelihood-ratio-Ci}]
Fix \(c,\widetilde c \in \real_{++}\) and use \(\mathbb P_{\widetilde c}^{(i)}\) as the
reference law.
Under this law, the jump measure $\mu^{Y^{(i)}}$ is a Poisson random
measure on $[0,T]\times\real_{++}$ with compensator
$
    \nu_{\widetilde c}^{(i)}(\dd t,\dd z)
    =
    \widetilde c\,q_\lambda(z)\dd z\,\dd t.
$
Its desired intensity under \(\mathbb P_c^{(i)}\) is
$
\nu_c^{(i)}(\dd t,\dd z)
=c q_\lambda(z)\dd z\,\dd t
=\frac{c}{\widetilde c}\,
  \nu_{\widetilde c}^{(i)}(\dd t,\dd z).
$
The likelihood ratio for these two finite-intensity Poisson random measures is
therefore

\begin{align}
L_{i,T}(c,\widetilde c)
&=
\exp\left\{
\int_0^T\int_0^\infty
\log\left(\frac{c}{\widetilde c}\right)
\mu^{Y^{(i)}}(\dd t,\dd z)
-\int_0^T\int_0^\infty
\bigl(\nu_c^{(i)}-\nu_{\widetilde c}^{(i)}\bigr)(\dd t,\dd z)
\right\}
\nonumber\\
&=
\exp\left\{
N_T^{(i)}\log\left(\frac{c}{\widetilde c}\right)
-(c-\widetilde c)T
\right\},
\label{eq:likelihood-ratio-Ci-intermediate}
\end{align}
where we used \eqref{eq:individual-count-size} and
\(\int_0^\infty q_\lambda(z)\dd z=1\).

This change of measure alters only the intensity of the Poisson jump
measure. The initial condition, the Brownian law, the coefficients
\((a,b,\sigma)\), and the conditional jump-size density \(q_\lambda\) are the
same under both measures. By strong existence and pathwise uniqueness, the
solution is a measurable functional of the common initial condition, the
Brownian motion and the jump measure. 
Since  $W^{(i)}$ and $\mu^{Y^{(i)}}$ are independent under both $\mathbb{P}^{(i)}_{c,T}$ and $\mathbb{P}^{(i)}_{\tilde c,T}$, and the law of $W^{(i)}$ is the same under both measures ($a,b,\sigma$ unchanged), the likelihood ratio of the joint law $(W^{(i)},\mu^{Y^{(i)}})$, and hence, by pathwise uniqueness, of $Y^{(i)}$ coincides with the likelihood ratio of $\mu^{Y^{(i)}}$ alone.
Consequently,
\eqref{eq:likelihood-ratio-Ci-intermediate} is also the likelihood ratio of the
induced trajectory laws. Conversely, every jump of \(Y^{(i)}\) is positive and
comes from its compound Poisson component, so \(\mu^{Y^{(i)}}\), and hence the
displayed likelihood, is measurable with respect to the continuously observed
trajectory. This proves \eqref{eq:likelihood-ratio-Ci}.

It remains to verify that the density is normalized. Under the reference
law $\mathbb P_{\widetilde c,T}^{(i)}$ introduced above,
$
    N_T^{(i)}
    \sim
    \operatorname{Poisson}(\widetilde cT).
$
Let $\mathbb E_{\widetilde c}^{(i)}$ denote expectation under
$\mathbb P_{\widetilde c,T}^{(i)}$. The probability-generating function
of the Poisson distribution gives
$
    \mathbb E_{\widetilde c}^{(i)}
    \left[
        L_{i,T}(c,\widetilde c)
    \right]
    =
    e^{-(c-\widetilde c)T}
    \mathbb E_{\widetilde c}^{(i)}
    \left[
        \left(
            \frac{c}{\widetilde c}
        \right)^{N_T^{(i)}}
    \right]
    =
    e^{-(c-\widetilde c)T}
    \exp\left\{
        \widetilde cT
        \left(
            \frac{c}{\widetilde c}-1
        \right)
    \right\}=1.
$
The density is strictly positive. Interchanging \(c\) and \(\widetilde c\)
gives its reciprocal and proves mutual absolute continuity.
\end{proof}

Next, fixing $\widetilde c\in\real_{++}$ as a reference value, define the
log-likelihood contrast
$
\Lambda_{i,T}(c,\widetilde c)
:=
N_T^{(i)}\log\left(\frac{c}{\widetilde c}\right)
-(c-\widetilde c)T.
$
The terms involving $\widetilde c$ do not depend on $c$, so maximizing
$\Lambda_{i,T}(c,\widetilde c)$ with respect to $c$ is equivalent to
maximizing
$
\ell_{i,T}^J(c)
=
N_T^{(i)}\log c-cT,
\; c\in \mathbb R_{++}.
$
In particular, the maximizer is independent of the arbitrary reference
parameter $\widetilde c$, exactly as in the usual likelihood-ratio
construction.
Because the model parameter space is $\real_{++}$, a minor boundary issue
arises on the event $\{N_T^{(i)}=0\}$. It is convenient to consider the
extended likelihood on $\Rplus$, with the conventions
$0\log 0:=0$ and $n\log 0:=-\infty$ for $n\geq1$.

In what follows, we establish the form of the MLE of the individual jump
intensity. For each \(i\in\{1,\ldots,n\}\), define the first jump time of
\(N^{(i)}\) by
$
\tau_1^{(i)}
:=
\inf\left\{
t>0:N_t^{(i)}>0
\right\},
$
with the convention \(\inf\varnothing=\infty\). This notation will be used to
distinguish whether at least one jump has been observed by time \(T\).

\begin{prop}
\seclabel{prop:mle-Ci}
Fix $i\in\{1,\ldots,n\}$ and $T\in\real_{++}$. For every
$c\in\real_{++}$, under $\mathbb P_c^{(i)}$,
$
    N_T^{(i)}
    \sim
    \operatorname{Poisson}(cT).
$
Define the extended conditional log-likelihood based on the observed jump
count by
\begin{equation*}
    \overline{\ell}_{i,T}^J(c)
    :=
    \begin{cases}
        N_T^{(i)}\log(cT)
        -
        cT
        -
        \log\bigl(N_T^{(i)}!\bigr),
        & c>0,\\[1mm]
        0,
        & c=0 \text{ and } N_T^{(i)}=0,\\
        -\infty,
        & c=0 \text{ and } N_T^{(i)}>0.
    \end{cases}
    \label{eq:extended-individual-c-loglik}
\end{equation*}
Then $\overline{\ell}_{i,T}^J$ admits a unique maximizer on $\Rplus$,
given by
\begin{equation}
    \widehat C_{i,T}
    =
    \frac{N_T^{(i)}}{T}.
    \label{eq:Ci-hat}
\end{equation}

On the event
$
    \left\{T\geq\tau_1^{(i)}\right\}
    =
    \left\{N_T^{(i)}>0\right\},
$
one has $\widehat C_{i,T}\in\real_{++}$, and
$\widehat C_{i,T}$ coincides with the ordinary MLE over
$\real_{++}$. Moreover, for every $c\in\real_{++}$,
$
    \mathbb P_c^{(i)}
    \left(
        \tau_1^{(i)}<\infty
    \right)
    =
    1.
$
On the complementary event
$
    \left\{T<\tau_1^{(i)}\right\}
    =
    \left\{N_T^{(i)}=0\right\},
$
the log-likelihood over $\real_{++}$ has no attained maximizer, whereas
the extended log-likelihood on $\Rplus$ has the unique maximizer $0$.
If the parameter space is restricted to $[0,1]$, the extended constrained MLE is
$
    \widehat C_{i,T}^{\mathcal C}
    =
    \min\left\{
        \frac{N_T^{(i)}}{T},\,1
    \right\}.
$
\end{prop}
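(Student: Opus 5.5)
First I will establish the distributional claim. Under $\mathbb P_c^{(i)}$ the jump measure $\mu^{Y^{(i)}}$ is a Poisson random measure with compensator $c\,q_\lambda(z)\dd z\,\dd t$, as already used in the proof of Proposition~\ref{prop:likelihood-ratio-Ci}. Since $\int_0^\infty q_\lambda(z)\dd z=1$, the definition \eqref{eq:individual-count-size} expresses $N_T^{(i)}=\mu^{Y^{(i)}}([0,T]\times\real_{++})$ as the mass of a set of intensity measure $cT$. Hence $N_T^{(i)}\sim\operatorname{Poisson}(cT)$. Equivalently, $N^{(i)}$ is the Poisson process of intensity $c$ in the construction of $J^{(i)}$, and its jumps coincide with those of $Y^{(i)}$.

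Next I will maximize $\overline\ell_{i,T}^J$ by splitting on the value of $N:=N_T^{(i)}$.

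If $N\geq1$, then on $\real_{++}$ the function $g(c)=N\log(cT)-cT-\log N!$ is strictly concave, since $g''(c)=-N/c^2<0$. Its derivative $g'(c)=N/c-T$ vanishes only at $c=N/T$. Moreover $g(c)\to-\infty$ both as $c\downarrow0$ and as $c\to\infty$. The value at $c=0$ is $-\infty$ by convention. So $N/T$ is the unique maximizer on $\Rplus$, it lies in $\real_{++}$, and it is the ordinary interior MLE over $\real_{++}$.

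If $N=0$, then $\overline\ell(c)=-cT$ for $c>0$ and $\overline\ell(0)=0$. This function is strictly decreasing on $\Rplus$, so its unique maximizer is $0=N/T$, while the supremum over $\real_{++}$ is not attained.

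The identity $\{T\geq\tau_1^{(i)}\}=\{N_T^{(i)}>0\}$ follows from the right-continuity of the counting path and the definition of $\tau_1^{(i)}$ as an infimum. Indeed, if $N_T>0$ then $\tau_1\leq T$. Conversely, if $\tau_1\leq T$, right-continuity gives $N_{\tau_1}\geq1$, and monotonicity then gives $N_T\geq1$. For the almost sure finiteness, $\mathbb P_c^{(i)}(\tau_1^{(i)}>t)=\mathbb P_c^{(i)}(N_t^{(i)}=0)=e^{-ct}$, which tends to $0$ as $t\to\infty$ because $c>0$.

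Finally, for the parameter space $[0,1]$ I will use that $\overline\ell$ is unimodal. It is strictly increasing on $(0,N/T]$ when $N\geq1$ and strictly decreasing on $[N/T,\infty)$. The constrained maximizer is therefore $N/T$ when $N/T\leq1$. When $N/T>1$ it is the endpoint $1$, since $\overline\ell$ is strictly increasing on $(0,1]$ in that case. Together this gives $\min\{N/T,1\}$, with the case $N=0$ covered by the previous paragraph.

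No step is a real obstacle. The only point needing care is the boundary convention at $c=0$: the extended function must be treated correctly so that uniqueness holds on $\Rplus$ in both cases, and the case $N\ge1$ with value $-\infty$ at $0$ must not be confused with the case $N=0$.
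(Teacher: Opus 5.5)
Your proposal is correct and follows the paper's proof step by step. The steps match: strict concavity with the unique critical point $N_T^{(i)}/T$ when $N_T^{(i)}\geq1$, strict monotonicity of $-cT$ together with the convention $\overline{\ell}_{i,T}^J(0)=0$ when $N_T^{(i)}=0$, the survival function $e^{-ct}$ of $\tau_1^{(i)}$, and monotonicity on $(0,1]$ for the constrained case. You also justify two points the paper only asserts: the Poisson law of $N_T^{(i)}$ as the mass of the jump measure on $[0,T]\times\real_{++}$, and the right-continuity argument for $\{T\geq\tau_1^{(i)}\}=\{N_T^{(i)}>0\}$.
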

\begin{proof}[\bf Proof of Proposition~\ref{prop:mle-Ci}]
Fix $i\in\{1,\ldots,n\}$ and $T\in\real_{++}$. Under
$\mathbb P_c^{(i)}$,
$
    N_T^{(i)}
    \sim
    \operatorname{Poisson}(cT).
$
Hence, for every $r\in\mathbb N$,
$
    \mathbb P_c^{(i)}
    \bigl(N_T^{(i)}=r\bigr)
    =
    e^{-cT}\frac{(cT)^r}{r!}.
$
Therefore, for $c\in\real_{++}$, the conditional log-likelihood based on
the observed jump count is
$
    \ell_{i,T}^J(c)
    =
    N_T^{(i)}\log(cT)
    -
    cT
    -
    \log\bigl(N_T^{(i)}!\bigr),
$
which agrees with the first case in the definition of
$\overline{\ell}_{i,T}^J$.
First, suppose that
$
    T\geq\tau_1^{(i)},
$
or, equivalently, that $N_T^{(i)}>0$. For every
$c\in\real_{++}$,
$
    \frac{\partial}{\partial c}
    \ell_{i,T}^J(c)
    =
    \frac{N_T^{(i)}}{c}-T
$
and
$
    \frac{\partial^2}{\partial c^2}
    \ell_{i,T}^J(c)
    =
    -\frac{N_T^{(i)}}{c^2}
    <
    0.
$
Thus, $\ell_{i,T}^J$ is strictly concave on $\real_{++}$. Its unique
critical point is determined by
$
    \frac{N_T^{(i)}}{c}-T=0,
$
and is therefore
$
    c
    =
    \frac{N_T^{(i)}}{T}.
$
Moreover,
$
    \ell_{i,T}^J(c)
    \xrightarrow[c\downarrow0]{}
    -\infty
    \;\text{and}\;
    \ell_{i,T}^J(c)
    \xrightarrow[c\to\infty]{}
    -\infty.
$
Hence,
$
    \widehat C_{i,T}
    =
    \frac{N_T^{(i)}}{T}
$
is the unique global maximizer on $\real_{++}$ and also the unique
maximizer of the extended log-likelihood on $\Rplus$.

Now suppose that
$
    T<\tau_1^{(i)},
$
or, equivalently, that $N_T^{(i)}=0$. Then
$
    \ell_{i,T}^J(c)
    =
    -cT,
    \qquad
    c\in\real_{++}.
$
This function is strictly decreasing, and
$
    \sup_{c>0}\ell_{i,T}^J(c)
    =
    0,
$
but this supremum is not attained on $\real_{++}$. By definition of the
extended log-likelihood,
$
    \overline{\ell}_{i,T}^J(0)
    =
    0.
$
Therefore, $c=0$ is the unique maximizer of
$\overline{\ell}_{i,T}^J$ on $\Rplus$. It follows that
$
    \widehat C_{i,T}
    =
    \frac{N_T^{(i)}}{T}
$
in both cases.

Furthermore, under $\mathbb P_c^{(i)}$, the first jump time has survival
function
$
    \mathbb P_c^{(i)}
    \left(
        \tau_1^{(i)}>t
    \right)
    =
    \mathbb P_c^{(i)}
    \left(
        N_t^{(i)}=0
    \right)
    =
    e^{-ct},
    \;
    t\geq0.
$
Since $c>0$, letting $t\to\infty$ gives
$
    \mathbb P_c^{(i)}
    \left(
        \tau_1^{(i)}=\infty
    \right)
    =
    0.
$
Consequently,
$
    \mathbb P_c^{(i)}
    \left(
        \tau_1^{(i)}<\infty
    \right)
    =
    1.
$
It remains to consider the constrained parameter space $[0,1]$. If
$
    0
    \leq
    \frac{N_T^{(i)}}{T}
    \leq
    1,
$
the unconstrained extended maximizer already belongs to $[0,1]$ and
therefore remains the unique constrained maximizer.

If
$
    \frac{N_T^{(i)}}{T}>1,
$
then, for every $c\in(0,1]$,
$
    \frac{\partial}{\partial c}
    \ell_{i,T}^J(c)
    =
    \frac{N_T^{(i)}}{c}-T
    \geq
    N_T^{(i)}-T
    >
    0.
$
Thus, the log-likelihood is strictly increasing on $(0,1]$, and its
unique constrained maximizer is the endpoint $c=1$. Combining the two
cases yields
$
    \widehat C_{i,T}^{\mathcal C}
    =
    \min\left\{
        \frac{N_T^{(i)}}{T},\,1
    \right\}.
$
This completes the proof.
\end{proof}
The following lemma establishes the stable limit theorem for a Poisson
process with random intensity that is used in the proof of
Theorem~\ref{consistency-normality}. Its proof is deferred to
Appendix~\ref{app:proofs-auxiliary-lemmas}.
Recall that,
under the BAJD-Random model, conditionally on $C_i$, the counting process
$N^{(i)}=(N_t^{(i)})_{t\geq0}$ is a Poisson process with intensity $C_i$.

\begin{lemm}
\seclabel{lem:stable-poisson-random-intensity}

Assume Assumptions~\ref{ass:population} and~\ref{ass:support}. Fix
$i\in\{1,\ldots,n\}$ and set
$\mathcal G_i:=\sigma(C_i).$
Then, stably with respect to $\mathcal G_i$,
\begin{equation}
    \frac{N_T^{(i)}-C_iT}{\sqrt T}
    \xrightarrow[T\to\infty]{\mathcal D,\mathrm{st}}
    \sqrt{C_i}\,Z_i,
    \label{eq:stable-poisson-random-intensity}
\end{equation}
where $Z_i\sim\mathcal N(0,1)$ is defined on an extension of the original
probability space and is independent of $\mathcal G_i$.
\end{lemm}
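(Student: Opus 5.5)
I will prove the stable convergence by establishing, for every bounded $\mathcal G_i$-measurable random variable $H$ and every $u\in\mathbb R$,
\[
\E\Bigl[H\exp\Bigl\{iu\,\tfrac{N_T^{(i)}-C_iT}{\sqrt T}\Bigr\}\Bigr]\xrightarrow[T\to\infty]{}\E\Bigl[H\exp\bigl\{-\tfrac{u^2}{2}C_i\bigr\}\Bigr].
\]
The right-hand side equals $\E[H\exp\{iu\sqrt{C_i}Z_i\}]$ when $Z_i\sim\mathcal N(0,1)$ is independent of $\mathcal G_i$ on a product extension. Since $\mathcal G_i=\sigma(C_i)$, it suffices, by the standard characterization of stable convergence (e.g.\ Jacod--Shiryaev, or the formulation recalled in Appendix~\ref{app:martingale-gclt}), to take $H=h(C_i)$ with $h$ bounded and measurable.

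The key step is to condition on $C_i$. By Assumption~\ref{ass:population}, conditionally on $C_i=c$ the process $N^{(i)}$ is a Poisson process with intensity $c$. The Poisson characteristic function therefore gives the explicit formula
\[
\phi_T(c,u):=\E\Bigl[e^{iu(N_T^{(i)}-cT)/\sqrt T}\,\Big|\,C_i=c\Bigr]=\exp\Bigl\{cT\bigl(e^{iu/\sqrt T}-1-iu/\sqrt T\bigr)\Bigr\}.
\]
A second-order Taylor expansion gives $T(e^{iu/\sqrt T}-1-iu/\sqrt T)=-u^2/2+O(|u|^3T^{-1/2})$. Hence $\phi_T(c,u)\to e^{-cu^2/2}$ for every $c\in[0,1]$, and the convergence is uniform in $c\in[0,1]$ because $C_i$ is supported on $[0,1]$ by Assumption~\ref{ass:support}. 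Even without uniformity, we have $|\phi_T|\le 1$, since it is a characteristic function. The tower property then gives
\[
\E\bigl[h(C_i)e^{iu(N_T^{(i)}-C_iT)/\sqrt T}\bigr]=\E\bigl[h(C_i)\phi_T(C_i,u)\bigr]\to\E\bigl[h(C_i)e^{-C_iu^2/2}\bigr]
\]
by dominated convergence.

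The main technical point is justifying the disintegration. We need $\E[\,\cdot\mid C_i]$ to be given by evaluating the Poisson($cT$) law at $c=C_i$. This follows from the construction: $N^{(i)}$ can be realized as a unit-rate Poisson process $\widetilde N$ independent of $C_i$, time-changed as $N_t^{(i)}=\widetilde N_{C_it}$. The independence lemma (freezing) then applies.

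Finally, I will identify the limit. The function $u\mapsto\E[h(C_i)e^{-C_iu^2/2}]$ is the joint transform of $(H,\sqrt{C_i}Z_i)$ with $Z_i$ independent of $\mathcal G_i$. This yields $\mathcal G_i$-stable convergence to $\sqrt{C_i}Z_i$. If stability with respect to the whole $\sigma$-field $\mathcal F$ were desired, the same argument would apply after conditioning on the larger $\sigma$-field of the other individuals, which is independent of $(C_i,N^{(i)})$. For the statement as posed, however, $\mathcal G_i$ suffices.
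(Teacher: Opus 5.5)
Your proof is correct and follows the paper's argument essentially step for step. You condition on $C_i$, use the Poisson characteristic function $\exp\{C_iT(e^{iu/\sqrt T}-1-iu/\sqrt T)\}$, pass to the limit by dominated convergence with the modulus bound $\le 1$, and identify $\E[He^{-u^2C_i/2}]=\E[He^{iu\sqrt{C_i}Z_i}]$ through the characteristic-function criterion for $\mathcal G_i$-stable convergence; your time-change justification of the disintegration is a detail the paper leaves implicit. Only your closing aside is inaccurate: conditioning on the other individuals cannot give stability with respect to all of $\mathcal F$, since $\mathcal F$ contains $\sigma(N^{(i)})$ itself, which would instead require approximating $H$ by $\mathcal F_s$-measurable variables and using independent increments, though this is not needed for the lemma.
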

The following theorem establishes the strong consistency and the mixed-normal
asymptotic behavior of the estimator \(\widehat C_{i,T}\).

\begin{Theo}
\seclabel{consistency-normality}
\label{thm:Ci-mle-an}

Under Assumptions~\ref{ass:population} and~\ref{ass:support}, for each
fixed $i\in\{1,\ldots,n\}$,
\begin{equation}
    \widehat C_{i,T}
    \xrightarrow[T\to\infty]{\mathrm{a.s.}}
    C_i.
    \label{eq:Ci-consistency}
\end{equation}
Moreover, stably with respect to $\sigma(C_i)$,
\begin{equation}
    \sqrt T\bigl(\widehat C_{i,T}-C_i\bigr)
    \xrightarrow[T\to\infty]{\mathcal D,\mathrm{st}}
    \sqrt{C_i}\,Z_i,
    \label{eq:Ci-mixed-normal}
\end{equation}
where $Z_i\sim\mathcal N(0,1)$ is defined on an extension of the original
probability space and is independent of $\sigma(C_i)$.
Furthermore, for every $c\in(0,1]$, under the law
$\mathbb P_c^{(i)}$ defined in
Proposition~\ref{prop:likelihood-ratio-Ci},
\begin{equation}
    \sqrt T\bigl(\widehat C_{i,T}-c\bigr)
    \xrightarrow[T\to\infty]{\mathcal D}
    \mathcal N(0,c).
    \label{eq:Ci-clt-conditional}
\end{equation}
\end{Theo}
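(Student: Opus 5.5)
The proof splits into three parts, one for each display in Theorem~\ref{consistency-normality}. In each part I condition on the random effect and use the fact that, given \(C_i=c\), \(N^{(i)}\) is a homogeneous Poisson process with intensity \(c\). By Proposition~\ref{prop:mle-Ci} we have \(\widehat C_{i,T}=N_T^{(i)}/T\), so every claim reduces to a limit theorem for a Poisson process with (possibly random) intensity.

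\emph{Strong consistency \eqref{eq:Ci-consistency}.}
\begin{enumerate}
\item Fix \(c\in(0,1]\) and work under \(\mathbb P_c^{(i)}\). Write \(N_T^{(i)}=\sum_{k=1}^{\lfloor T\rfloor}\bigl(N_k^{(i)}-N_{k-1}^{(i)}\bigr)+\bigl(N_T^{(i)}-N_{\lfloor T\rfloor}^{(i)}\bigr)\). The increments over unit intervals are i.i.d.\ Poisson\((c)\), so the strong law of large numbers gives \(N_{\lfloor T\rfloor}^{(i)}/\lfloor T\rfloor\to c\) almost surely.
\item By monotonicity, \(N_{\lfloor T\rfloor}^{(i)}\le N_T^{(i)}\le N_{\lfloor T\rfloor+1}^{(i)}\), and \(\lfloor T\rfloor/T\to1\). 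Together these give \(N_T^{(i)}/T\to c\), \(\mathbb P_c^{(i)}\)-a.s. One could instead use the renewal SLLN with the i.i.d.\ Exp\((c)\) interarrival times.
\item To pass to the random intensity, let \(A:=\{\omega: N_T^{(i)}/T\to C_i\}\). Under Assumption~\ref{ass:population}, the conditional law of \(N^{(i)}\) given \(C_i=c\) is that of a Poisson process with rate \(c\). Hence \(\mathbb P(A\mid C_i=c)=\mathbb P_c^{(i)}(N_T/T\to c)=1\) for \(f_C\)-a.e.\ \(c\).
\item Integrating against \(f_C\), which is legitimate since \(C_i\in(0,1]\) a.s.\ by Assumption~\ref{ass:support}, gives \(\mathbb P(A)=1\).
\item If needed, measurability of \(A\) follows from writing it through rational times and the monotonicity in step 2.
\end{enumerate}

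\emph{Stable mixed-normal limit \eqref{eq:Ci-mixed-normal}.} This follows at once from Lemma~\ref{lem:stable-poisson-random-intensity}, because \(\sqrt T(\widehat C_{i,T}-C_i)=(N_T^{(i)}-C_iT)/\sqrt T\) identically. No further argument is needed beyond this algebraic identity.

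\emph{Fixed-intensity CLT \eqref{eq:Ci-clt-conditional}.}
\begin{enumerate}
\item Under \(\mathbb P_c^{(i)}\), \(N_T^{(i)}\sim\operatorname{Poisson}(cT)\), so the characteristic function of \((N_T^{(i)}-cT)/\sqrt T\) is
\[\exp\bigl\{cT(e^{iu/\sqrt T}-1)-iuc\sqrt T\bigr\}.\]
\item A second-order Taylor expansion gives \(cT(e^{iu/\sqrt T}-1)=iuc\sqrt T-cu^2/2+O(T^{-1/2})\). The characteristic function therefore converges to \(e^{-cu^2/2}\), and Lévy's continuity theorem gives the \(\mathcal N(0,c)\) limit.
\item Alternatively, one can specialise the lemma to the degenerate law \(C_i\equiv c\).
\end{enumerate}

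The only genuinely delicate point is step 3 of the consistency argument: transferring the almost sure statement from the fixed-intensity laws \(\mathbb P_c^{(i)}\) to the mixture. This requires that the conditional law of \(N^{(i)}\) given \(C_i\) is exactly \(\mathbb P_c^{(i)}\). That is a regular-conditional-distribution argument, justified by the independence in Assumption~\ref{ass:population}. A concrete construction makes it explicit: realise \(N^{(i)}_t=\Pi(C_it)\), where \(\Pi\) is a unit-rate Poisson process independent of \(C_i\). Then the unit-rate SLLN \(\Pi(s)/s\to1\) a.s., combined with \(C_i>0\) a.s., gives consistency directly. I would adopt this time-change representation to avoid the conditioning subtleties altogether.
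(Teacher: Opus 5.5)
Your proposal is correct and follows the paper's proof essentially step for step: the conditional Poisson strong law plus the tower property for consistency, the identity $\sqrt T(\widehat C_{i,T}-C_i)=(N_T^{(i)}-C_iT)/\sqrt T$ combined with Lemma~\ref{lem:stable-poisson-random-intensity} for the stable limit, and the Poisson characteristic function with L\'evy's continuity theorem for the fixed-$c$ CLT. Your extra details, namely the unit-increment sandwich proof of the Poisson strong law and the time-change realisation $N_t^{(i)}=\Pi(C_it)$, are valid refinements. One small caveat: your aside about specialising the lemma to $C_i\equiv c$ is not literally covered by the lemma's hypotheses, which require a density $f_C$, although its proof goes through verbatim.
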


\begin{rema}
The stable central limit theorem in
Theorem~\ref{thm:Ci-mle-an} also yields a simple asymptotic confidence
interval for each jump intensity. Define
\[
    S_{i,T}
    :=
    \begin{cases}
        \displaystyle
        \frac{
            \sqrt T\bigl(\widehat C_{i,T}-C_i\bigr)
        }{
            \sqrt{\widehat C_{i,T}}
        },
        & \widehat C_{i,T}>0,\\
        0,
        & \widehat C_{i,T}=0.
    \end{cases}
\]
The ratio in the first case is well defined on the event
$
    \left\{T\geq\tau_1^{(i)}\right\}
    =
    \left\{N_T^{(i)}>0\right\}
    =
    \left\{\widehat C_{i,T}>0\right\}.
$
By Assumption~\ref{ass:support}, $C_i>0$ almost surely. Conditionally on
$C_i$, the first jump time $\tau_1^{(i)}$ is therefore finite almost
surely. Hence, almost surely, $S_{i,T}$ eventually coincides with the
ratio in the first case of its definition.
Since
$
    \widehat C_{i,T}
    \xrightarrow[T\to\infty]{\mathrm{a.s.}}
    C_i,
$
the stable version of Slutsky's theorem gives, stably with respect to
$\sigma(C_i)$,
$
    S_{i,T}
    \xrightarrow[T\to\infty]{\mathcal D,\mathrm{st}}
    Z_i,
    \;
    Z_i\sim\mathcal N(0,1).
$
Therefore, on
$\{T\geq\tau_1^{(i)}\}=\{N_T^{(i)}>0\}$, an asymptotic confidence interval
for $C_i$ with level $1-\alpha$ is
$
    \left[
        \widehat C_{i,T}
        -
        z_{1-\alpha/2}
        \sqrt{\frac{\widehat C_{i,T}}{T}},
        \;
        \widehat C_{i,T}
        +
        z_{1-\alpha/2}
        \sqrt{\frac{\widehat C_{i,T}}{T}}
    \right],
$
where $z_{1-\alpha/2}$ denotes the $(1-\alpha/2)$-quantile of the standard
normal distribution. More precisely, for every $c\in(0,1]$, under the law
$\mathbb P_c^{(i)}$,
$
    \lim_{T\to\infty}
    \mathbb P_c^{(i)}
    \left(
        c\in
        \left[
            \widehat C_{i,T}
            -
            z_{1-\alpha/2}
            \sqrt{\frac{\widehat C_{i,T}}{T}},
            \;
            \widehat C_{i,T}
            +
            z_{1-\alpha/2}
            \sqrt{\frac{\widehat C_{i,T}}{T}}
        \right]
    \right)
    =
    1-\alpha.
$
Under Assumption~\ref{ass:support}, since $C_i\in[0,1]$ almost surely, one
may report the support-truncated interval
$
    \left[
        \widehat C_{i,T}
        -
        z_{1-\alpha/2}
        \sqrt{\frac{\widehat C_{i,T}}{T}},
        \;
        \widehat C_{i,T}
        +
        z_{1-\alpha/2}
        \sqrt{\frac{\widehat C_{i,T}}{T}}
    \right]
    \cap[0,1].
$
Intersecting the interval with $[0,1]$ does not alter its asymptotic
coverage. This Wald-type interval is intended for large observation
horizons. For rare jumps or small observed counts, an exact Poisson
confidence interval may provide more reliable finite-sample coverage.

\end{rema}

\begin{proof}[\bf Proof of Theorem~\ref{consistency-normality}]

Fix $i\in\{1,\ldots,n\}$.
On the one hand, by Proposition~\ref{prop:mle-Ci},
$
    \widehat C_{i,T}
    =
    \frac{N_T^{(i)}}{T}.
$
Conditionally on $C_i$, the strong law of large numbers for Poisson
processes gives
$
    \mathbb P\left(
        \left.
        \lim_{T\to\infty}\frac{N_T^{(i)}}{T}
        =
        C_i
        \,\right|\,
        C_i
    \right)
    =
    1
    \;\text{a.s.}
$
Taking expectations and using the tower property, we obtain
$$
    \mathbb P\left(
        \lim_{T\to\infty}\frac{N_T^{(i)}}{T}
        =
        C_i
    \right)
    =
    \mathbb E\left[
        \mathbb P\left(
            \left.
            \lim_{T\to\infty}\frac{N_T^{(i)}}{T}
            =
            C_i
            \,\right|\,
            C_i
        \right)
    \right]
    =1.
$$
Therefore,
$
    \widehat C_{i,T}
    \xrightarrow[T\to\infty]{\mathrm{a.s.}}
    C_i,
$
which proves \eqref{eq:Ci-consistency}.
On the other hand, by \eqref{eq:Ci-hat},
$
    \sqrt T\bigl(\widehat C_{i,T}-C_i\bigr)
    =
    \frac{N_T^{(i)}-C_iT}{\sqrt T}.
$
Lemma~\ref{lem:stable-poisson-random-intensity} therefore yields, stably
with respect to $\sigma(C_i)$,
$
    \sqrt T\bigl(\widehat C_{i,T}-C_i\bigr)
    \xrightarrow[T\to\infty]{\mathcal D,\mathrm{st}}
    \sqrt{C_i}\,Z_i,
$
where $Z_i\sim\mathcal N(0,1)$ is independent of $\sigma(C_i)$. This
proves \eqref{eq:Ci-mixed-normal}.

Now, fix $c\in(0,1]$. Under the law $\mathbb P_c^{(i)}$ introduced in
Proposition~\ref{prop:likelihood-ratio-Ci},
$
    N_T^{(i)}
    \sim
    \operatorname{Poisson}(cT).
$
Let $\mathbb E_c^{(i)}$ denote expectation under
$\mathbb P_c^{(i)}$. Then, for every $u\in\mathbb R$,
\begin{align*}
    \mathbb E_c^{(i)}
    \left[
        \exp\left\{
            iu\sqrt T
            \bigl(\widehat C_{i,T}-c\bigr)
        \right\}
    \right]
    &=
    \exp\left\{
        cT\left(
            e^{iu/\sqrt T}
            -1
            -\frac{iu}{\sqrt T}
        \right)
    \right\}.
\end{align*}
Since
$
    e^{iu/\sqrt T}
    =
    1+\frac{iu}{\sqrt T}
    -\frac{u^2}{2T}
    +o(T^{-1}),
$
we have
$
    cT\left(
        e^{iu/\sqrt T}
        -1
        -\frac{iu}{\sqrt T}
    \right)
    \longrightarrow
    -\frac{cu^2}{2},\text{as } T\to\infty.
$
Consequently,
\[
    \mathbb E_c^{(i)}\left[
        \exp\left\{
            iu\sqrt T
            \bigl(\widehat C_{i,T}-c\bigr)
        \right\}
    \right]
    \longrightarrow
    \exp\left(-\frac{cu^2}{2}\right),
\; \text{as } T\to\infty. \]
The limiting function is the characteristic function of
$\mathcal N(0,c)$. Hence, by L\'evy's continuity theorem,
\[
    \sqrt T\bigl(\widehat C_{i,T}-c\bigr)
    \xrightarrow[T\to\infty]{\mathcal D}
    \mathcal N(0,c)
    \;\text{under }\mathbb P_c^{(i)}.
\]
This proves \eqref{eq:Ci-clt-conditional} and completes the proof.
\end{proof}
\subsection{Bernstein estimator of the random-effect density}
\label{sec:density-bernstein}
Under Assumption~\ref{ass:support}, the random effects are supported on
the canonical Bernstein interval $[0,1]$. Therefore, no preliminary
normalization is required. The Bernstein estimator can be applied directly
to the random intensities $C_i$ and, in the feasible construction, to
their truncated estimators $\widehat C_{i,T}^{\calC}$. If the true random effects $C_1,\ldots,C_n$ were observed, the ideal
Bernstein density estimator of order $m$ would be
\begin{equation}
	\widetilde f_{C,m,n}(c)
	=
	m\sum_{k=0}^{m-1}
	\left[
	F_{C,n}\!\left(\frac{k+1}{m}\right)
	-
	F_{C,n}\!\left(\frac{k}{m}\right)
	\right]
	p_k(m-1,c),
	\; c\in[0,1],
	\label{eq:ideal-bernstein}
\end{equation}
where
$
F_{C,n}(c)
=
\frac{1}{n}
\sum_{i=1}^n
\ind_{\{C_i\leq c\}},
\;
p_k(m,c)
=
\binom{m}{k}c^k(1-c)^{m-k}.
$
Since the random effects are unobserved, we replace $C_i$ by
$\widehat C_{i,T}^{\calC}$ and define the feasible Bernstein density
estimator by
\begin{equation}
	\widehat f_{C,m,n,T}(c)
	=
	m\sum_{k=0}^{m-1}
	\left[
	\widehat F_{C,n,T}\!\left(\frac{k+1}{m}\right)
	-
	\widehat F_{C,n,T}\!\left(\frac{k}{m}\right)
	\right]
	p_k(m-1,c),
	\; c\in[0,1],
	\label{eq:fC-bernstein-estimator}
\end{equation}
where
$
\widehat F_{C,n,T}(c)
=
\frac{1}{n}
\sum_{i=1}^n
\ind_{\{\widehat C_{i,T}^{\calC}\leq c\}},
\; c\in[0,1].
$
At a finite observation horizon, the projection
$\widehat C_{i,T}^{\mathcal C}=\min\{\widehat C_{i,T},1\}$ may create a
point mass at the upper endpoint $1$. This may induce a finite-horizon
boundary effect in the feasible Bernstein estimator. Observations
projected onto $1$ are included in the last subinterval of the Bernstein
partition, consistently with the semi-open partition used in
Subsection~\ref{subsec:numerical-design}. This phenomenon has no effect
on the asymptotic results: since $f_C$ is a density on $[0,1]$,
$\mathbb P(C_i=1)=0$, and the almost sure convergence
$\widehat C_{i,T}\to C_i$ implies that, for fixed $n$, the upper
truncation is eventually inactive almost surely.

In what follows, we study the asymptotic properties of the feasible
Bernstein density estimator under a sequential two-stage regime. In the
first stage, for fixed $n$ and $m$, the observation horizon $T$ tends to
infinity, so that the estimated random effects can be replaced
asymptotically by the true random effects. In the second stage,
$n\to\infty$ and $m=m_n\to\infty$, and the ideal Bernstein estimator is
used to recover the random-effect density. The following theorem
establishes the first-stage reduction.

\begin{Theo}
\seclabel{consistency-g-tilde}
\label{thm:first-stage-reduction}

Under Assumptions~\ref{ass:population} and~\ref{ass:support}, for every
fixed $n\geq1$,
\begin{equation}
    \max_{1\leq i\leq n}
    \left|
        \widehat C_{i,T}^{\calC}-C_i
    \right|
    \xrightarrow[T\to\infty]{\mathrm{a.s.}}
    0.
    \label{eq:max-first-stage}
\end{equation}
Moreover, on an event of probability one, for every continuity point
$u\in[0,1]$ of the empirical distribution function $F_{C,n}$,
\begin{equation}
    \widehat F_{C,n,T}(u)
    \xrightarrow[T\to\infty]{}
    F_{C,n}(u).
    \label{eq:cdf-first-stage}
\end{equation}
Finally, for every fixed $m,n\geq1$,
\begin{equation}
    \sup_{u\in[0,1]}
    \left|
        \widehat f_{C,m,n,T}(u)
        -
        \widetilde f_{C,m,n}(u)
    \right|
    \xrightarrow[T\to\infty]{\mathrm{a.s.}}
    0.
    \label{eq:bernstein-first-stage}
\end{equation}
In particular, for every $u\in[0,1]$,
$
    \widehat f_{C,m,n,T}(u)
    \xrightarrow[T\to\infty]{\mathrm{a.s.}}
    \widetilde f_{C,m,n}(u).
$
\end{Theo}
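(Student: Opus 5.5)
The argument has three steps, in the order of the three displays. We work throughout on a single event of probability one on which all the needed almost sure statements hold at once.

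\emph{Step 1: uniform consistency (\ref{eq:max-first-stage}).} By Theorem~\ref{thm:Ci-mle-an}, for each fixed $i$ there is a null set outside of which $\widehat C_{i,T}\to C_i$. Because $n$ is fixed, the union of these $n$ null sets is still null, and the maximum over $i$ of finitely many vanishing sequences vanishes. The projection $x\mapsto\min\{x,1\}$ is $1$-Lipschitz, and $C_i\le 1$ by Assumption~\ref{ass:support}, so $\min\{C_i,1\}=C_i$. Hence
\[
\bigl|\widehat C_{i,T}^{\calC}-C_i\bigr|\le\bigl|\widehat C_{i,T}-C_i\bigr|,
\]
which gives (\ref{eq:max-first-stage}).

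\emph{Step 2: pointwise convergence of the empirical cdf (\ref{eq:cdf-first-stage}).} Fix $\omega$ in the probability-one event of Step 1, and let $u$ be a continuity point of $F_{C,n}$. This means $u\neq C_i(\omega)$ for every $i$, so $\delta:=\min_i|u-C_i(\omega)|>0$. Choose $T_0$ such that $\max_i|\widehat C_{i,T}^{\calC}-C_i|<\delta$ for all $T\ge T_0$. For such $T$, $\widehat C_{i,T}^{\calC}\le u$ holds if and only if $C_i\le u$. Thus $\widehat F_{C,n,T}(u)=F_{C,n}(u)$ exactly for all $T\ge T_0$, which is stronger than convergence.

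\emph{Step 3: uniform convergence of the Bernstein estimator (\ref{eq:bernstein-first-stage}).} The estimator only evaluates the cdf at the grid points $k/m$, $k=0,\dots,m$. The obstacle is that a grid point may coincide with some $C_i$, and Step 2 does not apply there. To handle this, we use that $f_C$ is a density, so $\p(C_i=k/m)=0$. Taking the union over the countably many pairs $(m,k)$ and the $n$ indices, almost surely no $C_i$ lies on any grid $\{k/m\}$. Intersecting this event with the event of Step 1 gives, for every $\omega$ in it and every fixed $m$, an exact equality $\widehat F_{C,n,T}(k/m)=F_{C,n}(k/m)$ for all $k$ once $T$ is large enough. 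At $k=m$ both sides equal $1$, because $\widehat C^{\calC}_{i,T}\le1$; at $k=0$ both sides are eventually $0$, because $C_i>0$. We then bound
\[
\sup_{u\in[0,1]}\bigl|\widehat f_{C,m,n,T}(u)-\widetilde f_{C,m,n}(u)\bigr|\le 2m\max_{0\le k\le m}\bigl|\widehat F_{C,n,T}(k/m)-F_{C,n}(k/m)\bigr|,
\]
using $\sum_k p_k(m-1,u)=1$ and $p_k\ge0$. The right-hand side is eventually $0$, so the supremum in fact vanishes for all large $T$. This yields (\ref{eq:bernstein-first-stage}), and pointwise convergence at each $u\in[0,1]$ follows immediately.

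The main point requiring care is the grid-point and continuity-point issue in Step 3. The bound and the exact-equality argument themselves are routine once the null sets are controlled uniformly in $(m,k,i)$.
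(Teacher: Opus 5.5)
Your proposal is correct and follows essentially the same route as the paper: you intersect finitely many almost sure events from Theorem~\ref{thm:Ci-mle-an} and use the $1$-Lipschitz projection, you get eventual exact agreement of the indicators at continuity points, and, after excluding grid coincidences, you apply the bound $2m\max_{0\le k\le m}|\widehat F_{C,n,T}(k/m)-F_{C,n}(k/m)|$. The only cosmetic difference is that you exclude grid coincidences through a single null set covering all rationals $k/m$, whereas the paper builds a separate probability-one event $\Omega_{n,m}$ for each fixed $m$; either suffices.
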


\begin{proof}[\bf Proof of Theorem~\ref{thm:first-stage-reduction}]
The proof proceeds in three steps.
\medskip
\noindent
\textbf{Step 1:}
We first prove the uniform first-stage consistency for fixed $n$.
Fix $n\geq1$. By Proposition~\ref{prop:mle-Ci},
$
    \widehat C_{i,T}^{\calC}
    =
    \min\left\{
        \widehat C_{i,T},1
    \right\},
    \; i=1,\ldots,n.
$
By Theorem~\ref{thm:Ci-mle-an},
$
    \widehat C_{i,T}
    \xrightarrow[T\to\infty]{\mathrm{a.s.}}
    C_i
$
for every fixed $i\in\{1,\ldots,n\}$. Since $C_i\in[0,1]$ almost surely,
$
    \left|
        \min\left\{\widehat C_{i,T},1\right\}-C_i
    \right|
    \leq
    \left|
        \widehat C_{i,T}-C_i
    \right|.
$
Consequently,
\begin{equation}
    \widehat C_{i,T}^{\calC}
    \xrightarrow[T\to\infty]{\mathrm{a.s.}}
    C_i,
    \; i=1,\ldots,n.
    \label{eq:Ci-proj-consistency}
\end{equation}

For every $i\in\{1,\ldots,n\}$, let $\Omega_i$ be an event of probability
one on which the convergence in \eqref{eq:Ci-proj-consistency} holds, and
set
$
    \Omega_n
    :=
    \bigcap_{i=1}^n\Omega_i.
$
Since $n$ is fixed and finite,
$
    \mathbb P(\Omega_n)=1.
$
On $\Omega_n$, all $n$ sequences converge simultaneously. Therefore,
$
    \max_{1\leq i\leq n}
    \left|
        \widehat C_{i,T}^{\calC}-C_i
    \right|
    \xrightarrow[T\to\infty]{}
    0.
$
This proves \eqref{eq:max-first-stage}.

\medskip
\noindent
\textbf{Step 2:}
We next establish the convergence of the empirical distribution function.
Fix $\omega\in\Omega_n$, and let $u\in[0,1]$ be a continuity point of
$F_{C,n}$. Fix $\omega\in\Omega_n$, and let $u\in[0,1]$ be a continuity point of
$F_{C,n}$. Since
$
    F_{C,n}(u)
    =
    \frac{1}{n}
    \sum_{i=1}^n
    \ind_{\{C_i\leq u\}},
$
continuity at $u$ is equivalent to
$
    u\notin\{C_1,\ldots,C_n\}.
$
For each $i\in\{1,\ldots,n\}$, either $C_i<u$ or $C_i>u$. If $C_i<u$,
the convergence $\widehat C_{i,T}^{\calC}\to C_i$ implies that
$\widehat C_{i,T}^{\calC}<u$ for all sufficiently large $T$. Similarly,
if $C_i>u$, then $\widehat C_{i,T}^{\calC}>u$ for all sufficiently large
$T$.
Thus, for every $i\in\{1,\ldots,n\}$, there exists $T_i<\infty$ such that
$
    \ind_{\{\widehat C_{i,T}^{\calC}\leq u\}}
    =
    \ind_{\{C_i\leq u\}},
    \; T\geq T_i.
$
Since $n$ is finite, there exists $T_0<\infty$ such that the preceding
equality holds simultaneously for every $i\in\{1,\ldots,n\}$ and every
$T\geq T_0$. Consequently, for every $T\geq T_0$,
$
    \widehat F_{C,n,T}(u)
    =
    \frac{1}{n}
    \sum_{i=1}^n
    \ind_{\{\widehat C_{i,T}^{\calC}\leq u\}}
    =
    \frac{1}{n}
    \sum_{i=1}^n
    \ind_{\{C_i\leq u\}}
    =
    F_{C,n}(u).
$
Therefore, on $\Omega_n$, the equality is eventually exact at every
continuity point $u$ of $F_{C,n}$. This proves
\eqref{eq:cdf-first-stage}.

\medskip
\noindent
\textbf{Step 3:}
Finally, we prove the uniform convergence of the feasible Bernstein
estimator for fixed $m$ and $n$. Fix $m,n\geq1$. By Assumptions~\ref{ass:population}
and~\ref{ass:support}, each $C_i$ has density $f_C$ on $[0,1]$. Its
distribution is therefore non-atomic, so that
$
    \mathbb P(C_i=x)=0
$
for every fixed $x\in[0,1]$. In particular, for each
$
    (i,k)
    \in
    \{1,\ldots,n\}\times\{0,\ldots,m\},
$
one has
$
    \mathbb P\left(C_i=\frac{k}{m}\right)=0.
$
Since there are only finitely many such pairs, the event
$
    \Omega_{n,m}
    :=
    \bigcap_{i=1}^n
    \bigcap_{k=0}^m
    \left\{
        C_i\neq\frac{k}{m}
    \right\}
$
satisfies
$
    \mathbb P(\Omega_{n,m})=1.
$
On $\Omega_n\cap\Omega_{n,m}$, every grid point $k/m$,
$k=0,\ldots,m$, is a continuity point of $F_{C,n}$. By Step~2,
$
    \widehat F_{C,n,T}\left(\frac{k}{m}\right)
    \xrightarrow[T\to\infty]{}
    F_{C,n}\left(\frac{k}{m}\right),
    \; k=0,\ldots,m.
$
For $k=0,\ldots,m$, define
$
    \Delta_{k,T}
    :=
    \widehat F_{C,n,T}\left(\frac{k}{m}\right)
    -
    F_{C,n}\left(\frac{k}{m}\right).
$
Since the number of grid points is finite,
$
    \max_{0\leq k\leq m}
    |\Delta_{k,T}|
    \xrightarrow[T\to\infty]{}
    0.
$
Using \eqref{eq:fC-bernstein-estimator} and
\eqref{eq:ideal-bernstein}, we obtain, for every $u\in[0,1]$,
$
    \widehat f_{C,m,n,T}(u)
    -
    \widetilde f_{C,m,n}(u)
    =
    m\sum_{k=0}^{m-1}
    \left(
        \Delta_{k+1,T}-\Delta_{k,T}
    \right)
    p_k(m-1,u).
$
Since
$
    p_k(m-1,u)\geq0
    \;\text{and}\;
    \sum_{k=0}^{m-1}p_k(m-1,u)=1,
$
it follows that
$
\sup_{u\in[0,1]}
    \left|
        \widehat f_{C,m,n,T}(u)
        -
        \widetilde f_{C,m,n}(u)
    \right|
    \leq
    2m
    \max_{0\leq k\leq m}
    |\Delta_{k,T}|
    \xrightarrow[T\to\infty]{}
    0.
$
This proves \eqref{eq:bernstein-first-stage} and completes the proof.
\end{proof}
The following two lemmas provide the technical ingredients needed for the
proof of Theorem~\ref{Bias-variance-MISE}. The first establishes a
Voronovskaya-type bias expansion for the ideal Bernstein density estimator,
whereas the second establishes its variance expansion. Their proofs are
deferred to Appendix~\ref{app:proofs-auxiliary-lemmas}.
\begin{lemm}
\label{lem:ideal-bernstein-bias}
Under Assumption~\ref{ass:population}, assume that the common density
$f_C$ is supported on $[0,1]$ and belongs to $C^2([0,1])$. Define
\[
    \widetilde f_{C,m}(u)
    :=
    \mathbb E\bigl[\widetilde f_{C,m,n}(u)\bigr]
    =
    m\sum_{k=0}^{m-1}
    \left[
        F_C\left(\frac{k+1}{m}\right)
        -
        F_C\left(\frac{k}{m}\right)
    \right]
    p_k(m-1,u),
\]
where $F_C$ is the distribution function associated with $f_C$. Then
\begin{equation}
    \sup_{u\in[0,1]}
    \left|
        m\bigl(\widetilde f_{C,m}(u)-f_C(u)\bigr)
        -
        \frac{1}{2}
        \left[
            (1-2u)f_C'(u)
            +
            u(1-u)f_C''(u)
        \right]
    \right|
    \xrightarrow[m\to\infty]{}
    0.
    \label{eq:lemma-uniform-bias-expansion}
\end{equation}
In particular, for every $u\in[0,1]$,
$
    \widetilde f_{C,m}(u)-f_C(u)
    =
    \frac{1}{2m}
    \left[
        (1-2u)f_C'(u)
        +
        u(1-u)f_C''(u)
    \right]
    +
    o(m^{-1}).
$
Moreover,
\begin{equation}
    \int_0^1
    \bigl(\widetilde f_{C,m}(u)-f_C(u)\bigr)^2
    \dd u
    =
    B_{f_C}m^{-2}
    +
    o(m^{-2}),
    \label{eq:lemma-integrated-bias}
\end{equation}
where
$
    B_{f_C}
    :=
    \frac{1}{4}
    \int_0^1
    \left[
        (1-2u)f_C'(u)
        +
        u(1-u)f_C''(u)
    \right]^2
    \dd u.
$
\end{lemm}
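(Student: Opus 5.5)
The plan is to rewrite the Bernstein–Kantorovich-type expression for $\widetilde f_{C,m}$ as the derivative of a classical Bernstein polynomial, and then apply Voronovskaya's theorem to $F_C$.

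First, since $f_C$ is supported on $[0,1]$ with $F_C(0)=0$ and $F_C(1)=1$, the classical identity for derivatives of Bernstein polynomials gives
\[
    \widetilde f_{C,m}(u)=\frac{\dd}{\dd u}B_m(F_C;u),
    \qquad
    B_m(F_C;u):=\sum_{k=0}^{m}F_C\!\left(\tfrac{k}{m}\right)p_k(m,u).
\]
Equivalently, $\widetilde f_{C,m}(u)=\sum_{k=0}^{m-1}\bigl[m\int_{k/m}^{(k+1)/m}f_C(t)\dd t\bigr]p_k(m-1,u)$, which is a Kantorovich operator applied to $f_C$.

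Since $f_C\in C^2([0,1])$, we have $F_C\in C^3([0,1])$. Write $\mathcal K_m f:=\frac{\dd}{\dd u}B_m(F)$ with $F'=f$. I would then prove the uniform Voronovskaya expansion
\[
    m\bigl(\mathcal K_m f(u)-f(u)\bigr)\to \tfrac12\bigl[(1-2u)f'(u)+u(1-u)f''(u)\bigr]
\]
uniformly on $[0,1]$. Note that the right side equals $\tfrac12\frac{\dd}{\dd u}\bigl[u(1-u)f'(u)\bigr]$, which is consistent with differentiating the classical Voronovskaya limit $\tfrac{1}{2}u(1-u)F''(u)$ for $B_m$.

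The direct route is to work with the Kantorovich form. Write $\mathcal K_m f(u)=\sum_k p_k(m-1,u)\,m\int_{k/m}^{(k+1)/m}f(t)\dd t$ and Taylor expand $f(t)$ around $u$ to second order, with a remainder $\tfrac12\bigl(f''(\xi)-f''(u)\bigr)(t-u)^2$. The needed moments of the operator are explicit:
\[
    \mathcal K_m\bigl((t-u)\bigr)(u)=\frac{1-2u}{2m},
    \qquad
    \mathcal K_m\bigl((t-u)^2\bigr)(u)=\frac{u(1-u)}{m}+O(m^{-2}),
\]
both uniformly in $u$. These follow from the binomial moments of $k/(m-1)$ together with the integral over the cell. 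Substituting them gives the leading term $\tfrac{1-2u}{2m}f'(u)+\tfrac{u(1-u)}{2m}f''(u)$.

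The main obstacle is to show that the remainder is $o(m^{-1})$ uniformly in $u$. For this I would use the uniform continuity of $f''$ with modulus $\omega$. Splitting the sum according to whether $|t-u|\le\delta$ or $|t-u|>\delta$, the remainder is bounded by
\[
    \omega(\delta)\,\mathcal K_m\bigl((t-u)^2\bigr)+2\|f''\|_\infty\,\delta^{-2}\,\mathcal K_m\bigl((t-u)^4\bigr).
\]
The fourth moment is $\mathcal K_m\bigl((t-u)^4\bigr)=O(m^{-2})$ uniformly, by the standard Bernstein fourth-moment bound. Multiplying by $m$ and letting first $m\to\infty$ and then $\delta\to0$ gives \eqref{eq:lemma-uniform-bias-expansion}. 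The pointwise statement is then immediate.

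For \eqref{eq:lemma-integrated-bias}, write $m(\widetilde f_{C,m}-f_C)=g+r_m$, where $g:=\tfrac12\bigl[(1-2u)f_C'+u(1-u)f_C''\bigr]$ and $\sup|r_m|\to0$. Then $m^2\int_0^1(\widetilde f_{C,m}-f_C)^2\dd u=\int_0^1(g+r_m)^2\dd u\to\int_0^1 g^2\dd u=B_{f_C}$, since $g$ is bounded and $r_m\to0$ uniformly, so the cross term and the $r_m^2$ term vanish in the limit. Hence $\int_0^1(\widetilde f_{C,m}-f_C)^2\dd u=B_{f_C}m^{-2}+o(m^{-2})$.
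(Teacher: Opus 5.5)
Your proposal is correct and follows essentially the same route as the paper: the Kantorovich form of $\widetilde f_{C,m}$, a second-order Taylor expansion with first and second moments $(1-2u)/(2m)$ and $u(1-u)/m+O(m^{-2})$, control of the remainder through the modulus of continuity of $f_C''$ with a $\delta$-split and an $O(m^{-2})$ fourth moment, and finally squaring the uniform expansion to get the integrated bias. The paper writes the same moment computations probabilistically, using $K_{m,u}\sim\mathrm{Binomial}(m-1,u)$ and $X_{m,u}=(K_{m,u}+V)/m$ with $V$ uniform, and it does not use your derivative-of-Bernstein/Voronovskaya framing, which is consistent with the argument but not needed for it.
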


\begin{lemm}
\label{lem:ideal-bernstein-variance}

Under Assumption~\ref{ass:population}, assume that the common density
$f_C$ is supported on $[0,1]$ and belongs to $C([0,1])$. Then, for every
fixed $u\in(0,1)$, as $m\to\infty$,
\begin{equation}
    \operatorname{Var}
    \bigl(\widetilde f_{C,m,n}(u)\bigr)
    =
    \frac{m^{1/2}}{n}
    f_C(u)\psi(u)
    +
    o\left(
        \frac{m^{1/2}}{n}
    \right),
    \label{eq:lemma-variance-expansion}
\end{equation}
where
$
    \psi(u)
    :=
    \frac{1}{\sqrt{4\pi u(1-u)}}.
$
Moreover,
\begin{equation}
    \int_0^1
    \operatorname{Var}
    \bigl(\widetilde f_{C,m,n}(u)\bigr)
    \dd u
    =
    \frac{m^{1/2}}{n}
    A_{f_C}
    +
    o\left(
        \frac{m^{1/2}}{n}
    \right),
    \label{eq:lemma-integrated-variance}
\end{equation}
where
$
    A_{f_C}
    :=
    \int_0^1
    f_C(u)\psi(u)\dd u
    <
    \infty.
$
The remainder terms in
\eqref{eq:lemma-variance-expansion} and
\eqref{eq:lemma-integrated-variance} are uniform with respect to
$n\geq1$.
\end{lemm}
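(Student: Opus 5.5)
The plan is to use the representation $\widetilde f_{C,m,n}(u)=\frac1n\sum_{i=1}^n K_m(u,C_i)$, where
\[
K_m(u,x):=m\sum_{k=0}^{m-1}\ind_{\{k/m<x\le (k+1)/m\}}\,p_k(m-1,u).
\]
Because the $C_i$ are i.i.d. (Assumption~\ref{ass:population}), this gives
\[
\operatorname{Var}\bigl(\widetilde f_{C,m,n}(u)\bigr)=\frac1n\Bigl(\E[K_m(u,C_1)^2]-\widetilde f_{C,m}(u)^2\Bigr).
\]
The factor $1/n$ is then exact, and the remainder will be uniform in $n\geq1$ automatically. Since the bins are disjoint, only one indicator is nonzero for each $x$, so
\[
\E[K_m(u,C_1)^2]=m^2\sum_{k=0}^{m-1}\Delta F_k\,p_k(m-1,u)^2,\qquad \Delta F_k:=F_C\!\left(\tfrac{k+1}{m}\right)-F_C\!\left(\tfrac{k}{m}\right).
\]
By the mean value theorem, $\Delta F_k=m^{-1}f_C(\xi_k)$ with $\xi_k\in[k/m,(k+1)/m]$. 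Hence $\E[K_m^2]=m\sum_k f_C(\xi_k)p_k(m-1,u)^2$. The subtracted term $\widetilde f_{C,m}(u)^2$ is bounded by $\|f_C\|_\infty^2$, which is $O(1)=o(m^{1/2})$.

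The key step is the local limit for squared binomial weights. For fixed $u\in(0,1)$,
\[
S_m(u):=\sum_{k=0}^{m-1}p_k(m-1,u)^2=\bigl(4\pi (m-1)u(1-u)\bigr)^{-1/2}(1+o(1)).
\]
I would prove this by writing $S_m(u)=\mathbb P(X=X')$ for independent $X,X'\sim\mathrm{Bin}(m-1,u)$. The difference $X-X'$ is a lattice sum with variance $2(m-1)u(1-u)$, so the local CLT gives $\mathbb P(X-X'=0)\sim(2\pi\cdot 2(m-1)u(1-u))^{-1/2}$. Alternatively, one can use Stirling's formula together with the de Moivre–Laplace local theorem. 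Next I replace $f_C(\xi_k)$ by $f_C(u)$. I split the sum into $|k/m-u|\le\delta$, where uniform continuity gives an error of at most $\varepsilon S_m$, and $|k/m-u|>\delta$, where Hoeffding's inequality gives $\sum p_k^2\le\max_k p_k\cdot\mathbb P(|X/(m-1)-u|>\delta/2)$, which is exponentially small. This yields $m\sum_k f_C(\xi_k)p_k^2=m^{1/2}f_C(u)\psi(u)(1+o(1))$, and that proves \eqref{eq:lemma-variance-expansion}.

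For \eqref{eq:lemma-integrated-variance} I integrate over $u$ and need a dominated convergence argument, since the pointwise expansion is not uniform near $0$ and $1$. I expect this to be the main obstacle. First, I would establish a uniform bound of the form
\[
m^{1/2}S_m(u)\le C\min\bigl\{m^{1/2},\,(u(1-u))^{-1/2}\bigr\}.
\]
This follows from $\max_k p_k(m-1,u)\le C\bigl((m-1)u(1-u)\bigr)^{-1/2}$, a standard binomial mode bound, combined with $S_m\le\max_k p_k$. Since $(u(1-u))^{-1/2}$ is integrable on $[0,1]$, the quantity $m^{-1/2}\cdot m\sum_k f_C(\xi_k)p_k^2$ is dominated by $C\|f_C\|_\infty(u(1-u))^{-1/2}$. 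Dominated convergence then gives convergence to $\int_0^1 f_C\psi\,\dd u=A_{f_C}$, and $A_{f_C}<\infty$ because $\psi$ is integrable and $f_C$ is bounded. The $O(1)$ squared-mean term integrates to $O(1)=o(m^{1/2})$. Dividing by $n$ completes the argument, with the remainders uniform in $n$ as noted at the start.
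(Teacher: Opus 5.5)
Your pointwise argument matches the paper's: the same i.i.d. representation with an exact factor $1/n$, the collision identity $\sum_k p_k(m-1,u)^2=\mathbb P(K_1=K_2)$ with the local limit theorem for $K_1-K_2$, and a near--far split with an exponential binomial tail.

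For the integrated expansion \eqref{eq:lemma-integrated-variance} your route is genuinely different, and it is correct.

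\textbf{The paper's route.} It uses Tonelli to write $\int_0^1\mathbb E(\xi_1^2)\dd u=m\sum_k\bar f_{C,k}I_{m,k}$, where $I_{m,k}=\int_0^1p_k(m-1,u)^2\dd u$. It then:
\begin{enumerate}
\item evaluates $I_{m,k}$ exactly through the Beta function;
\item uses two-sided Stirling bounds to obtain $m^{3/2}I_{m,k}\le K_0/\sqrt{x_{m,k}(1-x_{m,k})}$, together with uniform convergence $m^{3/2}I_{m,k}\to\psi(x_{m,k})$ on interior indices;
\item concludes with a Riemann-sum argument on $[\varepsilon,1-\varepsilon]$ plus an $O(\sqrt\varepsilon)$ boundary estimate.
\end{enumerate}

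\textbf{Your route.} You integrate your pointwise limit in $u$ and apply dominated convergence. This needs only one estimate that is uniform in $u$ up to the endpoints, namely $\sum_k p_k(m-1,u)^2\le C\bigl((m-1)u(1-u)\bigr)^{-1/2}$. That estimate gives the integrable dominating function $C\|f_C\|_\infty(u(1-u))^{-1/2}$.

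\textbf{Comparison.} Your argument is shorter, reuses the first part of the lemma, and avoids the Beta/Stirling computations and the Riemann sums. The paper's argument works on the index side instead. It is logically independent of the pointwise expansion and makes the boundary contributions ($k=0$ and $k=m-1$) explicit.

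If you want the uniform bound to be self-contained rather than quoted as a ``standard mode bound'', the Fourier representation gives it in one line. With $N=m-1$,
\[
\sum_k p_k(N,u)^2=\frac{1}{2\pi}\int_{-\pi}^{\pi}\bigl(1-2u(1-u)(1-\cos t)\bigr)^{N}\dd t\le\frac{1}{2\pi}\int_{-\infty}^{\infty}e^{-4Nu(1-u)t^2/\pi^2}\dd t=\frac{\sqrt\pi}{4\sqrt{Nu(1-u)}},
\]
using $1-x\le e^{-x}$ and $1-\cos t\ge 2t^2/\pi^2$ on $[-\pi,\pi]$. This bound holds for all $u\in(0,1)$ and $N\ge1$, which is exactly what your dominated-convergence step requires.
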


The following theorem uses the asymptotic equivalence between the feasible
and ideal Bernstein estimators, together with the bias and variance
expansions of the ideal estimator. It establishes
the asymptotic bias, variance, and mean integrated squared error of the
feasible estimator under the following sequential asymptotic regime:
$T\to\infty$ first, for fixed $n$ and $m$, and then $n\to\infty$ with
$m=m_n\to\infty$.
\begin{Theo}
\seclabel{Bias-variance-MISE}
\label{thm:bias-var-mise}
Under Assumptions~\ref{ass:population} and~\ref{ass:support}, for every
fixed $u\in(0,1)$,
\begin{align}
    \lim_{T\to\infty}
    \operatorname{Bias}
    \bigl(\widehat f_{C,m,n,T}(u)\bigr)
    &=
    \frac{1}{2m}
    \left[
        (1-2u)f_C'(u)
        +
        u(1-u)f_C''(u)
    \right]
    +
    o(m^{-1}),
    \label{eq:bias-g}
    \\[1mm]
    \lim_{T\to\infty}
    \operatorname{Var}
    \bigl(\widehat f_{C,m,n,T}(u)\bigr)
    &=
    \frac{m^{1/2}}{n}
    f_C(u)\psi(u)
    +
    o\left(
        \frac{m^{1/2}}{n}
    \right),
    \label{eq:var-g}
\end{align}
where
$
    \psi(u)
    :=
    \frac{1}{\sqrt{4\pi u(1-u)}}.
$
The remainder terms in \eqref{eq:bias-g} and \eqref{eq:var-g} are
understood in the second stage, as $n\to\infty$ and
$m=m_n\to\infty$.
Moreover,
\begin{equation}
    \lim_{T\to\infty}
    \operatorname{MISE}
    \bigl(\widehat f_{C,m,n,T}\bigr)
    =
    A_{f_C}\frac{m^{1/2}}{n}
    +
    B_{f_C}m^{-2}
    +
    o\left(\frac{m^{1/2}}{n}\right)
    +
    o(m^{-2}),
    \label{eq:mise-g}
\end{equation}
where
$
    A_{f_C}
    :=
    \int_0^1f_C(u)\psi(u)\dd u
$
and
$
    B_{f_C}
    :=
    \frac{1}{4}
    \int_0^1
    \left[
        (1-2u)f_C'(u)
        +
        u(1-u)f_C''(u)
    \right]^2
    \dd u.
$
In particular, if
\begin{equation}
   m_n\xrightarrow[n\to\infty]{}\infty,
    \;
    \frac{\sqrt{m_n}}{n}
    \xrightarrow[n\to\infty]{}
    0,
    \label{eq:mise-consistency-regime}
\end{equation}
then
\begin{equation}
    \lim_{n\to\infty}
    \left[
        \lim_{T\to\infty}
        \operatorname{MISE}
        \bigl(\widehat f_{C,m_n,n,T}\bigr)
    \right]
    =
    0.
    \label{eq:mise-consistency}
\end{equation}
If $A_{f_C}>0$ and $B_{f_C}>0$, balancing the leading variance and
squared-bias terms in \eqref{eq:mise-g} shows that the asymptotically
optimal Bernstein order is of order $n^{2/5}$ and that the corresponding
MISE is of order $n^{-4/5}$.
\end{Theo}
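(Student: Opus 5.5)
The plan is to pass the limit $T\to\infty$ inside the bias, variance and MISE functionals for fixed $m,n$, using Theorem~\ref{thm:first-stage-reduction}. Then the resulting ideal quantities are identified via Lemmas~\ref{lem:ideal-bernstein-bias} and~\ref{lem:ideal-bernstein-variance}.

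\textbf{Step 1: uniform bound.} For fixed $m,n$, both the feasible and the ideal estimators are uniformly bounded. Each bracket $\widehat F_{C,n,T}((k+1)/m)-\widehat F_{C,n,T}(k/m)$ lies in $[0,1]$, and $\sum_k p_k(m-1,u)=1$. Hence
\[
0\leq \widehat f_{C,m,n,T}(u)\leq m \quad\text{and}\quad 0\leq \widetilde f_{C,m,n}(u)\leq m
\]
for all $u\in[0,1]$ and all $T$.

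\textbf{Step 2: passing $T\to\infty$.} By \eqref{eq:bernstein-first-stage}, the convergence $\widehat f_{C,m,n,T}\to\widetilde f_{C,m,n}$ holds uniformly on $[0,1]$, almost surely. Combined with the bound $m$, dominated convergence gives:
\begin{itemize}
\item $\E[\widehat f_{C,m,n,T}(u)]\to\E[\widetilde f_{C,m,n}(u)]=\widetilde f_{C,m}(u)$;
\item $\E[\widehat f_{C,m,n,T}(u)^2]\to\E[\widetilde f_{C,m,n}(u)^2]$, so the variances converge;
\item $\E\int_0^1(\widehat f_{C,m,n,T}-f_C)^2\dd u\to \E\int_0^1(\widetilde f_{C,m,n}-f_C)^2\dd u$.
\end{itemize}
For the last item, the integrand is bounded by $(m+\|f_C\|_\infty)^2$ and converges uniformly almost surely. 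Here $\|f_C\|_\infty<\infty$ because $f_C\in C^2([0,1])$. Thus $\lim_{T}\operatorname{Bias}=\widetilde f_{C,m}(u)-f_C(u)$, $\lim_T\operatorname{Var}=\operatorname{Var}(\widetilde f_{C,m,n}(u))$, and $\lim_T\operatorname{MISE}=\operatorname{MISE}(\widetilde f_{C,m,n})$.

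\textbf{Step 3: identifying the ideal quantities.} Lemma~\ref{lem:ideal-bernstein-bias} gives \eqref{eq:bias-g}, and Lemma~\ref{lem:ideal-bernstein-variance} gives \eqref{eq:var-g}. By Fubini,
\[
\operatorname{MISE}(\widetilde f_{C,m,n})=\int_0^1\bigl(\widetilde f_{C,m}(u)-f_C(u)\bigr)^2\dd u+\int_0^1\operatorname{Var}\bigl(\widetilde f_{C,m,n}(u)\bigr)\dd u .
\]
Then \eqref{eq:lemma-integrated-bias} and \eqref{eq:lemma-integrated-variance} give \eqref{eq:mise-g}. The uniformity in $n$ of the variance remainder in Lemma~\ref{lem:ideal-bernstein-variance} is what justifies writing the remainders as $o(m^{1/2}/n)$ along $m=m_n$.

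\textbf{Step 4: consistency and optimal order.} Under \eqref{eq:mise-consistency-regime}, the terms $m_n^{1/2}/n$ and $m_n^{-2}$ both tend to $0$, which yields \eqref{eq:mise-consistency}. For the optimal order, minimize $A m^{1/2}/n+Bm^{-2}$ in $m$. The first-order condition is
\[
\tfrac12 A m^{-1/2}n^{-1}=2Bm^{-3},
\]
so $m\asymp (4B n/A)^{2/5}\asymp n^{2/5}$. Plugging this back in gives $\operatorname{MISE}\asymp n^{1/5}/n=n^{-4/5}$.

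The main delicacy is Step 2, the interchange of $\lim_T$ with expectations. It works only because $m$ and $n$ are fixed, which gives the deterministic bound $m$. Without the sequential regime this argument would fail. Everything else is routine bookkeeping of the remainder terms.
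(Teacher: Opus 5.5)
Your proposal is correct and follows the paper's proof essentially step for step. It bounds $0\le \widehat f_{C,m,n,T}\le m$ and uses the uniform almost sure convergence of Theorem~\ref{thm:first-stage-reduction} with bounded/dominated convergence to pass $T\to\infty$ inside the bias, variance and MISE; it then invokes Lemmas~\ref{lem:ideal-bernstein-bias} and~\ref{lem:ideal-bernstein-variance} together with the bias--variance decomposition, and performs the same first-order minimization yielding $m_{\mathrm{opt}}=(4B_{f_C}/A_{f_C})^{2/5}n^{2/5}$ and a MISE of order $n^{-4/5}$.
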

\begin{proof}[\bf Proof of Theorem~\ref{thm:bias-var-mise}]

For any random density estimator $g$, write
$
    \operatorname{Bias}\bigl(g(u)\bigr)
    :=
    \mathbb E\bigl[g(u)\bigr]-f_C(u)
$
and
$
    \operatorname{MISE}(g)
    :=
    \int_0^1
    \mathbb E\left[
        \bigl(g(u)-f_C(u)\bigr)^2
    \right]
    \dd u.
$
We first reduce the analysis of the feasible estimator to that of the
ideal estimator. Fix $m,n\geq1$. Since
$
    p_k(m-1,u)\in[0,1],
    \;
    \sum_{k=0}^{m-1}p_k(m-1,u)=1,
$
and every increment
$
    \widehat F_{C,n,T}\left(\frac{k+1}{m}\right)
    -
    \widehat F_{C,n,T}\left(\frac{k}{m}\right)
$
belongs to $[0,1]$, one has
$
    0
    \leq
    \widehat f_{C,m,n,T}(u)
    \leq
    m,
    \; u\in[0,1].
$
The same bound holds for $\widetilde f_{C,m,n}(u)$. By
Theorem~\ref{thm:first-stage-reduction},
$
    \widehat f_{C,m,n,T}(u)
    \xrightarrow[T\to\infty]{\mathrm{a.s.}}
    \widetilde f_{C,m,n}(u)
$
for fixed $m$ and $n$, uniformly in $u\in[0,1]$.
Bounded convergence, applied to the first and second moments, gives
$
    \lim_{T\to\infty}
    \operatorname{Bias}
    \bigl(\widehat f_{C,m,n,T}(u)\bigr)
    =
    \operatorname{Bias}
    \bigl(\widetilde f_{C,m,n}(u)\bigr)
$
and
$
    \lim_{T\to\infty}
    \operatorname{Var}
    \bigl(\widehat f_{C,m,n,T}(u)\bigr)
    =
    \operatorname{Var}
    \bigl(\widetilde f_{C,m,n}(u)\bigr).
$
Moreover,
$    \left|
        \widehat f_{C,m,n,T}(u)-f_C(u)
    \right|^2
    \leq
    \left(
        m+\lVert f_C\rVert_\infty
    \right)^2.
$
Therefore, the dominated convergence theorem on
$\Omega\times[0,1]$ yields
$
    \lim_{T\to\infty}
    \operatorname{MISE}
    \bigl(\widehat f_{C,m,n,T}\bigr)
    =
    \operatorname{MISE}
    \bigl(\widetilde f_{C,m,n}\bigr).
$
It is thus sufficient to study the ideal estimator constructed from the
unobserved i.i.d. sample $C_1,\ldots,C_n$.
By Lemma~\ref{lem:ideal-bernstein-bias}, for every fixed
$u\in(0,1)$,
$
    \operatorname{Bias}
    \bigl(\widetilde f_{C,m,n}(u)\bigr)
    =
    \frac{1}{2m}
    \left[
        (1-2u)f_C'(u)
        +
        u(1-u)f_C''(u)
    \right]
    +
    o(m^{-1}).
$
Combining this expansion with the first-stage convergence of the
expectations established above proves \eqref{eq:bias-g}. Similarly,
Lemma~\ref{lem:ideal-bernstein-variance} gives
$
    \operatorname{Var}
    \bigl(\widetilde f_{C,m,n}(u)\bigr)
    =
    \frac{m^{1/2}}{n}
    f_C(u)\psi(u)
    +
    o\left(\frac{m^{1/2}}{n}\right),
$
which proves \eqref{eq:var-g}.
We next derive the integrated expansion. By the bias--variance
decomposition,
\begin{align*}
    \operatorname{MISE}
    \bigl(\widetilde f_{C,m,n}\bigr)
    &=
    \int_0^1
    \operatorname{Bias}
    \bigl(\widetilde f_{C,m,n}(u)\bigr)^2
    \dd u
    +
    \int_0^1
    \operatorname{Var}
    \bigl(\widetilde f_{C,m,n}(u)\bigr)
    \dd u.
\end{align*}
Using the integrated conclusions of
Lemmas~\ref{lem:ideal-bernstein-bias}
and~\ref{lem:ideal-bernstein-variance}, we obtain
$
    \operatorname{MISE}
    \bigl(\widetilde f_{C,m,n}\bigr)
    =
    A_{f_C}\frac{m^{1/2}}{n}
    +
    B_{f_C}m^{-2}
    +
    o\left(\frac{m^{1/2}}{n}\right)
    +
    o(m^{-2}).
$
Combining this expansion with the first-stage MISE convergence established
above proves \eqref{eq:mise-g}.

Under \eqref{eq:mise-consistency-regime},
$
    m_n^{-2}
    \xrightarrow[n\to\infty]{}
    0,
    \;
    \frac{\sqrt{m_n}}{n}
    \xrightarrow[n\to\infty]{}
    0.
$
The corresponding remainder terms tend to zero as well. Therefore,
$
    \lim_{n\to\infty}
    \lim_{T\to\infty}
    \operatorname{MISE}
    \bigl(\widehat f_{C,m,n,T}\bigr)
    =
    0,
$
which proves \eqref{eq:mise-consistency}.
It remains to minimize the leading MISE term
$
    h_n(m)
    :=
    A_{f_C}\frac{m^{1/2}}{n}
    +
    B_{f_C}m^{-2}.
$
Assume that $A_{f_C}>0$ and $B_{f_C}>0$. For fixed $n$, consider the
continuous relaxation of $h_n$ over $m>0$. Its derivative is
$
    h_n'(m)
    =
    \frac{A_{f_C}}{2n}m^{-1/2}
    -
    2B_{f_C}m^{-3}.
$
Thus, $h_n'(m)=0$ if and only if
$
    m^{5/2}
    =
    \frac{4B_{f_C}}{A_{f_C}}\,n.
$
The continuous relaxation therefore attains its unique minimum at
$
    m_{\mathrm{opt}}
    :=
    \left(
        \frac{4B_{f_C}}{A_{f_C}}
    \right)^{2/5}
    n^{2/5}.
$
Indeed,
$
    h_n'(m)<0
    \;\text{for }0<m<m_{\mathrm{opt}},
    \;
    h_n'(m)>0
    \;\text{for }m>m_{\mathrm{opt}}.
$
Since the Bernstein order must be a positive integer, one may choose an
integer nearest to $m_{\mathrm{opt}}$. This choice is of order $n^{2/5}$.
Substitution into the leading MISE term shows that both the variance and
squared-bias contributions are of order $n^{-4/5}$. This completes the
proof.
\end{proof}
The following lemma provides the triangular-array argument needed for
Theorem~\ref{Asymptotic-Normality-gtilde}. It establishes a pointwise
central limit theorem for the ideal Bernstein estimator, that is,
the estimator constructed from the unobserved random effects
$C_1,\ldots,C_n$.  Its proof is deferred to
Appendix~\ref{app:proofs-auxiliary-lemmas}.

\begin{lemm}
\label{lem:ideal-bernstein-pointwise-clt}
Let $C_1,C_2,\ldots$ be i.i.d. random variables with density $f_C$
supported on $[0,1]$. Assume that $f_C\in C([0,1])$, and fix
$u\in(0,1)$ such that $f_C(u)>0$.
If
$
    m_n
    \xrightarrow[n\to\infty]{}
    \infty,
    \;
    \frac{m_n}{n}
    \xrightarrow[n\to\infty]{}
    0,
$
then
\begin{equation}
    n^{1/2}m^{-1/4}
    \left[
        \widetilde f_{C,m,n}(u)
        -
        \mathbb E\bigl[\widetilde f_{C,m,n}(u)\bigr]
    \right]
    \xrightarrow[n\to\infty]{\mathcal D}
    \mathcal N\bigl(0,f_C(u)\psi(u)\bigr),
    \label{eq:ideal-pointwise-clt-lemma}
\end{equation}
where
$
    \psi(u)
    :=
    \frac{1}{\sqrt{4\pi u(1-u)}}.
$
\end{lemm}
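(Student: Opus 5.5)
Since the ideal estimator is built from the i.i.d.\ sample $C_1,\dots,C_n$, I will write it as a normalized sum of i.i.d.\ bounded summands and apply the Lindeberg--Feller theorem for triangular arrays. For $x\in[0,1]$ let $k(x)\in\{0,\dots,m-1\}$ be the index of the Bernstein cell containing $x$, i.e.\ $x\in(k/m,(k+1)/m]$, with $x=0$ and $x=1$ assigned to the first and last cells, consistent with the semi-open partition convention. Then
\[
\widetilde f_{C,m,n}(u)=\frac1n\sum_{i=1}^n K_m(C_i,u),\qquad K_m(x,u):=m\,p_{k(x)}(m-1,u).
\]
Define $\xi_{n,i}:=n^{-1/2}m^{-1/4}\bigl(K_m(C_i,u)-\mathbb E[K_m(C_1,u)]\bigr)$. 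The left-hand side of \eqref{eq:ideal-pointwise-clt-lemma} equals $\sum_{i=1}^n\xi_{n,i}$, a row-wise i.i.d.\ centered triangular array.

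\textbf{Step 1 (variance).} I have $\sum_i\operatorname{Var}(\xi_{n,i})=n\,m^{-1/2}\operatorname{Var}(\widetilde f_{C,m,n}(u))$. By Lemma~\ref{lem:ideal-bernstein-variance}, which needs only $f_C\in C([0,1])$ and whose remainder is uniform in $n$, this tends to $f_C(u)\psi(u)$. For completeness, the underlying computation is
\[
\mathbb E[K_m(C_1,u)^2]=m^2\sum_k p_k(m-1,u)^2\int_{k/m}^{(k+1)/m}f_C\approx m\,f_C(u)\sum_k p_k(m-1,u)^2 .
\]
The local limit theorem for the binomial gives $\sum_k p_k(m-1,u)^2\sim(4\pi m u(1-u))^{-1/2}$, and continuity of $f_C$ lets me localize near $u$, since $p_k(m-1,u)$ concentrates on $|k/m-u|=O(m^{-1/2})$. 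The squared mean is $O(1)$, hence negligible relative to $m^{1/2}$.

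\textbf{Step 2 (Lindeberg, or a Lyapunov shortcut).} The key bound is $\sup_k p_k(m-1,u)\le C_u m^{-1/2}$ for fixed $u\in(0,1)$, again from the local limit theorem or Stirling. It gives $|K_m(x,u)|\le C_u m^{1/2}$, and therefore
\[
|\xi_{n,i}|\le 2C_u\,n^{-1/2}m^{1/4}.
\]
Because $m/n\to0$, this bound tends to $0$ uniformly in $i$. For any $\varepsilon>0$ the indicator sets $\{|\xi_{n,i}|>\varepsilon\}$ are therefore empty for large $n$, and the Lindeberg condition holds trivially. Equivalently, the Lyapunov ratio $\sum_i\mathbb E|\xi_{n,i}|^3\le \max_i\|\xi_{n,i}\|_\infty\sum_i\mathbb E\xi_{n,i}^2=O((m/n)^{1/2})\to0$. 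Lindeberg--Feller then yields convergence to $\mathcal N(0,f_C(u)\psi(u))$.

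\textbf{Main obstacle.} The delicate point is the sharp variance asymptotics: evaluating $\sum_k p_k(m-1,u)^2$ together with the uniform sup bound on $p_k(m-1,u)$, and controlling the error from replacing $\int_{k/m}^{(k+1)/m}f_C$ by $f_C(u)/m$. I handle this by splitting the indices into $|k/m-u|\le\delta$, where uniform continuity of $f_C$ applies, and the complement, which is exponentially small by Hoeffding's inequality. Since Lemma~\ref{lem:ideal-bernstein-variance} already supplies this expansion, only the sup bound $\max_k p_k(m-1,u)=O(m^{-1/2})$ needs a separate justification, via Stirling's formula at the mode $k\approx(m-1)u$. The hypothesis $f_C(u)>0$ ensures the limiting variance is nondegenerate, although the argument would also give a degenerate limit if it were zero.
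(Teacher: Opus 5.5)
Your proposal is correct and follows essentially the same route as the paper. You use the same ingredients: the i.i.d. representation $\widetilde f_{C,m,n}(u)=n^{-1}\sum_{i=1}^n m\,p_{\kappa_m(C_i)}(m-1,u)$, the variance expansion from Lemma~\ref{lem:ideal-bernstein-variance}, the sup bound $\max_k p_k(m-1,u)=O(m^{-1/2})$, and a Lindeberg condition that holds trivially because the normalized summands are uniformly $O\bigl(n^{-1/2}m^{1/4}\bigr)=O\bigl((m_n/n^2)^{1/4}\bigr)$. Your Lyapunov bound $O((m/n)^{1/2})$ is valid but not sharp; the sharp order $O\bigl((m_n/n^2)^{1/4}\bigr)$ is exactly the paper's remark that $m_n/n^2\to0$ already suffices.
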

The following theorem combines the asymptotic equivalence between the
feasible and ideal Bernstein estimators with the pointwise central limit
theorem for the ideal estimator. It
establishes the pointwise asymptotic normality of the feasible estimator
under the following sequential asymptotic regime: $T\to\infty$ first, for
fixed $n$ and $m$, and then $n\to\infty$ with $m=m_n\to\infty$.
\begin{Theo}
\seclabel{Asymptotic-Normality-gtilde}
\label{thm:pointwise-normality}
Under Assumptions~\ref{ass:population} and~\ref{ass:support}, fix
$u\in(0,1)$ such that $f_C(u)>0$. Let $m=m_n\to\infty$ satisfy
$
    \frac{m_n}{n}
    \xrightarrow[n\to\infty]{}
    0.
$
Then
\begin{equation}
    n^{1/2}m^{-1/4}
    \left[
        \widehat f_{C,m,n,T}(u)
        -
        \mathbb E\bigl[\widehat f_{C,m,n,T}(u)\bigr]
    \right]
    \xrightarrow[
        T\to\infty,\, n\to\infty
    ]{\mathcal D}
    \mathcal N\bigl(0,f_C(u)\psi(u)\bigr),
    \label{eq:pointwise-normality}
\end{equation}
where
$
    \psi(u)
    :=
    \frac{1}{\sqrt{4\pi u(1-u)}}.
$
If, in addition,
$
    \frac{n}{m_n^{5/2}}
    \xrightarrow[n\to\infty]{}
    0,
    \label{eq:pointwise-undersmoothing}
$
then the expectation in \eqref{eq:pointwise-normality} may be replaced by
$f_C(u)$. More precisely,
\begin{equation}
    n^{1/2}m^{-1/4}
    \left[
        \widehat f_{C,m,n,T}(u)
        -
        f_C(u)
    \right]
    \xrightarrow[
       T\to\infty,\, n\to\infty
    ]{\mathcal D}
    \mathcal N\bigl(0,f_C(u)\psi(u)\bigr).
    \label{eq:pointwise-normality-recentered}
\end{equation}
\end{Theo}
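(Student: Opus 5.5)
The plan follows the sequential scheme used for Theorem~\ref{thm:bias-var-mise}: first remove the $T$-dependence for fixed $(m,n)$, then apply Lemma~\ref{lem:ideal-bernstein-pointwise-clt} to the ideal estimator. Fix $m,n$ and set
\[
G_{n,T}:=n^{1/2}m^{-1/4}\bigl[\widehat f_{C,m,n,T}(u)-\mathbb E[\widehat f_{C,m,n,T}(u)]\bigr],
\qquad
\widetilde G_n:=n^{1/2}m^{-1/4}\bigl[\widetilde f_{C,m,n}(u)-\mathbb E[\widetilde f_{C,m,n}(u)]\bigr].
\]
By Theorem~\ref{thm:first-stage-reduction}, $\widehat f_{C,m,n,T}(u)\to\widetilde f_{C,m,n}(u)$ almost surely as $T\to\infty$. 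Both estimators lie in $[0,m]$, as noted in the proof of Theorem~\ref{thm:bias-var-mise}, so bounded convergence gives $\mathbb E[\widehat f_{C,m,n,T}(u)]\to\mathbb E[\widetilde f_{C,m,n}(u)]$. Hence $G_{n,T}\to\widetilde G_n$ almost surely, and therefore in distribution, as $T\to\infty$ with $(m,n)$ fixed. This is the first, inner limit of the sequential regime.

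For the outer limit, Lemma~\ref{lem:ideal-bernstein-pointwise-clt} gives $\widetilde G_n\xrightarrow{\mathcal D}\mathcal N(0,f_C(u)\psi(u))$ directly, since its hypotheses hold. Assumption~\ref{ass:support} gives $f_C\in C^2([0,1])\subset C([0,1])$, and we are given $f_C(u)>0$, $m_n\to\infty$ and $m_n/n\to0$. Composing the two limits proves \eqref{eq:pointwise-normality}, interpreted as $\lim_{n}\lim_{T}$ of the laws. To make this precise, I would state it as convergence of characteristic functions, $\lim_{n\to\infty}\lim_{T\to\infty}\mathbb E[e^{itG_{n,T}}]=e^{-t^2f_C(u)\psi(u)/2}$, and use bounded convergence at the inner step.

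For \eqref{eq:pointwise-normality-recentered}, write
\[
n^{1/2}m^{-1/4}\bigl[\widehat f_{C,m,n,T}(u)-f_C(u)\bigr]=G_{n,T}+n^{1/2}m^{-1/4}\operatorname{Bias}\bigl(\widehat f_{C,m,n,T}(u)\bigr).
\]
The bias term is deterministic. By the first-stage argument it converges, as $T\to\infty$, to $n^{1/2}m^{-1/4}\operatorname{Bias}(\widetilde f_{C,m,n}(u))$. Lemma~\ref{lem:ideal-bernstein-bias} makes this $n^{1/2}m^{-1/4}\,O(m^{-1})=O\bigl((n/m^{5/2})^{1/2}\bigr)$, which tends to $0$ under the undersmoothing condition $n/m_n^{5/2}\to0$. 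The leading term of the bias is bounded at fixed $u$ because $f_C\in C^2$, and the $o(m^{-1})$ remainder is even smaller, so the uniform statement \eqref{eq:lemma-uniform-bias-expansion} covers it. A deterministic Slutsky step, applied inside the characteristic function, then concludes the proof.

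The main obstacle is only formal: making the iterated limit rigorous, since neither convergence is uniform in the other parameter. The characteristic-function formulation handles this. The inner limit is exact for each fixed $n$, so no uniformity is needed, and the outer limit is exactly Lemma~\ref{lem:ideal-bernstein-pointwise-clt}. No stable convergence or joint $(T,n)$ rate condition is required in this sequential framework.
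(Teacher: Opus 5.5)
Your proposal is correct and follows essentially the same route as the paper. Both arguments first use Theorem~\ref{thm:first-stage-reduction}, with bounded convergence of the expectations on $[0,m]$, to obtain an almost-sure reduction to the ideal estimator for fixed $(m,n)$. Both then apply Lemma~\ref{lem:ideal-bernstein-pointwise-clt}, and for the recentred version both use the $O(m^{-1})$ bias from Lemma~\ref{lem:ideal-bernstein-bias} to get $O\bigl((n/m^{5/2})^{1/2}\bigr)\to0$. The differences are cosmetic: you split off the deterministic bias of the feasible estimator before letting $T\to\infty$, whereas the paper does the bias/Slutsky step on the ideal estimator and then transfers; and you make the iterated limit explicit through characteristic functions, which the paper leaves implicit.
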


\begin{proof}[\bf Proof of Theorem~\ref{thm:pointwise-normality}]

Fix $u\in(0,1)$ such that $f_C(u)>0$. We first establish the result
centered by the expectation. By
Lemma~\ref{lem:ideal-bernstein-pointwise-clt}, the ideal Bernstein
estimator satisfies
\begin{equation}
    n^{1/2}m^{-1/4}
    \left[
        \widetilde f_{C,m,n}(u)
        -
        \mathbb E\bigl[\widetilde f_{C,m,n}(u)\bigr]
    \right]
    \xrightarrow[n\to\infty]{\mathcal D}
    \mathcal N\bigl(0,f_C(u)\psi(u)\bigr),
    \label{eq:ideal-pointwise-clt}
\end{equation}
as $m=m_n\to\infty$ with $m_n/n\to0$.
We next transfer this limit from the ideal estimator to the feasible
estimator. Fix $n$ and $m$. Since
$
    p_k(m-1,u)\geq0,
    \;
    \sum_{k=0}^{m-1}p_k(m-1,u)=1,
$
and the increments of the empirical distribution functions belong to
$[0,1]$, one has
$
    0
    \leq
    \widehat f_{C,m,n,T}(u)
    \leq
    m
$
and
$
    0
    \leq
    \widetilde f_{C,m,n}(u)
    \leq
    m.
$
By Theorem~\ref{thm:first-stage-reduction},
$
    \widehat f_{C,m,n,T}(u)
    \xrightarrow[T\to\infty]{\mathrm{a.s.}}
    \widetilde f_{C,m,n}(u).
$
The dominated convergence theorem yields
$
    \mathbb E\bigl[\widehat f_{C,m,n,T}(u)\bigr]
    \xrightarrow[T\to\infty]{}
    \mathbb E\bigl[\widetilde f_{C,m,n}(u)\bigr].
$
Consequently, for fixed $n$ and $m$,
\begin{align*}
    &n^{1/2}m^{-1/4}
    \left[
        \widehat f_{C,m,n,T}(u)
        -
        \mathbb E\bigl[\widehat f_{C,m,n,T}(u)\bigr]
    \right]
    \\
    &\qquad\xrightarrow[T\to\infty]{\mathrm{a.s.}}
    n^{1/2}m^{-1/4}
    \left[
        \widetilde f_{C,m,n}(u)
        -
        \mathbb E\bigl[\widetilde f_{C,m,n}(u)\bigr]
    \right].
\end{align*}
Taking next $n\to\infty$ with $m=m_n\to\infty$ and $m_n/n\to0$, and using
\eqref{eq:ideal-pointwise-clt}, yields
\[
    n^{1/2}m^{-1/4}
    \left[
        \widehat f_{C,m,n,T}(u)
        -
        \mathbb E\bigl[\widehat f_{C,m,n,T}(u)\bigr]
    \right]
    \xrightarrow[
       T\to\infty,\, n\to\infty
    ]{\mathcal D}
    \mathcal N\bigl(0,f_C(u)\psi(u)\bigr).
\]
This proves \eqref{eq:pointwise-normality}.
We finally show that the estimator may be centered at $f_C(u)$. For the
ideal estimator, write
\begin{align*}
    &n^{1/2}m^{-1/4}
    \left[
        \widetilde f_{C,m,n}(u)-f_C(u)
    \right]
    \\
    &\qquad=
    n^{1/2}m^{-1/4}
    \left[
        \widetilde f_{C,m,n}(u)
        -
        \mathbb E\bigl[\widetilde f_{C,m,n}(u)\bigr]
    \right]
    +
    n^{1/2}m^{-1/4}
    \left[
        \widetilde f_{C,m}(u)-f_C(u)
    \right],
\end{align*}
where
$
    \widetilde f_{C,m}(u)
    =
    \mathbb E\bigl[\widetilde f_{C,m,n}(u)\bigr].
$
By Lemma~\ref{lem:ideal-bernstein-bias},
$
    \widetilde f_{C,m}(u)-f_C(u)
    =
    O(m^{-1}).
$
Hence,
\begin{align*}
    n^{1/2}m^{-1/4}
    \left|
        \widetilde f_{C,m}(u)-f_C(u)
    \right|
    &=
    O\left(n^{1/2}m^{-5/4}\right)
    \\
    &=
    O\left(
        \sqrt{\frac{n}{m^{5/2}}}
    \right)
    \xrightarrow[n\to\infty]{}
    0,
\end{align*}
because $n/m_n^{5/2}\to0$. Slutsky's theorem and
\eqref{eq:ideal-pointwise-clt} therefore give
\[
    n^{1/2}m^{-1/4}
    \left[
        \widetilde f_{C,m,n}(u)-f_C(u)
    \right]
    \xrightarrow[n\to\infty]{\mathcal D}
    \mathcal N\bigl(0,f_C(u)\psi(u)\bigr).
\]
For fixed $n$ and $m$, Theorem~\ref{thm:first-stage-reduction} also
implies that
\begin{align*}
    &n^{1/2}m^{-1/4}
    \left[
        \widehat f_{C,m,n,T}(u)-f_C(u)
    \right]
    \\
    &\qquad\xrightarrow[T\to\infty]{\mathrm{a.s.}}
    n^{1/2}m^{-1/4}
    \left[
        \widetilde f_{C,m,n}(u)-f_C(u)
    \right].
\end{align*}
Taking next $n\to\infty$ with $m=m_n\to\infty$, $m_n/n\to0$, and
$n/m_n^{5/2}\to0$, we conclude that
\[
    n^{1/2}m^{-1/4}
    \left[
        \widehat f_{C,m,n,T}(u)-f_C(u)
    \right]
    \xrightarrow[
        T\to\infty,\, n\to\infty
    ]{\mathcal D}
    \mathcal N\bigl(0,f_C(u)\psi(u)\bigr).
\]
This proves \eqref{eq:pointwise-normality-recentered} and completes the
proof.
\end{proof}

\begin{rema}
\begin{enumerate}
\item
The MISE-optimal Bernstein order derived in
Theorem~\ref{thm:bias-var-mise} is obtained by minimizing the leading
expression
$
    A_{f_C}\frac{m^{1/2}}{n}
    +
    B_{f_C}m^{-2}.
$
When $B_{f_C}>0$, its continuous minimizer is
$
    m_{\mathrm{opt}}
    =
    \left(
        \frac{4B_{f_C}}{A_{f_C}}
    \right)^{2/5}
    n^{2/5}.
$
By contrast, replacing
$\mathbb E[\widehat f_{C,m,n,T}(u)]$ by $f_C(u)$ in
Theorem~\ref{thm:pointwise-normality} requires
$
    \frac{n}{m^{5/2}}
    \xrightarrow[n\to\infty]{}
    0.
$
Equivalently, $m$ must grow faster than $n^{2/5}$. At the MISE-optimal
order,
$
    \frac{n}{m_{\mathrm{opt}}^{5/2}}
    =
    \frac{A_{f_C}}{4B_{f_C}}
    >
    0,
$
so the preceding condition is not satisfied.
To describe the effect, define
$
    \beta_{f_C}(u)
    :=
    \frac{1}{2}
    \left[
        (1-2u)f_C'(u)
        +
        u(1-u)f_C''(u)
    \right].
$
By Lemma~\ref{lem:ideal-bernstein-bias},
$
    \widetilde f_{C,m}(u)-f_C(u)
    =
    \frac{\beta_{f_C}(u)}{m}
    +
    o(m^{-1}).
$
Consequently, at the MISE-optimal order,
$
    n^{1/2}m_{\mathrm{opt}}^{-1/4}
    \left[
        \widetilde f_{C,m_{\mathrm{opt}}}(u)-f_C(u)
    \right]
    \xrightarrow[n\to\infty]{}
    \beta_{f_C}(u)
    \left(
        \frac{A_{f_C}}{4B_{f_C}}
    \right)^{1/2}.
$

Thus, unless $\beta_{f_C}(u)=0$, the normalized limiting distribution
centered at $f_C(u)$ has a nonzero mean shift. This illustrates the usual
tension in nonparametric density estimation between the smoothing order
that minimizes the MISE and the stronger growth condition required for an
uncorrected pointwise central limit theorem centered at the true density.
Consequently, the MISE-optimal order should not generally be used for
pointwise confidence intervals based directly on
\eqref{eq:pointwise-normality-recentered} without an explicit bias
correction. One possible uncorrected choice is
$
    m_n
    =
    \left\lfloor n^\alpha\right\rfloor,
    \;
    \alpha\in\left(\frac{2}{5},1\right).
$
Indeed, this choice satisfies
$
    \frac{m_n}{n}
    \xrightarrow[n\to\infty]{}
    0
    \;\text{and}\;
    \frac{n}{m_n^{5/2}}
    \xrightarrow[n\to\infty]{}
    0.
$
The price of this stronger growth condition is a slower MISE convergence
rate than the optimal order $n^{-4/5}$.

\item
The condition
$
    \frac{m_n}{n}
    \xrightarrow[n\to\infty]{}
    0
$
imposed in Lemma~\ref{lem:ideal-bernstein-pointwise-clt} is a convenient
sufficient condition for verifying the Lindeberg condition. Indeed, the
proof uses the bound
$
    \frac{|Y_{1,n}|}{\varepsilon\sqrt n\,s_n}
    =
    O\left(
        \left(\frac{m_n}{n^2}\right)^{1/4}
    \right).
$
Therefore, the weaker condition
$
    \frac{m_n}{n^2}
    \xrightarrow[n\to\infty]{}
    0
$
already suffices for the triangular-array Lindeberg--Feller argument.
This observation does not affect the validity of
Theorem~\ref{thm:pointwise-normality}, but it shows that a wider range of
Bernstein orders is available for the expectation-centered limit
\eqref{eq:pointwise-normality}.

\end{enumerate}
\end{rema}

\begin{rema}
The sequential asymptotic framework separates the estimation of the
individual random effects from the estimation of their common density.
For a finite observation horizon, however, the estimator
satisfies
$
    \widehat C_{i,T}-C_i
    =
    \frac{N_T^{(i)}-C_iT}{T},
$
and, conditionally on $C_i$,
$
    \operatorname{Var}
    \bigl(\widehat C_{i,T}\mid C_i\bigr)
    =
    \frac{C_i}{T}.
$
Thus, the first-stage estimation error may be substantial when $T$ is
small. 
Although the conditional variance of $\widehat C_{i,T}$ is smaller
for smaller values of $C_i$, its conditional relative variance is
$
    \operatorname{Var}
    \left(
        \left.
        \frac{\widehat C_{i,T}-C_i}{C_i}
        \,\right|\,C_i
    \right)
    =
    \frac{1}{C_iT},
$
which becomes large when $C_iT$ is small. In particular, small expected
jump counts lead to substantial relative uncertainty and a non-negligible
probability of observing no jump, since
$
    \mathbb P
    \bigl(N_T^{(i)}=0\mid C_i\bigr)
    =
    e^{-C_iT}.
$

This observation does not prevent the use of the feasible estimator for a
finite observation horizon, as in the numerical study. It only indicates
that the resulting Bernstein density estimate retains some first-stage
estimation uncertainty, particularly when the expected jump counts
$C_iT$ are small.

A simultaneous asymptotic theory, in which $T$, $n$, and $m$ tend to
infinity jointly, would require suitable growth conditions ensuring that
the first-stage Poisson estimation error is negligible relative to the
Bernstein smoothing error.
\end{rema}

\section{Specialization to the Gamma jump-size family}
\label{sec:applications}

This section specializes the general framework of
Section~\ref{sec:general-framework} to a parametric Gamma family of
jump-size densities with a fixed and known shape parameter. Rather than
treating the exponential and Erlang cases separately, we use a single
Gamma family in which the exponential model corresponds to $k=1$, whereas
the Erlang model of order two corresponds to $k=2$.

The key point is that the first-stage random-effect estimator
$
    \widehat C_{i,T}
    =
    \frac{N_T^{(i)}}{T}
$
depends only on the observed jump count. Hence, it is invariant with
respect to the jump-size distribution. 
The specification of the jump-size
distribution affects the jump-size likelihood, the estimation of the common
structural parameter $\lambda$, the immigration mechanism, and the
stationary moments, but it does not alter the Bernstein estimator
$\widehat f_{C,m,n,T}$ of the random-intensity density.

Let $k\geq1$ be a fixed and known integer. Conditionally on the random
intensity $C_i$, the jumps of the $i$-th trajectory occur at rate $C_i$,
and their sizes are independent and identically distributed positive
random variables with Gamma density
\begin{equation}
    q_\lambda^{(k)}(z)
    =
    \frac{\lambda^k}{\Gamma(k)}
    z^{k-1}e^{-\lambda z}
    \mathds{1}_{(0,\infty)}(z),
    \;
    \lambda>0,
    \label{eq:gamma-k-density}
\end{equation}
where $\lambda$ is the rate parameter. Under this specification, a jump
size $Z\sim q_\lambda^{(k)}$ satisfies
$
    \mathbb E_\lambda^{(k)}(Z)
    =
    \frac{k}{\lambda},
    \;
    \operatorname{Var}_\lambda^{(k)}(Z)
    =
    \frac{k}{\lambda^2},
    \;
    \mathbb E_\lambda^{(k)}(Z^2)
    =
    \frac{k(k+1)}{\lambda^2}.
$
In particular, the Gamma density satisfies the first-moment condition
\eqref{eq:q-first-moment} and has finite moments of every order.
Conditionally on $C_i=c$, the generator of the corresponding BAJD
trajectory is
$
    \mathcal L_c^{(k)}f(y)
    =
    (a-by)f'(y)
    +
    \frac{\sigma^2}{2}y f''(y)
    +
    c\int_0^\infty
    \bigl[f(y+z)-f(y)\bigr]
    q_\lambda^{(k)}(z)\dd z.
$
Equivalently, its immigration mechanism is
$
    F_c^{(k)}(u)
    =
    au
    +
    c
    \left[
        \left(\frac{\lambda}{\lambda-u}\right)^k
        -1
    \right],
    \;
    u<\lambda.
$
In the ergodic case $b>0$, the conditional stationary mean is
$
    m_\infty^{(k)}(c)
    :=
    \lim_{t\to\infty}
    \mathbb E\bigl(Y_t^{(i)}\mid C_i=c\bigr)
    =
    \frac{a+kc/\lambda}{b}.
$
Consequently, the stationary mean at the population level is
\begin{equation}
    \mathbb E(Y_\infty)
    =
    \frac{1}{b}
    \left(
        a+\frac{k}{\lambda}\mathbb E(C_1)
    \right).
    \label{eq:population-stationary-mean-gamma-k}
\end{equation}

\subsection{Pooled estimation of the common jump-size parameter}

For each trajectory, define, for $t\geq0$,
$
    J_t^{(i)}
    :=
    \sum_{\ell=1}^{N_t^{(i)}}Z_\ell^{(i)}.
$
Thus, $J_T^{(i)}$ is the total jump amplitude observed over $[0,T]$.
We also define
$
    N_{n,t}
    :=
    \sum_{i=1}^nN_t^{(i)},
    \;
    J_{n,t}
    :=
    \sum_{i=1}^nJ_t^{(i)}.
$
In particular, $N_{n,T}$ and $J_{n,T}$ denote, respectively, the pooled
jump count and the pooled jump amplitude at the observation horizon $T$.
Under the Gamma$(k,\lambda)$ jump-size specification, the joint jump
log-likelihood is
\begin{equation}
\begin{aligned}
    \ell_{n,T}^{J,(k)}
    (C_1,\ldots,C_n,\lambda)
    &=
    \sum_{i=1}^n
    \left[
        N_T^{(i)}\log C_i-C_iT
    \right]
    +
    kN_{n,T}\log\lambda
    -
    \lambda J_{n,T}
    \\
    &\quad
    +
    (k-1)
    \sum_{i=1}^n
    \sum_{\ell=1}^{N_T^{(i)}}
    \log Z_\ell^{(i)}
    -
    N_{n,T}\log\Gamma(k).
\end{aligned}
\label{eq:gamma-k-loglik}
\end{equation}
Since $k$ is fixed and known, the last two terms depend only on the
observed jump sizes and counts and not on the unknown parameters
$(C_1,\ldots,C_n,\lambda)$.
Maximizing the component depending on $C_i$ over the extended parameter
space $\mathbb R_+$ gives
$
    \widehat C_{i,T}
    =
    \frac{N_T^{(i)}}{T},
    \;
    i=1,\ldots,n.
$
If the parameter space is restricted to $[0,1]$, the corresponding
maximizer is
$
    \widehat C_{i,T}^{\mathcal C}
    =
    \min\left\{
        \frac{N_T^{(i)}}{T},1
    \right\}.
$
The component of the log-likelihood depending on $\lambda$ is unaffected
by the restriction of the individual intensities to $[0,1]$. The
following proposition identifies the pooled Gamma-rate MLE and establishes
its consistency and asymptotic normality.

\begin{prop}
\label{prop:lambda-gamma-k}
Under Assumptions~\ref{ass:population} and~\ref{ass:support}, fix
$n\geq1$, and suppose that the jump-size density is given by
\eqref{eq:gamma-k-density}, where $k\geq1$ is a fixed and known integer
and $\lambda\in\real_{++}$.
 On the event $\{N_{n,T}>0\}$, the pooled jump-size likelihood, regarded
as a function of $\lambda\in\real_{++}$, attains its unique maximum at
$
    \widehat\lambda_{n,T}^{(k)}
    =
    \frac{kN_{n,T}}{J_{n,T}}.
$
 On the event $\{N_{n,T}=0\}$, the pooled jump-size likelihood does not
contain information about $\lambda$. Accordingly, define
\begin{equation}
    \widehat\lambda_{n,T}^{(k)}
    :=
    \begin{cases}
        \displaystyle
        \frac{kN_{n,T}}{J_{n,T}},
        & N_{n,T}>0,\\[2mm]
        0,
        & N_{n,T}=0.
    \end{cases}
    \label{eq:lambda-mle-gamma-k}
\end{equation}
One has
\begin{equation}
    \widehat\lambda_{n,T}^{(k)}
    \xrightarrow[T\to\infty]{\mathrm{a.s.}}
    \lambda.
    \label{eq:lambda-gamma-k-consistency}
\end{equation}
Moreover,
\begin{equation}
    \sqrt{N_{n,T}}
    \left(
        \widehat\lambda_{n,T}^{(k)}-\lambda
    \right)
    \xrightarrow[T\to\infty]{\mathcal D}
    \mathcal N
    \left(
        0,\frac{\lambda^2}{k}
    \right).
    \label{eq:lambda-gamma-k-clt}
\end{equation}
\end{prop}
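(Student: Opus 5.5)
The argument has three steps: maximise the $\lambda$-part of the likelihood explicitly, prove consistency with the strong law applied to the jump sizes along the random index $N_{n,T}$, and obtain the CLT through a random-index (Anscombe-type) central limit theorem combined with the delta method.

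\textbf{Step 1: the maximiser.} Isolate the $\lambda$-dependent part of \eqref{eq:gamma-k-loglik}:
\[
g(\lambda):=kN_{n,T}\log\lambda-\lambda J_{n,T}.
\]
On $\{N_{n,T}>0\}$ we have $J_{n,T}>0$ almost surely, because all jump sizes are positive. Then
\[
g'(\lambda)=\frac{kN_{n,T}}{\lambda}-J_{n,T},\qquad g''(\lambda)=-\frac{kN_{n,T}}{\lambda^2}<0,
\]
and $g(\lambda)\to-\infty$ at both ends of $\real_{++}$. Exactly as in the proof of Proposition~\ref{prop:mle-Ci}, this gives the unique maximiser $kN_{n,T}/J_{n,T}$. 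On $\{N_{n,T}=0\}$ the function $g$ vanishes identically, which justifies the convention in \eqref{eq:lambda-mle-gamma-k}.

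\textbf{Step 2: consistency.} Relabel the pooled jump sizes as a single i.i.d.\ sequence. Conditionally on $(C_1,\ldots,C_n)$, and hence also unconditionally by Assumption~\ref{ass:population}, the sizes are i.i.d.\ Gamma$(k,\lambda)$ and independent of the counting processes. So write $J_{n,T}=\sum_{j=1}^{N_{n,T}}\xi_j$ with $(\xi_j)$ i.i.d.\ Gamma$(k,\lambda)$, independent of $N_{n,T}$. By Theorem~\ref{thm:Ci-mle-an}, $N_{n,T}/T\to\sum_{i=1}^n C_i>0$ almost surely, using $C_i>0$ from Assumption~\ref{ass:support}. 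Hence $N_{n,T}\to\infty$ almost surely, and in particular $\{N_{n,T}>0\}$ holds eventually. The SLLN along the random index then gives
\[
\frac{J_{n,T}}{N_{n,T}}\to \frac{k}{\lambda}\quad\text{a.s.},
\]
and the continuous map $x\mapsto k/x$ yields \eqref{eq:lambda-gamma-k-consistency}.

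\textbf{Step 3: asymptotic normality.} Write
\[
\sqrt{N_{n,T}}\left(\frac{J_{n,T}}{N_{n,T}}-\frac{k}{\lambda}\right)=\frac{\sum_{j\le N_{n,T}}(\xi_j-k/\lambda)}{\sqrt{N_{n,T}}}.
\]
Because $(\xi_j)$ is independent of $N_{n,T}$, condition on $N_{n,T}=r$. The conditional law of this quantity is that of the normalised partial sum $S_r/\sqrt r$, which converges to $\mathcal N(0,k/\lambda^2)$ by the classical CLT. Since $N_{n,T}\to\infty$ almost surely, dominated convergence applied to characteristic functions gives
\[
\E\bigl[\phi_{N_{n,T}}(u)\bigr]\to e^{-u^2k/(2\lambda^2)},
\]
where $\phi_r(u)$ denotes the characteristic function of $S_r/\sqrt r$. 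This is the unconditional convergence. Alternatively, Anscombe's theorem can be invoked with normalisation $N_{n,T}/T\to\sum_i C_i$. The delta method with $h(x)=k/x$ and $h'(k/\lambda)=-\lambda^2/k$ then gives asymptotic variance
\[
\frac{\lambda^4}{k^2}\cdot\frac{k}{\lambda^2}=\frac{\lambda^2}{k}.
\]
The event $\{N_{n,T}=0\}$ has vanishing probability, so it does not affect the limit, and \eqref{eq:lambda-gamma-k-clt} follows.

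\textbf{Main obstacle.} The delicate point is Step 3: the index $N_{n,T}$ is random and its growth rate $\sum_i C_i$ is random. This is resolved by the independence of the jump sizes from the counting processes, which reduces the random-index CLT to the conditioning argument above and avoids needing the stable-convergence machinery of Lemma~\ref{lem:stable-poisson-random-intensity}.
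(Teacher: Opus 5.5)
Your proof is correct, and Step 1 matches the paper, but Steps 2 and 3 take a different route.

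\textbf{The paper's route.} For consistency, the paper proves two separate strong laws: $N_{n,T}/T\to S_C$ for the conditionally Poisson aggregate count, and $J_{n,T}/T\to (k/\lambda)S_C$ for the compound Poisson aggregate amplitude. It then takes their ratio. For the CLT, it works with the centered jump-size martingale $R^{(k)}_t=J_{n,t}-(k/\lambda)N_{n,t}$ in the filtration enlarged by $\mathcal G_C$. It computes $\langle R^{(k)}\rangle_T=(k/\lambda^2)S_CT$, checks the Lindeberg condition, and applies Touati's GCLT (Theorem~\ref{GCLTtouati}). This gives the mixed-normal stable limit $R^{(k)}_T/\sqrt T\to \sqrt{kS_C}\,Z_R/\lambda$. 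The stable Slutsky lemma with $\sqrt{T/N_{n,T}}\to S_C^{-1/2}$ then removes the mixing. Finally, the exact identity $\widehat\lambda^{(k)}_{n,T}-\lambda=-\lambda R^{(k)}_T/J_{n,T}$ plays the role of your delta method.

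\textbf{Your route.} You use the fact that the chronologically ordered pooled marks are i.i.d.\ Gamma$(k,\lambda)$ and independent of the counting processes. This is legitimate: the ordering is a function of the counts alone, and the counts are independent of the jump-size sequences. That reduces everything to i.i.d.\ theory with an independent random index. Conditioning on $N_{n,T}=r$ and applying dominated convergence to $\phi_{N_{n,T}}(u)$ is a clean, rigorous substitute for the martingale machinery. It also explains why the self-normalized limit is not mixed: the convergence holds conditionally on the whole count process.

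\textbf{What each buys.} The paper's route costs more but yields the $\mathcal G_C$-stable limit at rate $\sqrt T$ and the martingale $R^{(k)}$ itself. Both are reused directly in Proposition~\ref{prop:orthogonality-asymptotic-independence}, through the orthogonality $\langle M_i,R^{(k)}\rangle_T=0$ and the joint convergence with the $\widehat C_{i,T}$. Your conditioning argument could be extended to that joint result as well, since the $\widehat C_{i,T}$ are functionals of the counts.

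\textbf{Minor caveat.} Your parenthetical appeal to Anscombe's theorem would need R\'enyi's version, because the limit $\sum_i C_i$ of $N_{n,T}/T$ is random rather than a constant. The conditioning argument you actually give does not depend on it.
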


\begin{proof}[\bf Proof of Proposition~\ref{prop:lambda-gamma-k}]

On the event $\{N_{n,T}>0\}$, one also has $J_{n,T}>0$, since all jump
sizes are strictly positive. 
It follows from \eqref{eq:gamma-k-loglik} that the only terms depending
on $\lambda$ are
$
    \ell_{n,T}^{\lambda,(k)}(\lambda)
    :=
    kN_{n,T}\log\lambda
    -
    \lambda J_{n,T},
    \;
    \lambda\in\real_{++}.
$
Its first two derivatives are
$
    \frac{\partial}{\partial\lambda}
    \ell_{n,T}^{\lambda,(k)}(\lambda)
    =
    \frac{kN_{n,T}}{\lambda}-J_{n,T}
$
and
$
    \frac{\partial^2}{\partial\lambda^2}
    \ell_{n,T}^{\lambda,(k)}(\lambda)
    =
    -\frac{kN_{n,T}}{\lambda^2}
    <0.
$
Thus, on the event $\{N_{n,T}>0\}$, the log-likelihood is strictly
concave on $\real_{++}$, and its unique maximizer is
$
    \widehat\lambda_{n,T}^{(k)}
    =
    \frac{kN_{n,T}}{J_{n,T}}.
$
On the event $\{N_{n,T}=0\}$, one has $J_{n,T}=0$, and the jump-size
likelihood is independent of $\lambda$. This proves the finite-sample
claims.
Set
$
    \mathcal G_C
    :=
    \sigma(C_1,\ldots,C_n),
    \;
    S_C
    :=
    \sum_{i=1}^nC_i.
$
By Assumption~\ref{ass:support},
$
    S_C>0
    \;\text{a.s.}
$
Conditionally on $\mathcal G_C$, the counting processes
$N^{(1)},\ldots,N^{(n)}$ are independent Poisson processes with
respective intensities $C_1,\ldots,C_n$. Consequently, the aggregate
counting process
$
    N_n=(N_{n,t})_{t\geq0}
$
is conditionally a Poisson process with intensity $S_C$. The strong law
of large numbers for Poisson processes therefore gives
$
    \mathbb P\left(
        \left.
        \lim_{T\to\infty}
        \frac{N_{n,T}}{T}
        =
        S_C
        \,\right|\,
        \mathcal G_C
    \right)
    =
    1
    \;\text{a.s.}
$
Taking expectations and using the tower property yields
\begin{equation}
    \frac{N_{n,T}}{T}
    \xrightarrow[T\to\infty]{\mathrm{a.s.}}
    S_C.
    \label{eq:aggregate-count-slln}
\end{equation}
In particular,
$
    N_{n,T}
    \xrightarrow[T\to\infty]{\mathrm{a.s.}}
    \infty.
$
Consequently, almost surely, $N_{n,T}>0$ and $J_{n,T}>0$ for all
sufficiently large $T$.
Under the Gamma$(k,\lambda)$ specification,
$
    \mathbb E_\lambda^{(k)}(Z)
    =
    \frac{k}{\lambda}.
$
The strong law of large numbers for compound Poisson processes yields
\begin{equation}
    \frac{J_{n,T}}{T}
    \xrightarrow[T\to\infty]{\mathrm{a.s.}}
    \frac{k}{\lambda}S_C.
    \label{eq:aggregate-jump-slln}
\end{equation}
Since $S_C>0$ almost surely, it follows from
\eqref{eq:aggregate-count-slln} and
\eqref{eq:aggregate-jump-slln} that
$
    \widehat\lambda_{n,T}^{(k)}
    =
    k\frac{N_{n,T}/T}{J_{n,T}/T}\xrightarrow[T\to\infty]{\mathrm{a.s.}}
    k\frac{S_C}{(k/\lambda)S_C}
    =
    \lambda.
$
This proves \eqref{eq:lambda-gamma-k-consistency}.
We now establish the central limit theorem by applying the martingale GCLT
stated in Theorem~\ref{GCLTtouati} of Appendix~B. Let
$\mu_i(\dd t,\dd z)$ denote the jump measure of the $i$-th
trajectory. With respect to the filtration initially enlarged by
$\mathcal G_C$, its compensator is
$
    \nu_i(\dd t,\dd z)
    =
    C_iq_\lambda^{(k)}(z)\dd t\,\dd z.
$
Define the process $R^{(k)}=(R_t^{(k)})_{t\geq0}$ by
\begin{equation}
    R_t^{(k)}
    :=
    \sum_{i=1}^n
    \int_0^t\int_0^\infty
    \left(
        z-\frac{k}{\lambda}
    \right)
    \bigl(\mu_i-\nu_i\bigr)(\dd s,\dd z),
    \;
    t\geq0.
    \label{eq:mark-martingale-gamma-k}
\end{equation}
This process is a purely discontinuous, square-integrable,
quasi-left-continuous martingale. Indeed, its jump times are totally
inaccessible. Since
$
    \int_0^\infty
    \left(
        z-\frac{k}{\lambda}
    \right)
    q_\lambda^{(k)}(z)\dd z
    =
    0,
$
one has
$
    R_t^{(k)}
    =
    J_{n,t}
    -
    \frac{k}{\lambda}N_{n,t},
    \;
    t\geq0.
$
Its predictable quadratic variation is
\begin{align}
    \left\langle R^{(k)}\right\rangle_T
    &=
    \sum_{i=1}^n
    C_iT
    \int_0^\infty
    \left(
        z-\frac{k}{\lambda}
    \right)^2
    q_\lambda^{(k)}(z)\dd z
    =
    \frac{k}{\lambda^2}S_CT.
    \label{eq:mark-martingale-bracket}
\end{align}
Therefore,
$
    \frac{1}{T}
    \left\langle R^{(k)}\right\rangle_T
    =
    \frac{k}{\lambda^2}S_C.
$
For every $\varepsilon>0$, the Lindeberg term associated with the
normalization $\sqrt T$ is
\begin{align*}
    \Delta_T^\varepsilon
    &:=
    \frac{1}{T}
    \sum_{i=1}^n
    \int_0^T\int_0^\infty
    \left(
        z-\frac{k}{\lambda}
    \right)^2
    \ind_{
        \{
            |z-k/\lambda|>\varepsilon\sqrt T
        \}
    }
    \nu_i(\dd t,\dd z)\\
    &=
    S_C
    \int_0^\infty
    \left(
        z-\frac{k}{\lambda}
    \right)^2
    \ind_{
        \{
            |z-k/\lambda|>\varepsilon\sqrt T
        \}
    }
    q_\lambda^{(k)}(z)\dd z.
\end{align*}
Since the Gamma distribution has a finite second moment, the dominated
convergence theorem gives
$
    \Delta_T^\varepsilon
    \xrightarrow[T\to\infty]{\mathrm{a.s.}}
    0.
$
Thus, the bracket and Lindeberg conditions of
Theorem~\ref{GCLTtouati} are satisfied. Consequently, stably with respect
to $\mathcal G_C$,
$
    \frac{R_T^{(k)}}{\sqrt T}
    \xrightarrow[T\to\infty]{\mathcal D,\mathrm{st}}
    \frac{\sqrt{kS_C}}{\lambda}Z_R,
$
where $Z_R\sim\mathcal N(0,1)$ is defined on an extension of the original
probability space and is independent of $\mathcal G_C$.
By \eqref{eq:aggregate-count-slln},
$
    \sqrt{\frac{T}{N_{n,T}}}
    \xrightarrow[T\to\infty]{\mathrm{a.s.}}
    \frac{1}{\sqrt{S_C}}.
$
Hence, the stable version of Slutsky's theorem gives
\begin{equation}
    \frac{R_T^{(k)}}{\sqrt{N_{n,T}}}
    =
    \frac{R_T^{(k)}}{\sqrt T}
    \sqrt{\frac{T}{N_{n,T}}}
    \xrightarrow[T\to\infty]{\mathcal D,\mathrm{st}}
    \frac{\sqrt{k}}{\lambda}Z_R.
    \label{eq:mark-martingale-clt-N}
\end{equation}
In particular,
$
    \frac{R_T^{(k)}}{\sqrt{N_{n,T}}}
    \xrightarrow[T\to\infty]{\mathcal D}
    \mathcal N
    \left(
        0,\frac{k}{\lambda^2}
    \right).
$
Almost surely, for all sufficiently large $T$,
$
    \widehat\lambda_{n,T}^{(k)}-\lambda
    =
    \frac{kN_{n,T}}{J_{n,T}}-\lambda=
    -\lambda\frac{R_T^{(k)}}{J_{n,T}}.
$
Therefore,
$
    \sqrt{N_{n,T}}
    \left(
        \widehat\lambda_{n,T}^{(k)}-\lambda
    \right)
    =
    -\lambda
    \frac{N_{n,T}}{J_{n,T}}
    \frac{R_T^{(k)}}{\sqrt{N_{n,T}}}.
$
Furthermore, by
\eqref{eq:aggregate-count-slln}--\eqref{eq:aggregate-jump-slln},
$
    \frac{N_{n,T}}{J_{n,T}}
    \xrightarrow[T\to\infty]{\mathrm{a.s.}}
    \frac{\lambda}{k}.
$
Slutsky's theorem and \eqref{eq:mark-martingale-clt-N} then yield
$
    \sqrt{N_{n,T}}
    \left(
        \widehat\lambda_{n,T}^{(k)}-\lambda
    \right)
    \xrightarrow[T\to\infty]{\mathcal D}
    -\frac{\lambda}{\sqrt{k}}Z_R.
$
Since
$
    -\frac{\lambda}{\sqrt{k}}Z_R
    \sim
    \mathcal N
    \left(
        0,\frac{\lambda^2}{k}
    \right),
$
this proves \eqref{eq:lambda-gamma-k-clt}.
\end{proof}

\begin{rema}
The unified Gamma$(k,\lambda)$ specification leaves the first-stage
estimator
$
    \widehat C_{i,T}
    =
    \frac{N_T^{(i)}}{T}
$
unchanged. Consequently, the projected observations
$\widehat C_{i,T}^{\mathcal C}$ and the feasible Bernstein estimator
$\widehat f_{C,m,n,T}$ do not depend on either the shape parameter $k$ or
the common rate parameter $\lambda$.
Under Assumptions~\ref{ass:population} and~\ref{ass:support}, the
first-stage convergence result in
Theorem~\ref{thm:first-stage-reduction}, the bias--variance--MISE
expansion in Theorem~\ref{thm:bias-var-mise}, and the pointwise
asymptotic normality result in Theorem~\ref{thm:pointwise-normality}
apply without modification to $\widehat f_{C,m,n,T}$ under the
Gamma$(k,\lambda)$ jump-size family
\eqref{eq:gamma-k-density}.
Indeed, the estimator $\widehat C_{i,T}$ depends only on the counting
process $N^{(i)}$. Hence, $\widehat F_{C,n,T}$ and
$\widehat f_{C,m,n,T}$ are constructed in the same way,
regardless of the jump-size density $q_\lambda^{(k)}$. 
This invariance makes explicit the separation between frequency
heterogeneity and jump-size calibration. The density $f_C$ is estimated
from the individual jump counts, whereas the common rate parameter
$\lambda$ is estimated from the pooled jump counts and jump amplitudes
through \eqref{eq:lambda-mle-gamma-k}.
\end{rema}

\subsection{Composite estimation of the stationary mean}
\label{subsec:composite-stationary-mean}
Equation~\eqref{eq:population-stationary-mean-gamma-k} suggests a
composite plug-in estimator of the stationary population mean. Throughout
this subsection, the common parameters $a$ and $b$ are treated as known.
In practice, they must therefore be supplied by an external calibration,
for example, through a likelihood-based or moment-based procedure applied
to a pilot trajectory, a topic beyond the scope of this paper.
Define
$
    \widehat\mu_{C,m,n,T}
    :=
    \int_0^1
    c\,\widehat f_{C,m,n,T}(c)\dd c.
    \label{eq:estimated-mean-random-C}
$
On the event $\{\widehat\lambda_{n,T}^{(k)}>0\}$, define
\begin{equation}
    \widehat{\mathbb E}_{n,T}^{(k)}(Y_\infty)
    :=
    \frac{1}{b}
    \left(
        a+
        \frac{k}{\widehat\lambda_{n,T}^{(k)}}
        \widehat\mu_{C,m,n,T}
    \right).
    \label{eq:composite-stationary-mean-estimator}
\end{equation}
For definiteness, its value may be set equal to zero on
$\{\widehat\lambda_{n,T}^{(k)}=0\}$. This arbitrary value does not affect 
the asymptotic result below.

The following proposition establishes the consistency of the proposed
estimator of the population stationary mean.
We consider the sequential asymptotic regime in which $T\to\infty$ first,
for fixed $n$ and $m$, and then $n\to\infty$ with $m=m_n$ satisfying
\begin{equation}
    m_n
    \xrightarrow[n\to\infty]{}
    \infty,
\;
    \frac{\sqrt{m_n}}{n}
    \xrightarrow[n\to\infty]{}
    0.
    \label{eq:stationary-mean-asymptotic-regime}
\end{equation}

\begin{prop}
\label{prop:stationary-mean-consistency}

Under Assumptions~\ref{ass:population} and~\ref{ass:support}, suppose that
$a\in\real_{++}$ and $b\in\real_{++}$ are known. Let the jump-size
density be the Gamma$(k,\lambda)$ density
\eqref{eq:gamma-k-density}, where $k\geq1$ is a fixed and known integer
and $\lambda\in\real_{++}$.
For every fixed $n\geq1$,
$\widehat\lambda_{n,T}^{(k)}>0$ almost surely for all sufficiently large
$T$. Moreover, under the sequential asymptotic regime described above
and the growth conditions
\eqref{eq:stationary-mean-asymptotic-regime}, for every $\varepsilon>0$,
\begin{equation}
    \lim_{n\to\infty}
    \left[
        \lim_{T\to\infty}
        \mathbb P
        \left(
            \left|
                \widehat{\mathbb E}_{n,T}^{(k)}(Y_\infty)
                -
                \mathbb E(Y_\infty)
            \right|
            >
            \varepsilon
        \right)
    \right]
    =
    0,
    \label{eq:composite-stationary-mean-consistency}
\end{equation}
where
$
    \mathbb E(Y_\infty)
    =
    \frac{1}{b}
    \left(
        a+\frac{k}{\lambda}\mathbb E(C_1)
    \right).
$
\end{prop}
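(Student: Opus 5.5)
The plan is to split the estimator into its two random ingredients, $\widehat\lambda_{n,T}^{(k)}$ and $\widehat\mu_{C,m,n,T}$, and to handle each in the sequential regime. For the first claim, I will use the strong law \eqref{eq:aggregate-count-slln} from the proof of Proposition~\ref{prop:lambda-gamma-k}. Since $S_C>0$ almost surely, it gives $N_{n,T}\to\infty$ almost surely. Hence $N_{n,T}>0$, and therefore $\widehat\lambda_{n,T}^{(k)}=kN_{n,T}/J_{n,T}>0$, for all sufficiently large $T$, almost surely. Together with \eqref{eq:lambda-gamma-k-consistency}, this shows that $k/\widehat\lambda_{n,T}^{(k)}\to k/\lambda$ almost surely as $T\to\infty$, for each fixed $n$. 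The arbitrary value on $\{\widehat\lambda_{n,T}^{(k)}=0\}$ is eventually irrelevant.

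For the mean functional, I will use the bound $0\le \widehat f_{C,m,n,T}\le m$ and the uniform convergence \eqref{eq:bernstein-first-stage} of Theorem~\ref{thm:first-stage-reduction}. Together they give, for fixed $m,n$,
\[
\Bigl|\widehat\mu_{C,m,n,T}-\widetilde\mu_{C,m,n}\Bigr|\le \sup_{u\in[0,1]}\bigl|\widehat f_{C,m,n,T}(u)-\widetilde f_{C,m,n}(u)\bigr|\xrightarrow[T\to\infty]{\mathrm{a.s.}}0,
\qquad \widetilde\mu_{C,m,n}:=\int_0^1 c\,\widetilde f_{C,m,n}(c)\dd c .
\]
Consequently $\widehat{\mathbb E}_{n,T}^{(k)}(Y_\infty)\to \widetilde{\mathbb E}_n:=b^{-1}\bigl(a+\tfrac{k}{\lambda}\widetilde\mu_{C,m_n,n}\bigr)$ almost surely, hence in probability. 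I will then apply the portmanteau bound at continuity points: for all but countably many $\varepsilon$,
\[
\lim_{T\to\infty}\mathbb P\bigl(|\widehat{\mathbb E}_{n,T}^{(k)}(Y_\infty)-\mathbb E(Y_\infty)|>\varepsilon\bigr)=\mathbb P\bigl(|\widetilde{\mathbb E}_n-\mathbb E(Y_\infty)|>\varepsilon\bigr).
\]
For the remaining $\varepsilon$, I will use $\limsup_T \le \mathbb P(|\cdot|\ge\varepsilon)$ and pass to a smaller $\varepsilon'$, so the second stage reduces to showing $\widetilde\mu_{C,m_n,n}\to\mathbb E(C_1)$ in probability.

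For the second stage, I will write $|\widetilde\mu_{C,m,n}-\mathbb E(C_1)|\le \int_0^1|\widetilde f_{C,m,n}(c)-f_C(c)|\dd c$, since $c\in[0,1]$. By Cauchy--Schwarz on $[0,1]$, its expectation is bounded by $\operatorname{MISE}(\widetilde f_{C,m,n})^{1/2}$. In the proof of Theorem~\ref{thm:bias-var-mise}, Lemmas~\ref{lem:ideal-bernstein-bias} and~\ref{lem:ideal-bernstein-variance} give $\operatorname{MISE}(\widetilde f_{C,m,n})=A_{f_C}m^{1/2}/n+B_{f_C}m^{-2}+o(\cdot)$, which tends to zero under \eqref{eq:stationary-mean-asymptotic-regime}. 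Markov's inequality then yields the convergence in probability, and therefore \eqref{eq:composite-stationary-mean-consistency}, because $\mathbb E(Y_\infty)=b^{-1}(a+\tfrac{k}{\lambda}\mathbb E(C_1))$ by \eqref{eq:population-stationary-mean-gamma-k}.

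An alternative for this step avoids the MISE: integrating the Bernstein basis gives $\int_0^1 c\,p_k(m-1,c)\dd c=(k+1)/(m(m+1))$. It follows that $\widetilde\mu_{C,m,n}$ is an empirical average of a bounded function of $C_i$ within $O(1/m)$ of $C_i$, and the weak law of large numbers applies directly. I expect the main obstacle to be the interchange of the iterated limit with the probability, that is, the portmanteau and continuity-point bookkeeping in the first stage. The rest is a routine combination of Slutsky-type arguments and the MISE bound.
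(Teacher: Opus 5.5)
Your proposal is correct. It uses the same ingredients as the paper: the aggregate law of large numbers for $N_{n,T}$, the a.s.\ consistency of $\widehat\lambda_{n,T}^{(k)}$, and a Cauchy--Schwarz/Markov bound driven by a MISE. What differs is how you order the two limits.

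\textbf{How the two routes differ.} The paper's proof works with the feasible estimator throughout. It bounds $\mathbb P(|\widehat\mu_{C,m,n,T}-\mathbb E(C_1)|>\varepsilon)$ by $\operatorname{MISE}(\widehat f_{C,m,n,T})/(3\varepsilon^2)$ and lets Theorem~\ref{thm:bias-var-mise}, through \eqref{eq:mise-consistency}, absorb the $T\to\infty$ step. It then combines the result with $\widehat\lambda_{n,T}^{(k)}\to\lambda$ by a brief appeal to the continuous mapping theorem. That appeal mixes an a.s.\ limit in $T$ with an iterated limit in probability, so it is somewhat loose. You instead send $T\to\infty$ pathwise for the whole plug-in estimator, using \eqref{eq:bernstein-first-stage} and the consistency of $\widehat\lambda_{n,T}^{(k)}$. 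The true $\lambda$ is therefore already in place before $n\to\infty$, and the second stage only concerns the ideal functional $\widetilde\mu_{C,m_n,n}$. This makes the combination step fully explicit.

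\textbf{Simplifying your first stage.} The continuity-point bookkeeping you expect to be the main obstacle is unnecessary. If $X_T\to X$ a.s., then
$\mathbb P(|X_T-e|>\varepsilon)\le \mathbb P(|X_T-X|>\varepsilon/2)+\mathbb P(|X-e|>\varepsilon/2)$.
This bounds $\limsup_T$ for every $\varepsilon$ at once. Like the paper, you only control $\limsup_T$, not the existence of $\lim_T$.

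\textbf{Your alternative route.} The argument via $\int_0^1 c\,p_k(m-1,c)\dd c=(k+1)/(m(m+1))$ is a genuine improvement. It gives $\widetilde\mu_{C,m,n}=n^{-1}\sum_i g_m(C_i)$, where $g_m(c)=(k+1)/(m+1)$ for $c\in(k/m,(k+1)/m]$. In fact $|g_m(c)-c|\le 1/m$, because $(k+1)/(m+1)\in[k/m,(k+1)/m]$. The law of large numbers then yields the result under $m_n\to\infty$ alone. So the condition $\sqrt{m_n}/n\to0$ in \eqref{eq:stationary-mean-asymptotic-regime} is not needed for consistency of the mean, whereas the paper's MISE route requires it.
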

\begin{proof}[\bf Proof of Proposition~\ref{prop:stationary-mean-consistency}]
Fix $n,m\geq1$. With
$
    S_C
    :=
    \sum_{i=1}^nC_i,
$
Assumption~\ref{ass:support} implies that $S_C>0$ almost surely. By
\eqref{eq:aggregate-count-slln},
$
    \frac{N_{n,T}}{T}
    \xrightarrow[T\to\infty]{\mathrm{a.s.}}
    S_C>0.
$
Consequently, almost surely, $N_{n,T}>0$ for all sufficiently large $T$.
Since the Gamma jump-sizes are strictly positive, this also implies that
$J_{n,T}>0$. Hence,
$    \widehat\lambda_{n,T}^{(k)}
    =
    \frac{kN_{n,T}}{J_{n,T}}
    >0
$
almost surely for all sufficiently large $T$. Moreover,
Proposition~\ref{prop:lambda-gamma-k} gives
\begin{equation}
    \widehat\lambda_{n,T}^{(k)}
    \xrightarrow[T\to\infty]{\mathrm{a.s.}}
    \lambda.
    \label{eq:lambda-consistency-stationary-mean-proof}
\end{equation}
Thus, the arbitrary value assigned to
$\widehat{\mathbb E}_{n,T}^{(k)}(Y_\infty)$ on
$\{\widehat\lambda_{n,T}^{(k)}=0\}$ has no effect on its asymptotic
behavior. Since
$
    \mathbb E(C_1)
    =
    \int_0^1c f_C(c)\dd c,
$
the Cauchy--Schwarz inequality yields
\begin{align*}
    \left|
        \widehat\mu_{C,m,n,T}
        -
        \mathbb E(C_1)
    \right|
    &\leq
    \left(
        \int_0^1c^2\dd c
    \right)^{1/2}
    \left(
        \int_0^1
        \bigl(
            \widehat f_{C,m,n,T}(c)-f_C(c)
        \bigr)^2
        \dd c
    \right)^{1/2}\\
    &=
    \frac{1}{\sqrt3}
    \left(
        \int_0^1
        \bigl(
            \widehat f_{C,m,n,T}(c)-f_C(c)
        \bigr)^2
        \dd c
    \right)^{1/2}.
\end{align*}
Therefore, for every $\varepsilon>0$, Markov's inequality gives
$\mathbb P
    \left(
        \left|
            \widehat\mu_{C,m,n,T}
            -
            \mathbb E(C_1)
        \right|
        >
        \varepsilon
    \right)\leq
    \frac{1}{3\varepsilon^2}
    \operatorname{MISE}
    \bigl(\widehat f_{C,m,n,T}\bigr).
$
Under the growth conditions
\eqref{eq:stationary-mean-asymptotic-regime},
Theorem~\ref{thm:bias-var-mise}, and in particular
\eqref{eq:mise-consistency}, implies that
$$
     \lim_{n\to\infty}
    \left[
        \lim_{T\to\infty}
    \mathbb P
    \left(
        \left|
            \widehat\mu_{C,m_n,n,T}
            -
            \mathbb E(C_1)
        \right|
        >
        \varepsilon
    \right)\right]
    =
    0.
$$
Thus, under the stated sequential asymptotic regime,
\begin{equation}
    \widehat\mu_{C,m_n,n,T}
    \xrightarrow{\mathbb P}
    \mathbb E(C_1).
    \label{eq:mu-C-consistency-stationary-mean-proof}
\end{equation}
Combining
\eqref{eq:lambda-consistency-stationary-mean-proof} and
\eqref{eq:mu-C-consistency-stationary-mean-proof}, and applying the
continuous mapping theorem to
$
    (x,y)
    \longmapsto
    \frac{1}{b}
    \left(
        a+\frac{ky}{x}
    \right)
$
at
$
    \bigl(\lambda,\mathbb E(C_1)\bigr)
    \in
    (0,\infty)\times[0,1],
$
we obtain
$
    \widehat{\mathbb E}_{n,T}^{(k)}(Y_\infty)
    \xrightarrow{\mathbb P}
    \frac{1}{b}
    \left(
        a+\frac{k}{\lambda}\mathbb E(C_1)
    \right)
    =
    \mathbb E(Y_\infty).
$
This proves
\eqref{eq:composite-stationary-mean-consistency}.
\end{proof}

The following proposition formalizes the first-order separation between
the frequency estimation problem, driven by the compensated counting
martingales, and the jump-size calibration problem, driven by the centered
jump-size martingale.
This separation is particularly useful for composite
estimators such as \eqref{eq:composite-stationary-mean-estimator}, in which
the frequency and amplitude contributions enter simultaneously.

\begin{prop}
\label{prop:orthogonality-asymptotic-independence}
Under Assumptions~\ref{ass:population} and~\ref{ass:support}, fix
$n\geq1$, and suppose that the jump-size density is the
Gamma$(k,\lambda)$ density \eqref{eq:gamma-k-density}, where $k\geq1$ is
a fixed and known integer and $\lambda>0$. Set
$
    \mathcal G_C
    :=
    \sigma(C_1,\ldots,C_n),
    \;
    S_C
    :=
    \sum_{i=1}^nC_i,
$
and define
$
    M_{i,t}
    :=
    N_t^{(i)}-C_it,
    \;
    t\geq0,
    \;
    i=1,\ldots,n.
$

Let $R^{(k)}$ be the centered jump-size martingale defined in
\eqref{eq:mark-martingale-gamma-k}.
Then, for every
$i\in\{1,\ldots,n\}$ and every $T\in \real_{++}$,
\begin{equation}
    \left\langle M_i,R^{(k)}\right\rangle_T
    =
    0.
    \label{eq:Mi-R-orthogonality}
\end{equation}

Moreover, as $T\to\infty$,
\begin{align}
    &\left(
        \sqrt T\bigl(\widehat C_{1,T}-C_1\bigr),
        \ldots,
        \sqrt T\bigl(\widehat C_{n,T}-C_n\bigr),
        \sqrt T
        \bigl(\widehat\lambda_{n,T}^{(k)}-\lambda\bigr)
    \right)
    \notag\\
    &\qquad
    \xrightarrow[T\to\infty]{\mathcal D,\mathrm{st}}
    \left(
        \sqrt{C_1}Z_1,
        \ldots,
        \sqrt{C_n}Z_n,
        \frac{\lambda}{\sqrt{kS_C}}Z_\lambda
    \right)
    \label{eq:joint-Chat-lambda-stable}
\end{align}
stably with respect to $\mathcal G_C$, where
$Z_1,\ldots,Z_n,Z_\lambda$ are independent standard normal random
variables, defined on an extension of the original probability space and
independent of $\mathcal G_C$.
Consequently, conditionally on the random effects, the individual
intensity estimators and the pooled Gamma-rate estimator are
asymptotically independent at first order.
\end{prop}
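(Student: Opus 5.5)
We work throughout in the filtration initially enlarged by $\mathcal G_C$. Relative to it, each $M_i$ is a compensated Poisson martingale with $\langle M_i\rangle_T=C_iT$, and $R^{(k)}$ is the mark martingale of \eqref{eq:mark-martingale-gamma-k}. We write both as stochastic integrals against the compensated jump measures:
\[
M_{i,t}=\int_0^t\int_0^\infty 1\,(\mu_i-\nu_i)(\dd s,\dd z),\qquad
R_t^{(k)}=\sum_{j=1}^n\int_0^t\int_0^\infty\Bigl(z-\frac{k}{\lambda}\Bigr)(\mu_j-\nu_j)(\dd s,\dd z).
\]
For $j\neq i$, the jump measures $\mu_i,\mu_j$ are conditionally independent Poisson random measures. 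They therefore a.s. have no common jump times, so $[M_i,\cdot]$ and the $j$-th summand of $R^{(k)}$ are orthogonal. For $j=i$, the standard formula for the predictable bracket of two integrals against the same compensated random measure gives
\[
\Bigl\langle M_i,R^{(k)}\Bigr\rangle_T=\int_0^T\int_0^\infty 1\cdot\Bigl(z-\frac{k}{\lambda}\Bigr)C_iq_\lambda^{(k)}(z)\dd z\dd t=C_iT\cdot 0=0,
\]
because the Gamma mean is $k/\lambda$. This proves \eqref{eq:Mi-R-orthogonality}.

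For the joint limit, consider the $(n+1)$-dimensional martingale $\mathbf M_t:=(M_{1,t},\ldots,M_{n,t},R_t^{(k)})$, normalized by $T^{-1/2}$. Its predictable bracket divided by $T$ is exactly the deterministic-given-$\mathcal G_C$ matrix
\[
\mathrm{diag}\Bigl(C_1,\ldots,C_n,\tfrac{k}{\lambda^2}S_C\Bigr).
\]
The diagonal entries come from $\langle M_i\rangle_T=C_iT$ and \eqref{eq:mark-martingale-bracket}. The off-diagonal entries vanish: $\langle M_i,M_j\rangle=0$ for $i\ne j$ by the same no-common-jumps argument, and $\langle M_i,R^{(k)}\rangle=0$ by the first part.

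The Lindeberg condition splits componentwise. For the $M_i$ the jumps equal $1$, so the indicator $\ind_{\{1>\varepsilon\sqrt T\}}$ vanishes for large $T$. For $R^{(k)}$ it is the term $\Delta_T^\varepsilon\to0$ already verified in the proof of Proposition~\ref{prop:lambda-gamma-k}; for the vector version, control the norm of the jump by the sum of the component bounds. The multivariate version of Theorem~\ref{GCLTtouati} (or Cramér–Wold applied to $u^\top\mathbf M$, each a scalar martingale with bracket $u^\top\Sigma u\,T$) then gives
\[
T^{-1/2}\mathbf M_T\xrightarrow{\mathcal D,\mathrm{st}}\Bigl(\sqrt{C_1}Z_1,\ldots,\sqrt{C_n}Z_n,\tfrac{\sqrt{kS_C}}{\lambda}Z_R\Bigr),
\]
stably with respect to $\mathcal G_C$, with $Z_1,\ldots,Z_n,Z_R$ i.i.d. standard normal and independent of $\mathcal G_C$. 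The independence of the $Z$'s follows because the limiting covariance is diagonal and mixed-Gaussian given $\mathcal G_C$.

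Finally, we map to the estimators. The first $n$ coordinates are exactly $\sqrt T(\widehat C_{i,T}-C_i)=M_{i,T}/\sqrt T$. For $\lambda$, on the a.s. eventual event $\{N_{n,T}>0\}$, the proof of Proposition~\ref{prop:lambda-gamma-k} gives
\[
\sqrt T\bigl(\widehat\lambda^{(k)}_{n,T}-\lambda\bigr)=-\lambda\,\frac{T}{J_{n,T}}\,\frac{R_T^{(k)}}{\sqrt T},
\]
with
\[
\frac{T}{J_{n,T}}\xrightarrow{\mathrm{a.s.}}\frac{\lambda}{kS_C}
\]
by \eqref{eq:aggregate-jump-slln}. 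The stable Slutsky lemma applies because the random factor converges a.s. to a $\mathcal G_C$-measurable limit. It yields the last coordinate
\[
-\frac{\lambda^2}{kS_C}\cdot\frac{\sqrt{kS_C}}{\lambda}Z_R=-\frac{\lambda}{\sqrt{kS_C}}Z_R,
\]
and we set $Z_\lambda:=-Z_R$.

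The main obstacle is the justification of the multivariate stable martingale CLT with random, $\mathcal G_C$-measurable limiting bracket. This includes checking that the enlarged-filtration compensators are as claimed, using Assumption~\ref{ass:population} to get independence of the $C_i$ from the driving noises, and that stability with respect to $\mathcal G_C$ is what Theorem~\ref{GCLTtouati} delivers, since $\mathcal G_C\subset\mathcal F_0$ in the enlarged filtration. The orthogonality computation itself is routine.
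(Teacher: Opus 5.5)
Your proposal is correct and follows essentially the same route as the paper. Both arguments work in the filtration enlarged by $\mathcal G_C$, compute $\langle M_i,R^{(k)}\rangle_T$ from the compensator $C_iq_\lambda^{(k)}(z)\,dz\,dt$ (it vanishes because the Gamma mean is $k/\lambda$), identify $T^{-1}\langle\mathbb M\rangle_T=\mathrm{diag}(C_1,\ldots,C_n,kS_C/\lambda^2)$, check Lindeberg, apply Theorem~\ref{GCLTtouati} with normalization $\sqrt T\,I_{n+1}$, and finish with the stable Slutsky lemma using $T/J_{n,T}\to\lambda/(kS_C)$ and $Z_\lambda:=-Z_R$.

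The differences are cosmetic. You explicitly dispose of the cross terms $j\neq i$ in $R^{(k)}$ via the absence of common jump times, which the paper leaves implicit. You verify Lindeberg componentwise or via Cram\'er--Wold rather than through $\|\xi_i(z)\|^2=1+(z-k/\lambda)^2$. The quasi-left-continuity you leave unstated follows at once from the compensator being absolutely continuous in time.
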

\begin{proof}[\bf Proof of Proposition~\ref{prop:orthogonality-asymptotic-independence}]
Consider the filtration initially enlarged by $\mathcal G_C$, defined by
$  \mathcal G_t
    :=
    \mathcal F_t\vee\mathcal G_C
    =
    \sigma\bigl(\mathcal F_t\cup\mathcal G_C\bigr),
    \;
    t\geq0.
$
Conditionally on $\mathcal G_C$, the counting processes
$N^{(1)},\ldots,N^{(n)}$ are independent Poisson processes with respective
intensities $C_1,\ldots,C_n$. Hence, the compensated counting martingales
$M_1,\ldots,M_n$ are mutually orthogonal and satisfy
$
    \left\langle M_i,M_j\right\rangle_T
    =
    \delta_{ij}C_iT,
    \;
    i,j\in\{1,\ldots,n\}.
$
The compensator of the jump measure $\mu_i$ associated with the $i$-th
trajectory is
$
    \nu_i(\dd t,\dd z)
    =
    C_iq_\lambda^{(k)}(z)\dd t\,\dd z.
$
Since $q_\lambda^{(k)}$ is a probability density, one has
$
    M_{i,t}
    =
    \int_0^t\int_0^\infty
    \bigl(\mu_i-\nu_i\bigr)(\dd s,\dd z).
$
Using the definition of $R^{(k)}$, the predictable covariation between
$M_i$ and $R^{(k)}$ is
$$
    \left\langle M_i,R^{(k)}\right\rangle_T
    =
    \int_0^T\int_0^\infty
    \left(
        z-\frac{k}{\lambda}
    \right)
    \nu_i(\dd t,\dd z)=
    C_iT
    \int_0^\infty
    \left(
        z-\frac{k}{\lambda}
    \right)
    q_\lambda^{(k)}(z)\dd z.
$$
Since
$
    \int_0^\infty
    zq_\lambda^{(k)}(z)\dd z
    =
    \frac{k}{\lambda},
$
it follows that
$
    \int_0^\infty
    \left(
        z-\frac{k}{\lambda}
    \right)
    q_\lambda^{(k)}(z)\dd z
    =
    0.
$
Therefore,
$
    \left\langle M_i,R^{(k)}\right\rangle_T
    =
    0,
$
which proves \eqref{eq:Mi-R-orthogonality}.
By \eqref{eq:mark-martingale-bracket},
$
    \left\langle R^{(k)}\right\rangle_T
    =
    \frac{k}{\lambda^2}S_CT.
$
Consequently, the predictable covariance matrix of the
$(n+1)$-dimensional martingale
$
    \mathbb M_t
    :=
    \left(
        M_{1,t},\ldots,M_{n,t},R_t^{(k)}
    \right)^\top
$
satisfies
$
    \frac{1}{T}
    \left\langle\mathbb M\right\rangle_T
    =
    \operatorname{diag}
    \left(
        C_1,\ldots,C_n,
        \frac{k}{\lambda^2}S_C
    \right).
$

We next verify the Lindeberg condition. At a jump time of the counting
process $N^{(i)}$, with corresponding jump-size $z$, the jump of
$\mathbb M$ is
$
    \xi_i(z)
    :=
    \left(
        0,\ldots,0,1,0,\ldots,0,
        z-\frac{k}{\lambda}
    \right)^\top,
$
where the entry $1$ occupies the $i$-th coordinate. Hence,
$
    \|\xi_i(z)\|^2
    =
    1+
    \left(
        z-\frac{k}{\lambda}
    \right)^2.
$
For every $\varepsilon>0$, the Lindeberg term associated with the
normalization $\sqrt T\,I_{n+1}$ is therefore
\begin{align*}
    \Delta_T^\varepsilon
    &:=
    \frac{1}{T}
    \sum_{i=1}^n
    \int_0^T\int_0^\infty
    \|\xi_i(z)\|^2
    \ind_{\{\|\xi_i(z)\|>\varepsilon\sqrt T\}}
    \nu_i(\dd t,\dd z)\\
    &=
    S_C
    \int_0^\infty
    \left[
        1+
        \left(
            z-\frac{k}{\lambda}
        \right)^2
    \right]
    \ind_{
        \left\{
            1+(z-k/\lambda)^2>\varepsilon^2T
        \right\}
    }
    q_\lambda^{(k)}(z)\dd z.
\end{align*}
The Gamma distribution has a finite second moment, and hence
$
    \int_0^\infty
    \left[
        1+
        \left(
            z-\frac{k}{\lambda}
        \right)^2
    \right]
    q_\lambda^{(k)}(z)\dd z
    <
    \infty.
$
Moreover, the indicator in the preceding integral converges pointwise to
zero as $T\to\infty$. The dominated convergence theorem therefore gives
$
    \Delta_T^\varepsilon
    \xrightarrow[T\to\infty]{\mathrm{a.s.}}
    0.
$
The martingale $\mathbb M$ is purely discontinuous and square-integrable.
Moreover, its jump times are totally inaccessible. Hence, for every
predictable stopping time $\tau$,
$
    \Delta\mathbb M_\tau=0
    \;
    \text{a.s. on }\{\tau<\infty\},
$
so that $\mathbb M$ is quasi-left-continuous.
Thus, the bracket and Lindeberg conditions of
Theorem~\ref{GCLTtouati} in Appendix~B are satisfied. Applying this
theorem with normalization $\sqrt T\,I_{n+1}$ yields, stably with respect
to $\mathcal G_C$,
\begin{align}
    &\left(
        \frac{M_{1,T}}{\sqrt T},
        \ldots,
        \frac{M_{n,T}}{\sqrt T},
        \frac{R_T^{(k)}}{\sqrt T}
    \right)
    \notag\\
    &\qquad
    \xrightarrow[T\to\infty]{\mathcal D,\mathrm{st}}
    \left(
        \sqrt{C_1}Z_1,
        \ldots,
        \sqrt{C_n}Z_n,
        \frac{\sqrt{kS_C}}{\lambda}Z_R
    \right),
    \label{eq:joint-martingale-stable}
\end{align}
where $Z_1,\ldots,Z_n,Z_R$ are independent standard normal random
variables, defined on an extension of the original probability space and
independent of $\mathcal G_C$.
For every $i\in\{1,\ldots,n\}$,
$
    \sqrt T
    \bigl(
        \widehat C_{i,T}-C_i
    \bigr)
    =
    \frac{M_{i,T}}{\sqrt T}.
$
Furthermore, by \eqref{eq:aggregate-count-slln},
$
    N_{n,T}
    \xrightarrow[T\to\infty]{\mathrm{a.s.}}
    \infty.
$
Thus, almost surely, $N_{n,T}>0$ and $J_{n,T}>0$ for all sufficiently
large $T$. For such $T$,
$
    \widehat\lambda_{n,T}^{(k)}-\lambda
    =
    -\lambda\frac{R_T^{(k)}}{J_{n,T}},
$
and hence
$
    \sqrt T
    \bigl(
        \widehat\lambda_{n,T}^{(k)}-\lambda
    \bigr)
    =
    -\lambda
    \frac{T}{J_{n,T}}
    \frac{R_T^{(k)}}{\sqrt T}.
$
By \eqref{eq:aggregate-jump-slln},
$
    \frac{J_{n,T}}{T}
    \xrightarrow[T\to\infty]{\mathrm{a.s.}}
    \frac{k}{\lambda}S_C.
$
Since $S_C>0$ almost surely, it follows that
$
    \frac{T}{J_{n,T}}
    \xrightarrow[T\to\infty]{\mathrm{a.s.}}
    \frac{\lambda}{kS_C}.
$
The stable version of Slutsky's theorem, applied jointly with
\eqref{eq:joint-martingale-stable}, now gives
\begin{align*}
    &\left(
        \sqrt T\bigl(\widehat C_{1,T}-C_1\bigr),
        \ldots,
        \sqrt T\bigl(\widehat C_{n,T}-C_n\bigr),
        \sqrt T
        \bigl(\widehat\lambda_{n,T}^{(k)}-\lambda\bigr)
    \right)\\
    &\qquad
    \xrightarrow[T\to\infty]{\mathcal D,\mathrm{st}}
    \left(
        \sqrt{C_1}Z_1,
        \ldots,
        \sqrt{C_n}Z_n,
        -\frac{\lambda}{\sqrt{kS_C}}Z_R
    \right).
\end{align*}
Setting
$
    Z_\lambda
    :=
    -Z_R,
$
we obtain \eqref{eq:joint-Chat-lambda-stable}. Since
$Z_\lambda=-Z_R$, the random variables
$Z_1,\ldots,Z_n,Z_\lambda$ remain independent standard normal random
variables and are independent of $\mathcal G_C$.
Finally, conditionally on $\mathcal G_C$, the limiting vector is Gaussian
with diagonal covariance matrix
$
    \operatorname{diag}
    \left(
        C_1,\ldots,C_n,
        \frac{\lambda^2}{kS_C}
    \right).
$
Its components are therefore conditionally independent. This proves the
conditional first-order asymptotic independence.
\end{proof}

\subsection{Special cases and a Beta benchmark}
\label{subsec:special-cases-beta-benchmark}

\paragraph{Exponential and Erlang special cases.}

The following corollary records the exponential and Erlang special cases
of the Gamma$(k,\lambda)$ family. For each $k\in\{1,2\}$, the density
$q_\lambda^{(k)}$ defined in \eqref{eq:gamma-k-density} is supported on
$(0,\infty)$ and satisfies the first-moment condition
\eqref{eq:q-first-moment}. The two cases follow directly by setting $k=1$
and $k=2$, respectively, in the general formulas established above.

\begin{coro}
\label{cor:exponential-jumps}
Under Assumptions~\ref{ass:population} and~\ref{ass:support}, fix
$n\geq1$, and let
$ a\in\real_{++},
    \;
    b\in\real_{++},
    \;
    \lambda\in\real_{++}.
$
\begin{enumerate}
    \item For $k=1$, the jump-size density is the exponential density
    $
        q_\lambda^{(1)}(z)
        =
        \lambda e^{-\lambda z}
        \ind_{(0,\infty)}(z).
    $
    On the event $\{N_{n,T}>0\}$, the pooled Gamma-rate MLE is
    $        \widehat\lambda_{n,T}^{(1)}
        =
        \frac{N_{n,T}}{J_{n,T}},
    $
    and
    $        \sqrt{N_{n,T}}
        \left(
            \widehat\lambda_{n,T}^{(1)}-\lambda
        \right)
        \xrightarrow[T\to\infty]{\mathcal D}
        \mathcal N(0,\lambda^2).
    $
    Moreover, for every $i\in\{1,\ldots,n\}$ and every
    $c\in(0,1]$, conditionally on $C_i=c$, the immigration mechanism is
    $
        F_c^{(1)}(u)
        =
        au
        +
        c\left(
            \frac{\lambda}{\lambda-u}-1
        \right),
        \;
        u<\lambda,
    $
    and the conditional stationary mean is
    $
        m_\infty^{(1)}(c)
        =
        \frac{a+c/\lambda}{b}.
    $
    \item For $k=2$, the jump-size density is the Erlang density of order
    two,
    $
        q_\lambda^{(2)}(z)
        =
        \lambda^2z e^{-\lambda z}
        \ind_{(0,\infty)}(z).
    $
    On the event $\{N_{n,T}>0\}$, the pooled Gamma-rate MLE is
    $        \widehat\lambda_{n,T}^{(2)}
        =
        \frac{2N_{n,T}}{J_{n,T}},
    $
    and
    $
        \sqrt{N_{n,T}}
        \left(         \widehat\lambda_{n,T}^{(2)}-\lambda
        \right)
        \xrightarrow[T\to\infty]{\mathcal D}
        \mathcal N
        \left(
            0,\frac{\lambda^2}{2}
        \right).
    $
    Moreover, for every $i\in\{1,\ldots,n\}$ and every
    $c\in(0,1]$, conditionally on $C_i=c$, the immigration mechanism is
    $
        F_c^{(2)}(u)
        =
        au
        +
        c\left[
            \left(
                \frac{\lambda}{\lambda-u}
            \right)^2
            -
            1
        \right],
        \;
        u<\lambda,
    $
    and the conditional stationary mean is
    $
        m_\infty^{(2)}(c)
        =
        \frac{a+2c/\lambda}{b}.
    $
\end{enumerate}
\end{coro}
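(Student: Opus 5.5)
The corollary is a direct specialization, so the proof will consist of substituting $k=1$ and $k=2$ into results already established for the general Gamma$(k,\lambda)$ family. I will first check that the hypotheses of those results hold. For $k\in\{1,2\}$, $q_\lambda^{(k)}$ in \eqref{eq:gamma-k-density} is a probability density on $\real_{++}$ with $\Gamma(1)=\Gamma(2)=1$. It has mean $k/\lambda<\infty$, so \eqref{eq:q-first-moment} holds, and it has finite second moment $k(k+1)/\lambda^2$. Assumptions~\ref{ass:population} and~\ref{ass:support} are assumed verbatim. Hence Proposition~\ref{prop:lambda-gamma-k} applies with the fixed known integer $k$.

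\textbf{Estimator and its limit.} On $\{N_{n,T}>0\}$, Proposition~\ref{prop:lambda-gamma-k} gives the unique maximizer $\widehat\lambda_{n,T}^{(k)}=kN_{n,T}/J_{n,T}$. Setting $k=1$ yields $N_{n,T}/J_{n,T}$, and setting $k=2$ yields $2N_{n,T}/J_{n,T}$. The limit \eqref{eq:lambda-gamma-k-clt} has variance $\lambda^2/k$, which equals $\lambda^2$ for $k=1$ and $\lambda^2/2$ for $k=2$. No new martingale argument is needed: the Lindeberg condition was verified there using only the finite second moment, which both cases possess.

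\textbf{Immigration mechanism and stationary mean.} For the structural formulas, I will fix $c\in(0,1]$ and compute the conditional immigration mechanism $F_c(u)=au+c\int_0^\infty(e^{uz}-1)q_\lambda^{(k)}(z)\dd z$. The Gamma moment generating function $\int_0^\infty e^{uz}q_\lambda^{(k)}(z)\dd z=(\lambda/(\lambda-u))^k$ is finite exactly for $u<\lambda$. Substituting it gives $F_c^{(1)}$ and $F_c^{(2)}$ as stated. The conditional stationary mean follows from \eqref{eq:stationary-mean-generic} with $\mu_1(\lambda)=k/\lambda$, since $b>0$ places us in the ergodic case. This gives $(a+c/\lambda)/b$ and $(a+2c/\lambda)/b$, matching $m_\infty^{(k)}(c)$ as derived before Subsection~\ref{subsec:composite-stationary-mean}.

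There is no genuine obstacle here. The only point requiring care is to confirm that the general results were proved for every fixed integer $k\ge1$ and that none of them used $k\ge3$ or any property specific to other shapes. This is clear from their proofs, which use only the mean $k/\lambda$ and the variance $k/\lambda^2$.
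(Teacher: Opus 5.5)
Your proposal is correct and matches the paper's approach. The paper likewise obtains the corollary by substituting $k=1$ and $k=2$ into Proposition~\ref{prop:lambda-gamma-k}, into the Gamma immigration mechanism $F_c^{(k)}$, and into the conditional stationary mean $m_\infty^{(k)}(c)=(a+kc/\lambda)/b$, using $\Gamma(1)=\Gamma(2)=1$ and the limiting variance $\lambda^2/k$.
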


\paragraph{A Beta$(2,2)$ benchmark.}
To illustrate the Bernstein density-estimation results explicitly, we
consider a Beta$(2,2)$ distribution for the random effects on the
canonical support $[0,1]$:
\begin{equation}
    f_C(c)
    =
    6c(1-c)\mathds{1}_{[0,1]}(c).
    \label{eq:beta22-rescaled-C}
\end{equation}
The constants appearing in Theorem~\ref{thm:bias-var-mise} are therefore
$
    A_{f_C}
    :=
    \int_0^1 f_C(u)\psi(u)\dd u
    =
    \frac{3\sqrt{\pi}}{8},
$
and
$
    B_{f_C}
    :=
    \frac14
    \int_0^1
    \left[
        (1-2u)f_C'(u)
        +
        u(1-u)f_C''(u)
    \right]^2
    \dd u
    =
    \frac95.
$
Consequently, minimizing the continuous relaxation of the leading MISE
expression gives
\begin{equation}
    m_{\mathrm{opt}}
    :=
    \left(
        \frac{4B_{f_C}}{A_{f_C}}
    \right)^{2/5}
    n^{2/5}=
    \left(
        \frac{96}{5\sqrt{\pi}}
    \right)^{2/5}
    n^{2/5}.
\label{eq:mopt-beta22}
\end{equation}
Since the Bernstein order must be a positive integer, one may choose $m$
as the nearest positive integer to $m_{\mathrm{opt}}$.
This distribution provides a natural numerical benchmark. One may
simulate $n$ independent BAJD trajectories with random intensities $C_i$
drawn from \eqref{eq:beta22-rescaled-C}, compute $\widehat C_{i,T}$ and
its projected version $\widehat C_{i,T}^{\mathcal C}$, estimate the common
Gamma rate through $\widehat\lambda_{n,T}^{(k)}$, and construct
$\widehat f_{C,m,n,T}$. The resulting density estimate can then be
compared with the true density $f_C$.
The same benchmark can be implemented for both $k=1$ and $k=2$. It
therefore provides a direct numerical illustration of the three
components of the proposed procedure: estimation of the individual
random intensities, Bernstein estimation of their common density, and
pooled calibration of the Gamma jump-size parameter. This benchmark is
examined in the following numerical section.
\section{Numerical study and empirical illustration}
\label{sec:numerical-illustration}

This section provides a numerical illustration and an empirical
application of the estimation procedure developed in
Sections~\ref{sec:general-framework} and~\ref{sec:applications}.
We first simulate the jump counts and total jump amplitudes associated
with independent BAJD trajectories, compute the first-stage estimators
$\widehat C_{i,T}$, estimate the common Gamma jump-size rate
$\widehat\lambda_{n,T}^{(k)}$, and reconstruct the random-effect density
using the feasible Bernstein estimator. Representative discretized BAJD
trajectories are generated separately for graphical illustration.

We then present a reproducible empirical application based on financial
realized-volatility proxies. While the theoretical framework is formulated
under continuous-time observation, its empirical implementation requires
a discrete-time adaptation because financial data are observed at discrete
sampling times. We therefore adopt a discrete observation scheme and
replace the exact jump statistics available under continuous observation
by empirical counterparts constructed from discrete increments. Large
positive increments are identified as jump events, from which the
individual jump frequencies and the corresponding Bernstein density
estimate are computed. This adaptation makes it possible to implement the
proposed two-stage estimation procedure on a panel of positive financial
time series while accounting for the discrete nature of the available
data. The empirical implementation should therefore be interpreted as a
discrete-time adaptation of the continuous-observation methodology.

The simulated density reconstruction below uses a large panel and a long
observation horizon. This choice is motivated by the fact that the feasible
Bernstein estimator is constructed from
\[
    \widehat C_{i,T}
    =
    \frac{N_T^{(i)}}{T},
\]
rather than from the latent random effects $C_i$. A long observation
horizon reduces the first-stage estimation error and ensures that the
simulation results primarily illustrate the boundary behavior of the
Bernstein density reconstruction, rather than the noise arising from the
estimation of the individual random effects.

\subsection{Simulation design}
\label{subsec:numerical-design}

For numerical readability, we retain the canonical Beta$(2,2)$ benchmark
introduced in Subsection~\ref{subsec:special-cases-beta-benchmark}, but
apply an affine transformation mapping its support from $[0,1]$ onto an
interior interval $[c_-,c_+]\subset[0,1]$. More precisely, we generate
$
    U_i\sim\operatorname{Beta}(2,2),
    \;
    C_i=c_-+(c_+-c_-)U_i,
    \;
    i=1,\ldots,n.
$
Thus, $U_i$ follows the canonical Beta$(2,2)$ distribution, whereas $C_i$
follows its affine rescaling to $[c_-,c_+]$. Equivalently,
$
    U_i
    =
    \frac{C_i-c_-}{c_+-c_-}
    \in[0,1].
$
The corresponding density of $C_i$ is
$
    f_C(c)
    =
    \frac{6(c-c_-)(c_+-c)}
         {(c_+-c_-)^3}
    \mathds{1}_{[c_-,c_+]}(c).
$
For each $k\in\{1,2\}$, the jump sizes follow the Gamma$(k,\lambda)$
density \eqref{eq:gamma-k-density}, while, conditionally on $C_i$, the
counting process $N^{(i)}$ is a Poisson process with intensity $C_i$.
Since the continuous-observation framework makes the jump times and jump
sizes exactly observable,
the estimation experiment can be conducted directly at the level of the
sufficient jump statistics. Conditionally on $C_i$,
$
    N_T^{(i)}
    \sim
    \operatorname{Poisson}(C_iT).
$
Moreover, conditionally on $N_T^{(i)}$,
$
    J_T^{(i)}
    \sim
    \operatorname{Gamma}
    \left(
        kN_T^{(i)},\lambda
    \right),
$
where $\lambda$ is the rate parameter and $J_T^{(i)}=0$ when
$N_T^{(i)}=0$.
The corresponding estimators are
$
    \widehat C_{i,T}
    =
    \frac{N_T^{(i)}}{T},
    \;
    \widehat\lambda_{n,T}^{(k)}
    =
    \frac{
        k\sum_{i=1}^nN_T^{(i)}
    }{
        \sum_{i=1}^nJ_T^{(i)}
    }.
$
For the Bernstein reconstruction, the first-stage estimates are normalized
according to
\begin{equation}
    \widehat U_{i,T}
    =
    \frac{\widehat C_{i,T}-c_-}{c_+-c_-},
    \;
    \widehat U_{i,T}^{\mathcal C}
    =
    \Pi_{[0,1]}
    \bigl(\widehat U_{i,T}\bigr).
    \label{eq:simulation-affine-Uhat}
\end{equation}
Thus, values of $\widehat C_{i,T}$ that fall outside $[c_-,c_+]$ because
of finite-horizon Poisson noise are projected onto the normalized interval
$[0,1]$.
To ensure that observations projected onto the endpoints are retained, we
partition $[0,1]$ into the subintervals
$$
    I_{j,m}
    :=
    \begin{cases}
        \left[
            \dfrac{j}{m},
            \dfrac{j+1}{m}
        \right),
        & j=0,\ldots,m-2,\\
        \left[
            \dfrac{m-1}{m},
            1
        \right],
        & j=m-1,
    \end{cases}
$$
and define the corresponding empirical frequencies by
$
    \widehat\pi_{j,m,n,T}
    :=
    \frac{1}{n}
    \sum_{i=1}^n
    \mathds{1}_{
        \{\widehat U_{i,T}^{\mathcal C}\in I_{j,m}\}
    },
    \;
    j=0,\ldots,m-1.
$
The feasible Bernstein estimator on the normalized scale is then
$
    \widehat f_{U,m,n,T}(u)
    =
    m\sum_{j=0}^{m-1}
    \widehat\pi_{j,m,n,T}
    p_j(m-1,u),
    \;
    u\in[0,1].
$
In particular,
$
    \int_0^1
    \widehat f_{U,m,n,T}(u)\dd u
    =
    1,
$
because
$
    \sum_{j=0}^{m-1}
    \widehat\pi_{j,m,n,T}
    =
    1$
    and $
    \int_0^1p_j(m-1,u)\dd u
    =
    \frac{1}{m}.
$
Although the Bernstein estimator is first constructed on the normalized
scale $[0,1]$, the density reported in the numerical results and displayed
in the corresponding figure is transformed back to the original intensity
scale $[c_-,c_+]$:
\begin{equation}
    \widehat f_{C,m,n,T}(c)
    =
    \frac{1}{c_+-c_-}
    \widehat f_{U,m,n,T}
    \left(
        \frac{c-c_-}{c_+-c_-}
    \right),
    \;
    c\in[c_-,c_+],
    \label{eq:simulation-density-backtransform}
\end{equation}
and
$
    \widehat f_{C,m,n,T}(c)
    =
    0,
    \;
    c\notin[c_-,c_+].
$
Hence, the affine normalization is used to construct the Bernstein density
estimator on the canonical scale $[0,1]$, as described in
Remark~\ref{rem:support-normalization}, whereas all density comparisons
are performed on the original intensity scale $[c_-,c_+]$. In particular,
the Bernstein regularity conditions and the choice of the polynomial order
are applied to the normalized density
$
    g(u)
    =
    6u(1-u),
    \;
    u\in[0,1].
$
For the density reconstruction, we use
$
    n=10000,
    \;
    T=50000,
    \;
    \lambda=4,
    \;
    c_-=0.10,
    \;
    c_+=0.50,
$
and
$
    m
    =
    \left\lfloor
        \left(
            \frac{96}{5\sqrt{\pi}}
        \right)^{2/5}
        n^{2/5}
        +
        \frac{1}{2}
    \right\rfloor
    =
    103.
$
The value $m=103$ is obtained from \eqref{eq:mopt-beta22} using the
canonical density $g(u)=6u(1-u)$ of the normalized random effect $U$.
The same order is used to construct $\widehat f_{U,m,n,T}$ before applying
the Jacobian correction \eqref{eq:simulation-density-backtransform}, since
an affine change of scale does not alter the Bernstein polynomial order.
The large value of $T$ is used to reduce the first-stage estimation error
and make the boundary behavior of the feasible Bernstein estimator
visually clear.

The illustrative trajectories are generated separately using the following
projected Euler scheme:
\[
    Y_{t+\Delta}^{(i)}
    =
    \left[
        Y_t^{(i)}
        +
        (a-bY_t^{(i)})\Delta
        +
        \sigma\sqrt{(Y_t^{(i)})^+}\,
        \Delta W_t^{(i)}
        +
        \sum_{\ell=1}^{L_{i,t}}
        Z_{\ell,t}^{(i)}
    \right]^+,
\]
where
$
    \Delta W_t^{(i)}
    \sim
    \mathcal N(0,\Delta),
    \;
    L_{i,t}
    \sim
    \operatorname{Poisson}(C_i\Delta),
    \;
    Z_{\ell,t}^{(i)}
    \sim
    \operatorname{Gamma}(k,\lambda),
$
and
$
    x^+
    :=
    \max\{x,0\}.
$
Conditionally on the random intensities, the Brownian increments, jump
counts, and jump sizes are mutually independent and independent across
time steps and trajectories. For these illustrative trajectories, we use
$
    a=1,
    \;
    b=1.2,
    \;
    \sigma=0.35,
    \;
    Y_0=1,
    \;
    \Delta=0.05.
$
These discretized trajectories are included only to provide a visual
comparison of the two jump-size mechanisms.

\subsection{Numerical results}

Table~\ref{tab:section7-numerical-summary} reports the results of the
large-sample density-reconstruction experiment generated using the fixed
random-number seed \texttt{20260829}. The pooled estimator
$\widehat\lambda_{n,T}^{(k)}$ is close to the true value $\lambda=4$ for
both jump-size specifications. The RMSE of the first-stage estimators
$\widehat C_{i,T}$ is small, and the integrated squared error of the
feasible Bernstein estimator is of order $10^{-3}$.

For each jump-size specification, the ISE is evaluated on the original
intensity interval $[0.1,0.5]$ using the trapezoidal rule on an equally
spaced grid of $G=5000$ points. More precisely, let
$
    c_\ell
    :=
    0.1+
    \frac{\ell-1}{G-1}(0.5-0.1),
    \;
    \ell=1,\ldots,G,
$
and
$
    e_\ell
    :=
    \widehat f_{C,m,n,T}(c_\ell)-f_C(c_\ell).
$
The numerical approximation of the ISE is then
\begin{equation}
    \operatorname{ISE}
    \bigl(\widehat f_{C,m,n,T}\bigr)
    \approx
    \sum_{\ell=1}^{G-1}
    \frac{c_{\ell+1}-c_\ell}{2}
    \left(
        e_\ell^2+e_{\ell+1}^2
    \right).
    \label{eq:numerical-ise}
\end{equation}
Doubling the grid size from $G=5000$ to $G=10\,000$ does not change the
reported ISE values at the displayed precision, confirming the numerical
stability of the approximation.

\begin{table}[H]
\centering
\caption{Large-sample density-reconstruction benchmark for the two
Gamma$(k,\lambda)$ jump-size models, generated with
$n=10\,000$, $T=50\,000$, $m=103$, $\lambda=4$, and random-number seed
\texttt{20260829}. The ISE is computed after back-transformation on the
original intensity interval $[0.1,0.5]$, using the trapezoidal rule on an
equally spaced grid of $G=5000$ points.}
\label{tab:section7-numerical-summary}
\begin{tabular}{cccccc}
\toprule
$k$
&
$\widehat\lambda_{n,T}^{(k)}$
&
$N_{n,T}$
&
$n^{-1}\sum_{i=1}^{n}\widehat C_{i,T}$
&
$\operatorname{RMSE}(\widehat C_{i,T})$
&
$\operatorname{ISE}(\widehat f_{C,m,n,T})$
\\
\midrule
$1$
&
$4.0002$
&
$151256033$
&
$0.3025$
&
$0.002446$
&
$0.002678$
\\
$2$
&
$4.0002$
&
$151249607$
&
$0.3025$
&
$0.002446$
&
$0.002653$
\\
\bottomrule
\end{tabular}
\end{table}

Figure~\ref{fig:section7-density-comparison} compares, on the original
intensity interval $[c_-,c_+]$, the true density $f_C$ with the
back-transformed feasible Bernstein estimator
$\widehat f_{C,m,n,T}$ defined in
\eqref{eq:simulation-density-backtransform}. The estimator is constructed
from the projected normalized estimates
$\widehat U_{i,T}^{\mathcal C}$ defined in
\eqref{eq:simulation-affine-Uhat}. It successfully recovers the global
shape of the true density and exhibits satisfactory behavior near both
endpoints of the compact support in the considered simulation setting.

\begin{figure}[H]
\centering
\includegraphics[width=0.98\textwidth]
{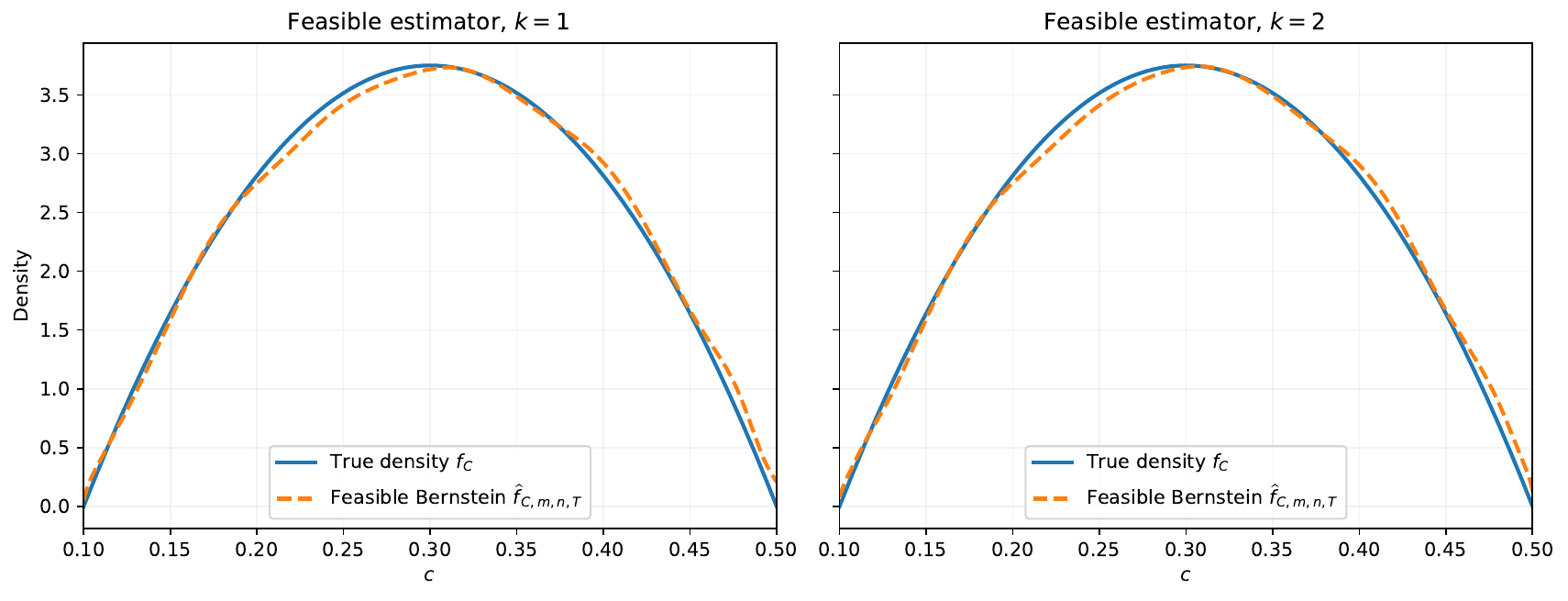}
\caption{True density $f_C$ and back-transformed feasible Bernstein
estimator $\widehat f_{C,m,n,T}$ on the original intensity scale
$[c_-,c_+]=[0.1,0.5]$, constructed from the projected normalized estimates
$\widehat U_{i,T}^{\mathcal C}$, for the two jump-size specifications
$k=1$ and $k=2$.}
\label{fig:section7-density-comparison}
\end{figure}
Figure~\ref{fig:section7-chat-scatter} illustrates the first-stage
estimation error over a finite observation horizon.
It was
generated separately from the large-sample density-reconstruction
experiment, using
$
    n=500,
    \;
    T=100,
$
and the fixed random-number seed \texttt{20260829}. The same realization
of $C_1,\ldots,C_n$ is used in both panels, whereas the conditional
Poisson counts are generated independently for the two jump-size
specifications. The points are concentrated around the diagonal
$\widehat C_{i,T}=C_i$, with visible finite-horizon dispersion. The
moderate values of $n$ and $T$ make this dispersion visually apparent.

\begin{figure}[H]
\centering
\begin{minipage}{0.48\textwidth}
\centering
\includegraphics[width=\textwidth]{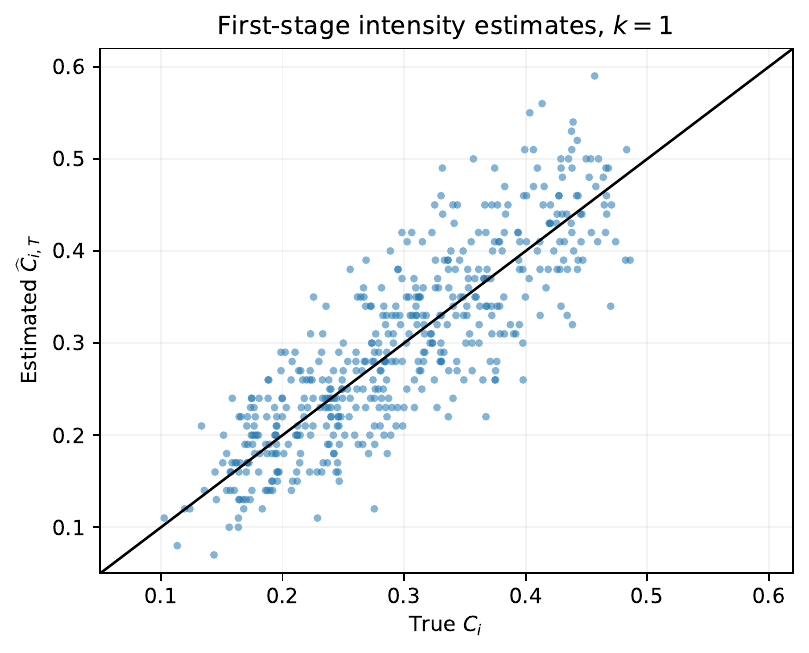}\\[-1mm]
\small (a) Exponential jumps, $k=1$.
\end{minipage}
\hfill
\begin{minipage}{0.48\textwidth}
\centering
\includegraphics[width=\textwidth]{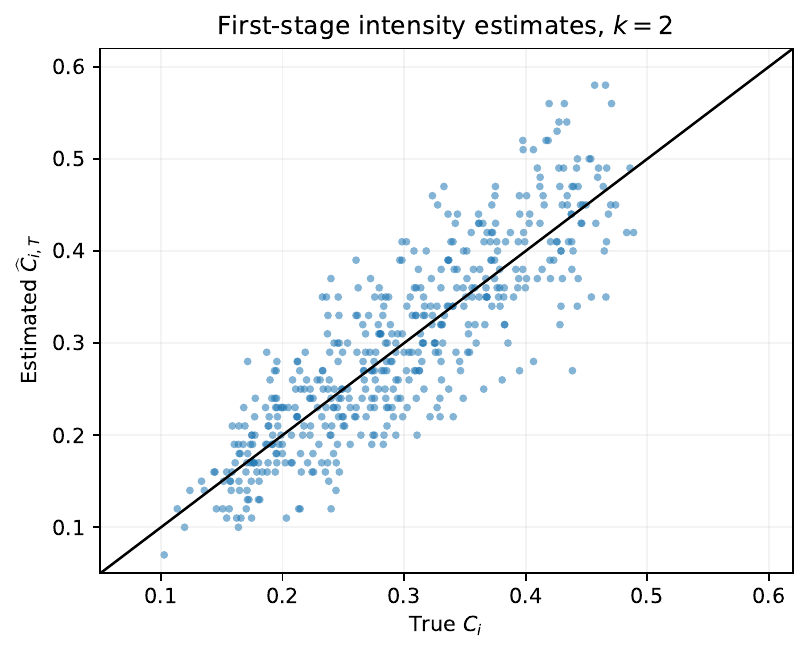}\\[-1mm]
\small (b) Erlang jumps, $k=2$.
\end{minipage}
\caption{Scatter plots of the individual estimators
$\widehat C_{i,T}$ against the true random effects $C_i$, generated with
$n=500$, $T=100$, and random-number seed \texttt{20260829}. The same
latent random effects are used in both panels, whereas the conditional
Poisson counts are generated independently. The diagonal line represents
the identity relation $\widehat C_{i,T}=C_i$.}
\label{fig:section7-chat-scatter}
\end{figure}

The first-stage estimator and the Bernstein reconstruction depend only on
the jump counts and therefore do not depend on the jump-size specification
$k$. The small differences between the results obtained for $k=1$ and
$k=2$ arise solely from the independent generation of the conditional
Poisson counts in the two simulation experiments.

Figure~\ref{fig:section7-sample-paths} displays one representative
simulated BAJD trajectory for each jump-size specification. Panel~(a)
corresponds to exponential jumps with $k=1$, whereas panel~(b) corresponds
to Erlang jumps with $k=2$. To facilitate visual comparison, the two
trajectories are generated using the same intensity $C_i=0.3$, the same
Brownian increments, and the same jump-arrival counts at each time step;
only the jump sizes are generated from different distributions. This
coupling is used solely for graphical comparison and does not alter the
marginal BAJD dynamics represented in either panel.

\begin{figure}[H]
\centering
\begin{minipage}{0.48\textwidth}
\centering
\includegraphics[width=\textwidth]{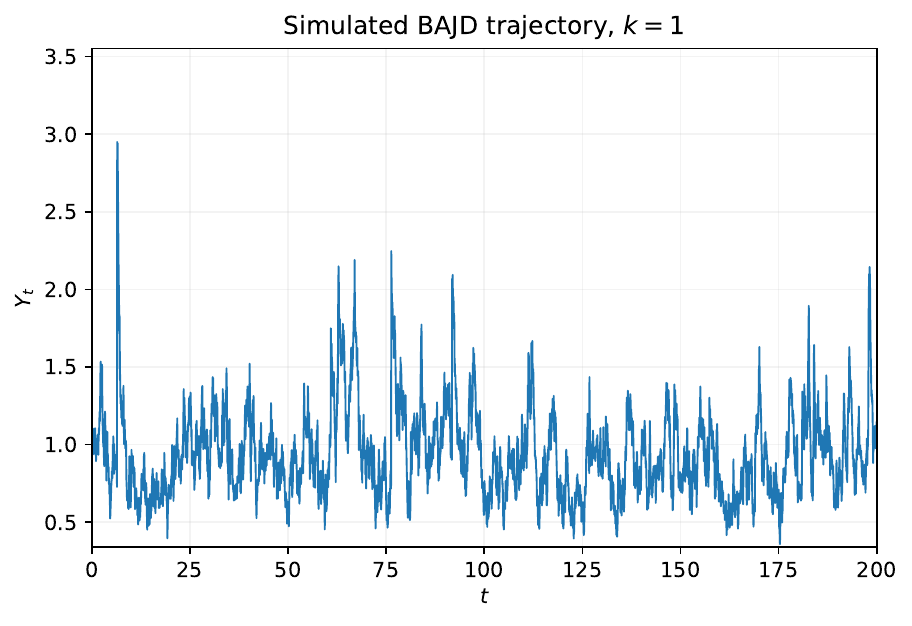}\\[-1mm]
\small (a) Exponential jumps, $k=1$.
\end{minipage}
\hfill
\begin{minipage}{0.48\textwidth}
\centering
\includegraphics[width=\textwidth]{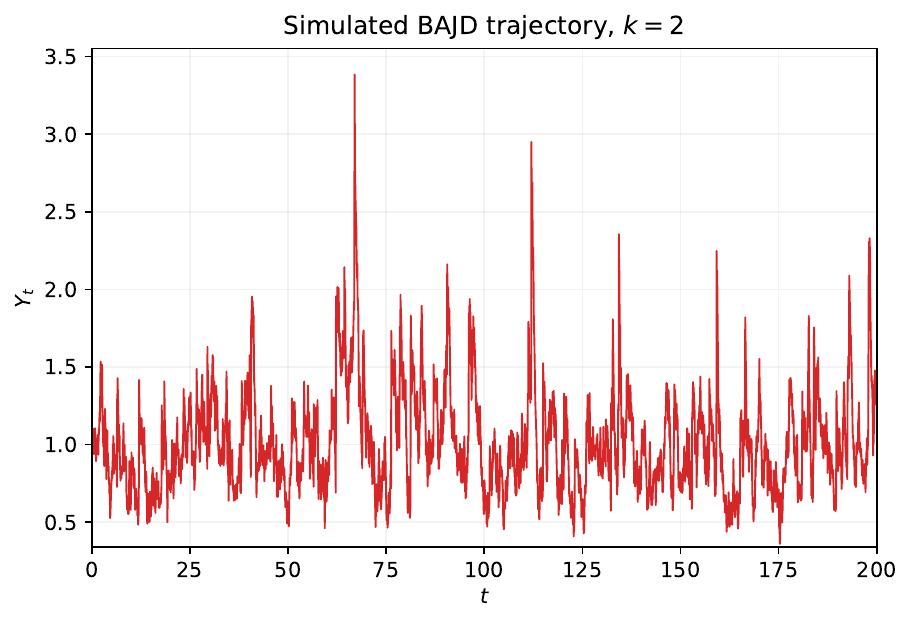}\\[-1mm]
\small (b) Erlang jumps, $k=2$.
\end{minipage}
\caption{One representative simulated BAJD trajectory for each jump-size
specification: exponential jumps with $k=1$ in panel~(a) and Erlang
jumps with $k=2$ in panel~(b). Both trajectories are generated with
$a=1$, $b=1.2$, $\sigma=0.35$, $\lambda=4$, $C_i=0.3$, $Y_0=1$,
$\Delta=0.05$, $T=200$, and random-number seed \texttt{20260829}. The
same Brownian increments and jump-arrival counts are used in both panels
to facilitate comparison of the two jump-size mechanisms.}
\label{fig:section7-sample-paths}
\end{figure}

Taken together, the numerical results illustrate the complete estimation
procedure proposed in the paper. First, the jump counts provide accurate
estimates of the heterogeneous intensities through
$\widehat C_{i,T}=N_T^{(i)}/T$. Second, the pooled jump amplitudes yield
estimates of the common rate parameter
$\widehat\lambda_{n,T}^{(k)}$ that are close to the true value
$\lambda=4$. Third, the feasible Bernstein estimator accurately recovers
the population density $f_C$ under both jump-size specifications and
exhibits satisfactory behavior near the endpoints of the compact support.
\subsection{Empirical illustration on financial realized-volatility data}
\label{subsec:empirical-real-data}
\paragraph{Motivation and data source.}
Finally, we illustrate how the proposed two-stage procedure can be adapted
to real financial data. In the theoretical BAJD random-effects model,
continuous-time observation makes the jumps of each trajectory and the
corresponding count $N_T^{(i)}$ directly observable. For the empirical
application, we replace these exact jump statistics by empirical
counterparts constructed from discretely observed data. More precisely,
for each asset, we construct a positive realized-volatility proxy and
identify its large positive increments as empirical upward volatility
jumps. This allows us to implement the same statistical procedure on a
real-data panel: first estimate the individual jump frequencies and then
reconstruct their cross-sectional density using Bernstein polynomials.

The data consist of daily adjusted closing prices obtained from Yahoo
Finance using the Python package \texttt{yfinance}. They were downloaded
on 27 August 2026. The download request specifies 1 January 2010 as the
start date and 27 August 2026 as the exclusive end date. Consequently,
the last observation included in the sample corresponds to the trading
day of 26 August 2026. These fixed dates replace a moving
``last available trading day'' rule and ensure that the sample period is
precisely defined.

We consider a mixed panel comprising broad-market exchange-traded funds
(ETFs), sector ETFs, and large-cap U.S. equities. Adjusted closing prices
are used to account for standard corporate actions, such as stock splits
and dividend distributions, when computing returns.

\begin{table}[H]
\centering
\scriptsize
\caption{Financial instruments included in the empirical panel. The panel
comprises broad-market ETFs, sector ETFs, and large-cap U.S. equities.}
\label{tab:realdata-tickers}
\begin{tabular}{ll|ll}
\toprule
Ticker & Instrument & Ticker & Instrument \\
\midrule
SPY   & SPDR S\&P 500 ETF Trust
      & QQQ   & Invesco QQQ Trust, NASDAQ-100 ETF \\
IWM   & iShares Russell 2000 ETF
      & DIA   & SPDR Dow Jones Industrial Average ETF Trust \\
XLK   & Technology Select Sector SPDR Fund
      & XLF   & Financial Select Sector SPDR Fund \\
XLE   & Energy Select Sector SPDR Fund
      & XLV   & Health Care Select Sector SPDR Fund \\
XLI   & Industrial Select Sector SPDR Fund
      & XLP   & Consumer Staples Select Sector SPDR Fund \\
XLU   & Utilities Select Sector SPDR Fund
      & XLY   & Consumer Discretionary Select Sector SPDR Fund \\
XLB   & Materials Select Sector SPDR Fund
      & AAPL  & Apple Inc. \\
MSFT  & Microsoft Corporation
      & AMZN  & Amazon.com, Inc. \\
GOOGL & Alphabet Inc., Class A
      & META  & Meta Platforms, Inc. \\
NVDA  & NVIDIA Corporation
      & JPM   & JPMorgan Chase \& Co. \\
XOM   & Exxon Mobil Corporation
      &       & \\
\bottomrule
\end{tabular}
\end{table}
\paragraph{Construction of empirical BAJD-type trajectories.}

Let $S_t^{(i)}$ denote the adjusted closing price of asset $i$ on trading
day $t$. We first compute the daily log-return
$
    r_t^{(i)}
    :=
    \log S_t^{(i)}-\log S_{t-1}^{(i)}.
$
The positive process to which the BAJD methodology is applied is the
rolling realized-variance proxy
$
    Y_t^{(i)}
    :=
    \sum_{j=t-h+1}^{t}
    \left(r_j^{(i)}\right)^2,
    \;
    h=21,
$
where $h=21$ corresponds approximately to one trading month. This proxy
is nonnegative and persistent and may exhibit abrupt upward movements
during periods of market stress. It can therefore be viewed as an
empirical analogue of a positive mean-reverting process with upward
jumps.
Because consecutive rolling windows overlap, the increment of the proxy
admits the exact representation
\begin{equation}
    \Delta Y_t^{(i)}
    :=
    Y_t^{(i)}-Y_{t-1}^{(i)}
    =
    \left(r_t^{(i)}\right)^2
    -
    \left(r_{t-h}^{(i)}\right)^2.
    \label{eq:realdata-overlapping-increment}
\end{equation}
Consequently, these increments should not generally be regarded as
independent observations. In particular, each squared return enters once
with a positive sign and once with a negative sign in two increments
separated by $h$ trading days. In addition, the volatility dependence
present in financial returns may induce further serial dependence in the
increment sequence.

\paragraph{Empirical jump extraction.}

For each asset $i$, we form the increments
$
    \Delta Y_t^{(i)}
    :=
    Y_t^{(i)}-Y_{t-1}^{(i)}.
$
Since the jump component of the BAJD model is positive, only positive
increments are considered as jump candidates. We define the
asset-specific detection threshold by
\begin{equation}
    \vartheta_i
    :=
    \operatorname{Quantile}_{0.95}
    \left(
        \left\{
            \Delta Y_t^{(i)}
            :
            \Delta Y_t^{(i)}>0
        \right\}
    \right).
    \label{eq:realdata-threshold}
\end{equation}
The empirical jump count and cumulative jump amplitude are then defined by
$
    N_{T_i}^{(i)}
    :=
    \sum_t
    \mathds{1}_{
        \{\Delta Y_t^{(i)}>\vartheta_i\}
    },
    \;
    J_{T_i}^{(i)}
    :=
    \sum_t
    \Delta Y_t^{(i)}
    \mathds{1}_{
        \{\Delta Y_t^{(i)}>\vartheta_i\}
    }.
$
Except when several increments are equal to the estimated threshold, this
asset-specific rule retains approximately $5\%$ of the positive increments
of each asset. Consequently, cross-sectional differences in the resulting
counts reflect, to a substantial extent, differences in the numbers of
positive increments available for the different assets.
The estimated frequencies should therefore be
interpreted as frequencies of threshold exceedances rather than as
unrestricted estimates of latent Poisson intensities.
Let $L_i$ denote the number of usable daily increments for asset $i$.
Using the standard convention of $252$ trading days per year, we express
the observation horizon in trading years as
$
    T_i
    :=
    \frac{L_i}{252}.
$
The empirical counterpart of the first-stage intensity estimator,
expressed as the estimated number of detected jump events per trading
year, is
\begin{equation}
    \widehat C_i
    :=
    \frac{N_{T_i}^{(i)}}{T_i}
    =
    \frac{252\,N_{T_i}^{(i)}}{L_i}.
    \label{eq:realdata-Chat}
\end{equation}
Thus, $\widehat C_i$ represents the annualized frequency of the detected
jump events.

This estimator has the same functional form as the estimator in
Proposition~\ref{prop:mle-Ci}, except that the continuously observed jump
count is replaced by a threshold-based empirical count. The corresponding
empirical version of the pooled Gamma-rate estimator would be
\begin{equation}
    \widehat\lambda_{n,\boldsymbol T}^{(k)}
    :=
    \frac{
        k\sum_{i=1}^nN_{T_i}^{(i)}
    }{
        \sum_{i=1}^nJ_{T_i}^{(i)}
    },
    \;
    \boldsymbol T:=(T_1,\ldots,T_n).
    \label{eq:realdata-lambda-hat}
\end{equation}
However, this estimator is not used in the subsequent empirical analysis,
because no Gamma jump-size model is imposed on the detected threshold
exceedances. Consequently,
$\widehat\lambda_{n,\boldsymbol T}^{(k)}$ is not interpreted as an
estimate of a structural parameter. The empirical analysis instead
focuses on the cross-sectional distribution of the detected jump
frequencies.
The Bernstein estimator is applied to the cross-sectional sample
$\widehat C_1,\ldots,\widehat C_n$, with $n=21$ and fixed order
$
    m=9.
$
This is the nearest integer to
$
    \left(
        \frac{96}{5\sqrt{\pi}}
    \right)^{2/5}
    n^{2/5},
$
the Beta$(2,2)$ benchmark order evaluated at $n=21$.

Annualized frequency estimates $\widehat C_i$ are not restricted to
$[0,1]$ and can exceed one. Therefore, rather than truncating them, we
apply an affine transformation from a data-driven compact interval
$[\underline c,\overline c]$ onto the canonical Bernstein interval
$[0,1]$. More precisely, define
$
    \widehat C_{\min}
    :=
    \min_{1\leq i\leq n}\widehat C_i,
    \;
    \widehat C_{\max}
    :=
    \max_{1\leq i\leq n}\widehat C_i,
    \;
    R_{\widehat C}
    :=
    \widehat C_{\max}-\widehat C_{\min}.
$
For the observed sample, $R_{\widehat C}>0$. Using the padding fraction
$\eta=0.05$, set
$
    \underline c
    :=
    \max
    \left\{
        0,\,
        \widehat C_{\min}-\eta R_{\widehat C}
    \right\},
    \;
    \overline c
    :=
    \widehat C_{\max}+\eta R_{\widehat C}.
$
The normalized pseudo-observations are
$
    \widehat U_i
    :=
    \Pi_{[0,1]}
    \left(
        \frac{
            \widehat C_i-\underline c
        }{
            \overline c-\underline c
        }
    \right),
    \;
    i=1,\ldots,n.
$
The Bernstein density estimator $\widehat f_U$ is constructed from
$\widehat U_1,\ldots,\widehat U_n$ on $[0,1]$ and is transformed back to
the original annualized-frequency scale according to
$
    \widehat f_C(c)
    =
    \frac{1}{\overline c-\underline c}
    \widehat f_U
    \left(
        \frac{
            c-\underline c
        }{
            \overline c-\underline c
        }
    \right),
    \;
    c\in[\underline c,\overline c].
$
The interval $[0,1]$ is therefore used only as the canonical domain on
which the Bernstein estimator is constructed; it does not imply that the
annualized empirical frequencies $\widehat C_i$ are bounded by one.
\paragraph{Sensitivity to the detection threshold.}
We repeat the complete jump-extraction and Bernstein-reconstruction
procedure using the asset-specific quantile levels
$
    q\in\{0.90,0.95,0.975\},
$
while keeping $h=21$, $m=9$, and $\eta=0.05$ fixed. For each asset $i$
and each quantile level $q$, define
$
    \vartheta_i(q)
    :=
    \operatorname{Quantile}_{q}
    \left(
        \left\{
            \Delta Y_t^{(i)}
            :
            \Delta Y_t^{(i)}>0
        \right\}
    \right)
$
and
$
    N_i(q)
    :=
    \sum_t
    \mathds{1}_{
        \{\Delta Y_t^{(i)}>\vartheta_i(q)\}
    }.
$
The corresponding annualized frequency estimate is
$
    \widehat C_i(q)
    :=
    \frac{252\,N_i(q)}{L_i}.
$
The padded support is recomputed separately for each value of $q$.
If $M_i^+$ denotes the number of strictly positive increments available
for asset $i$, then the defining property of the empirical quantile
suggests the approximation
$
    N_i(q)
    \simeq
    (1-q)M_i^+.
$
This is an order-of-magnitude approximation rather than an exact identity.
The actual number of exceedances may differ slightly from
$(1-q)M_i^+$ because of observations equal to the estimated threshold and
the convention used to compute the empirical quantile. The approximation
is included only to facilitate the interpretation of
Table~\ref{tab:realdata-threshold-sensitivity}; it is not used in the
estimation procedure.

For each $q\in\{0.90,0.95,0.975\}$, let
$\widehat f_C^{(q)}$ denote the back-transformed Bernstein density
estimator obtained after recomputing the padded support and the normalized
pseudo-observations at quantile level $q$. We measure its discrepancy
from the baseline estimator corresponding to $q=0.95$ by
\begin{equation}
    D_1(q,0.95)
    :=
    \int_{\mathbb R}
    \left|
        \widehat f_C^{(q)}(c)
        -
        \widehat f_C^{(0.95)}(c)
    \right|
    \dd c.
    \label{eq:realdata-threshold-L1}
\end{equation}
The integral is taken over $\mathbb R$ because the padded support is
recomputed separately for each quantile level.

Table~\ref{tab:realdata-threshold-sensitivity} reports the resulting
total jump counts and summary statistics of the annualized frequency
estimates, together with $D_1(q,0.95)$.
\begin{table}[H]
\centering
\scriptsize
\caption{Sensitivity of the detected jump counts and annualized frequency
estimates to the asset-specific positive-increment quantile level. The
distance $D_1(q,0.95)$ is the $L^1$-distance between the Bernstein density
obtained at level $q$ and the baseline density obtained at $q=0.95$. The
number of trading days used to construct each realized-variance
observation, the degree parameter of the Bernstein estimator, and the
support-padding fraction are fixed at $h=21$, $m=9$, and $\eta=0.05$,
respectively.}
\label{tab:realdata-threshold-sensitivity}
\begin{tabular}{crrrrrr}
\toprule
$q$
&
$\sum_iN_i(q)$
&
$\operatorname{Mean}(\widehat C_i(q))$
&
$\operatorname{Median}(\widehat C_i(q))$
&
$\operatorname{Min}(\widehat C_i(q))$
&
$\operatorname{Max}(\widehat C_i(q))$
&
$D_1(q,0.95)$
\\
\midrule
$0.90$  & $4339$ & $12.586$ & $12.585$ & $12.367$ & $12.887$ & $2.000$ \\
$0.95$  & $2175$ & $ 6.309$ & $ 6.292$ & $ 6.219$ & $ 6.474$ & $0.000$ \\
$0.975$ & $1093$ & $ 3.170$ & $ 3.146$ & $ 3.109$ & $ 3.267$ & $2.000$ \\
\bottomrule
\end{tabular}
\end{table}

\begin{figure}[H]
\centering
\includegraphics[width=0.82\textwidth]
{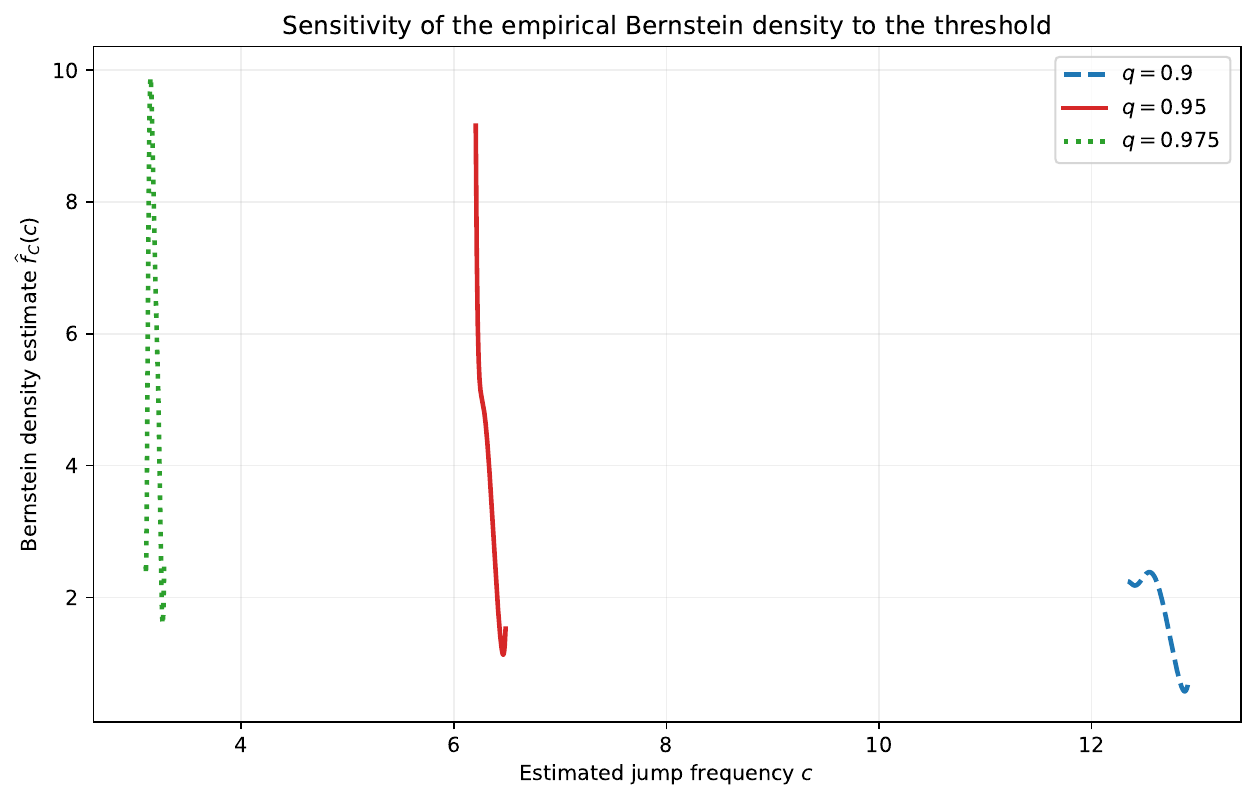}
\caption{Sensitivity of the empirical Bernstein density estimate to the
asset-specific positive-increment quantile level. The number of trading
days used to construct each realized-variance observation, the degree
parameter of the Bernstein estimator, and the support-padding fraction
are fixed at $h=21$, $m=9$, and $\eta=0.05$, respectively. For each
quantile level, the padded support and the normalized pseudo-observations
are recomputed before constructing the Bernstein estimator.}
\label{fig:realdata-threshold-sensitivity}
\end{figure}
Lowering the quantile level from $0.95$ to $0.90$ increases the total
number of detected jumps from $2175$ to $4339$ and nearly doubles the
cross-sectional mean frequency, from $6.309$ to $12.586$. Conversely,
raising the quantile level to $0.975$ reduces the total count to $1093$
and the mean frequency to $3.170$, approximately one half of its baseline
value.

The corresponding padded supports are
$
    [12.341,12.913],
    \;
    [6.206,6.487],
    \;
    [3.101,3.275],
$
for $q=0.90$, $q=0.95$, and $q=0.975$, respectively. These intervals are
pairwise disjoint, which explains the horizontal separation of the three
curves in Figure~\ref{fig:realdata-threshold-sensitivity}. Since each
Bernstein density integrates to one and is zero outside its corresponding
padded support, two estimators with disjoint supports have $L^1$-distance
equal to $2$. This explains the values reported in the last column of
Table~\ref{tab:realdata-threshold-sensitivity}.

The sensitivity analysis therefore shows that the absolute location of
the estimated frequency distribution is largely determined by the chosen
quantile level. Indeed, because approximately a fraction $1-q$ of the
positive increments is retained for each asset, changing $q$ mechanically
changes the scale of the estimated jump frequencies.

The corresponding cross-sectional standard deviations are $0.135$,
$0.070$, and $0.038$ for $q=0.90$, $q=0.95$, and $q=0.975$,
respectively. Together with the narrow ranges reported in
Table~\ref{tab:realdata-threshold-sensitivity}, these values show that the
asset-specific quantile rule produces only limited cross-sectional
dispersion. The three curves should consequently be interpreted as a
threshold-sensitivity diagnostic rather than as three estimates of a
common threshold-free structural intensity. The baseline empirical
results use $q=0.95$.
\paragraph{Interpretation of the empirical figures.}
Figure~\ref{fig:realdata-density} displays the empirical Bernstein
reconstruction of the cross-sectional density of the jump-frequency
estimates $\widehat C_i$. The histogram represents the observed
cross-sectional variation across assets, whereas the Bernstein curve
provides a smooth descriptive representation of the distribution of the
detected jump frequencies.
This construction is the empirical analogue of the feasible estimator
$\widehat f_{C,m,n,T}$ studied in
Theorem~\ref{thm:bias-var-mise}. 
In the theoretical setting, the pseudo-observations are computed from
exactly observed jump counts, whereas in the empirical application they
are computed from threshold-based counts extracted from discrete
financial data. Since the resulting frequency estimates lie in a bounded
range, a compact-support Bernstein representation is convenient for this
descriptive analysis.

Here, $n=21$ and $m=9$, whereas
Theorems~\ref{thm:bias-var-mise} and
\ref{thm:pointwise-normality} provide asymptotic results as
$n\to\infty$ and $m=m_n\to\infty$, subject to the corresponding growth
conditions. In particular, the MISE consistency result requires
$
    \frac{\sqrt m}{n}\longrightarrow0,
$
whereas the pointwise central limit theorem requires
$
    \frac{m}{n}\longrightarrow0.
$
Consequently, at the present sample size, the curve in
Figure~\ref{fig:realdata-density} should be interpreted as a smoothed
descriptive summary of the $21$ pseudo-observations
$\widehat C_1,\ldots,\widehat C_{21}$, rather than as a density estimate
supported by the asymptotic guarantees of these theorems.
\begin{figure}[H]
\centering
\includegraphics[width=0.72\textwidth]
{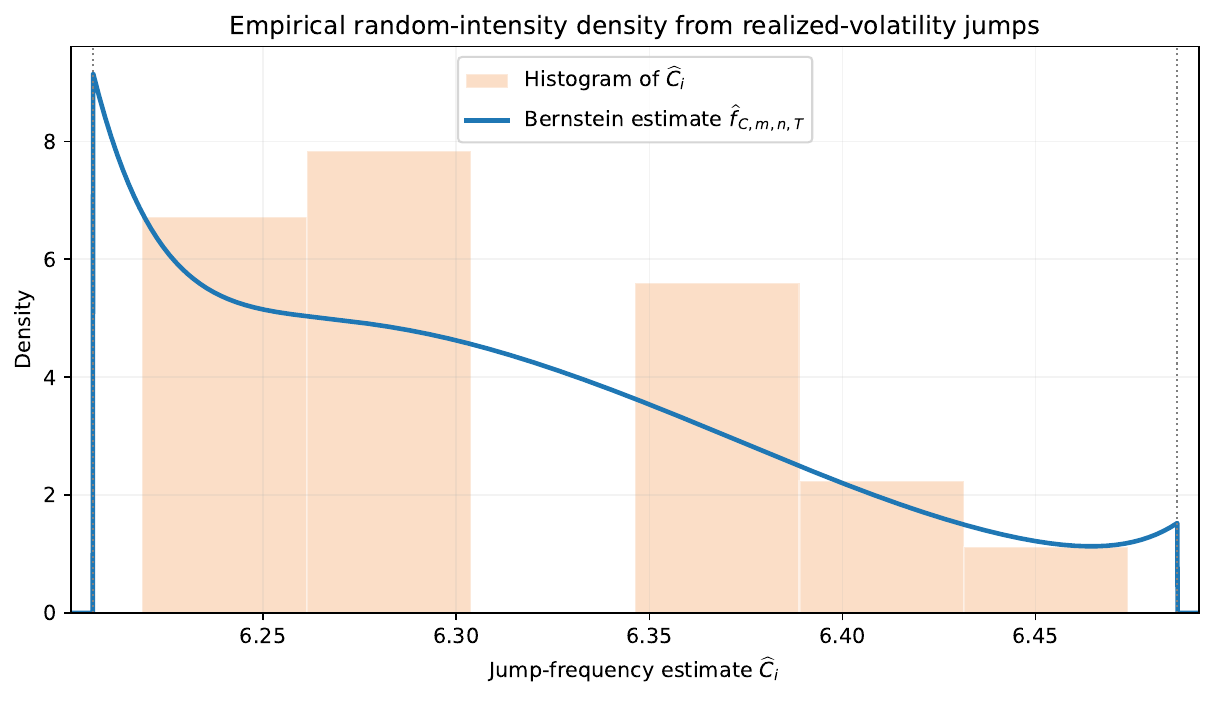}
\caption{Empirical Bernstein reconstruction of the cross-sectional
density of the estimated jump frequencies $\widehat C_i$ obtained from
the financial realized-volatility panel. The histogram represents the
empirical distribution of the estimates, whereas the smooth curve is the
corresponding Bernstein density estimator.}
\label{fig:realdata-density}
\end{figure}
Figure~\ref{fig:realdata-Chat-assets} reports the individual annualized
jump-frequency estimates
$
    \widehat C_i
    =
    \frac{N_{T_i}^{(i)}}{T_i}
$
for each asset. This diagnostic plot provides an empirical representation
of the cross-sectional variation underlying the random-effects
formulation. It reveals limited but nonzero variation in the frequencies
of detected large upward volatility movements under the baseline
quantile level. This variation is the empirical quantity summarized by
the Bernstein reconstruction. Given the asset-specific quantile rule and
the small cross-sectional sample, however, it should not be interpreted
as direct evidence of substantial heterogeneity in latent structural jump
intensities.

\begin{figure}[H]
\centering
\includegraphics[width=0.86\textwidth]
{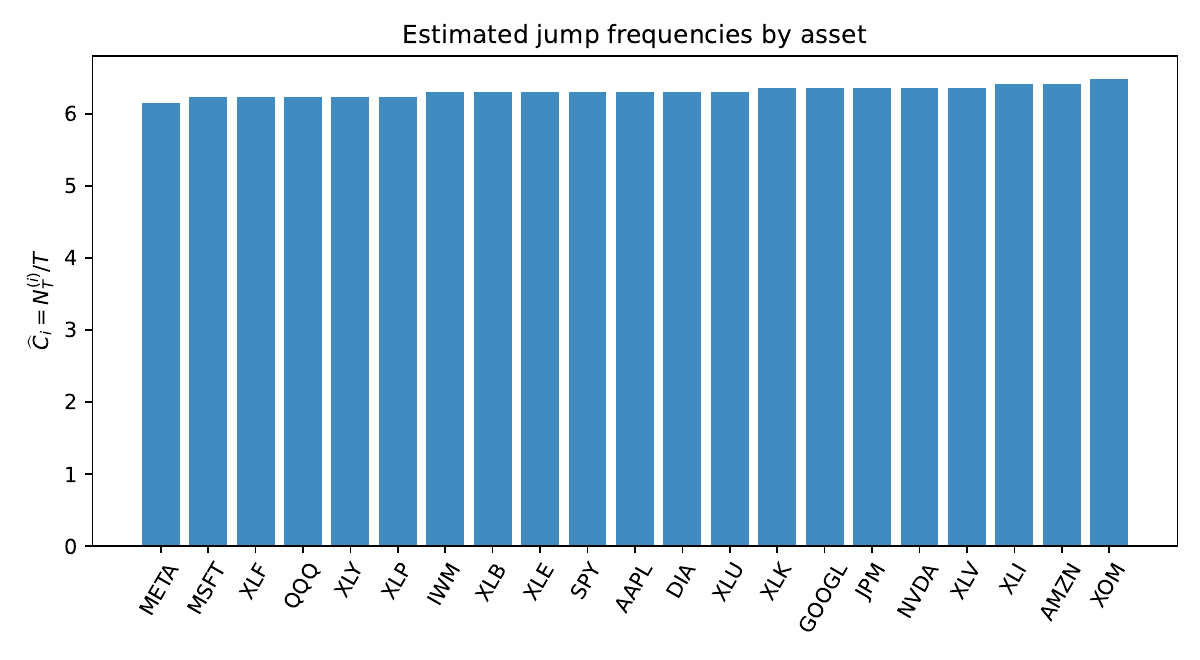}
\caption{Annualized jump-frequency estimates
$\widehat C_i=N_{T_i}^{(i)}/T_i$ by asset. The bars represent the limited
but nonzero cross-sectional variation in the empirical frequencies of
detected positive realized-volatility jumps under the baseline quantile
level $q=0.95$.}
\label{fig:realdata-Chat-assets}
\end{figure}

Figure~\ref{fig:realdata-construction} illustrates the construction of
the empirical BAJD-type sample. Panel~(a) displays several
realized-volatility proxies $Y_t^{(i)}$. These positive trajectories
exhibit pronounced upward movements, particularly during periods of
market stress, which motivates their representation through a positive
jump-type model. Panel~(b) presents the jump-detection diagnostic for SPY.
The horizontal line represents the $95\%$ positive-increment threshold
$\vartheta_{\mathrm{SPY}}(0.95)$, and the highlighted points identify the
detected positive jumps. These observations determine the empirical
statistics $N_{T_i}^{(i)}$ and $J_{T_i}^{(i)}$ used in the first-stage
frequency estimator \eqref{eq:realdata-Chat} and in the formal pooled
Gamma-rate expression \eqref{eq:realdata-lambda-hat}, respectively.

\begin{figure}[H]
\centering
\begin{minipage}{0.48\textwidth}
\centering
\includegraphics[width=\textwidth]
{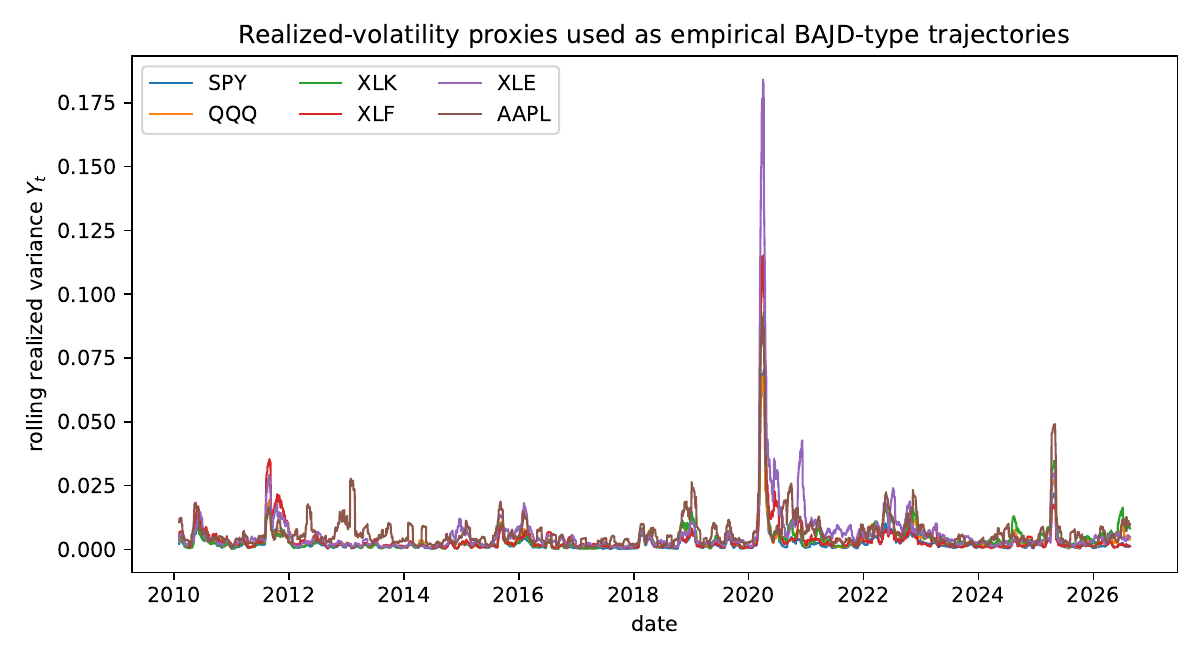}\\[-1mm]
\small (a) Realized-volatility proxies.
\end{minipage}
\hfill
\begin{minipage}{0.48\textwidth}
\centering
\includegraphics[width=\textwidth]
{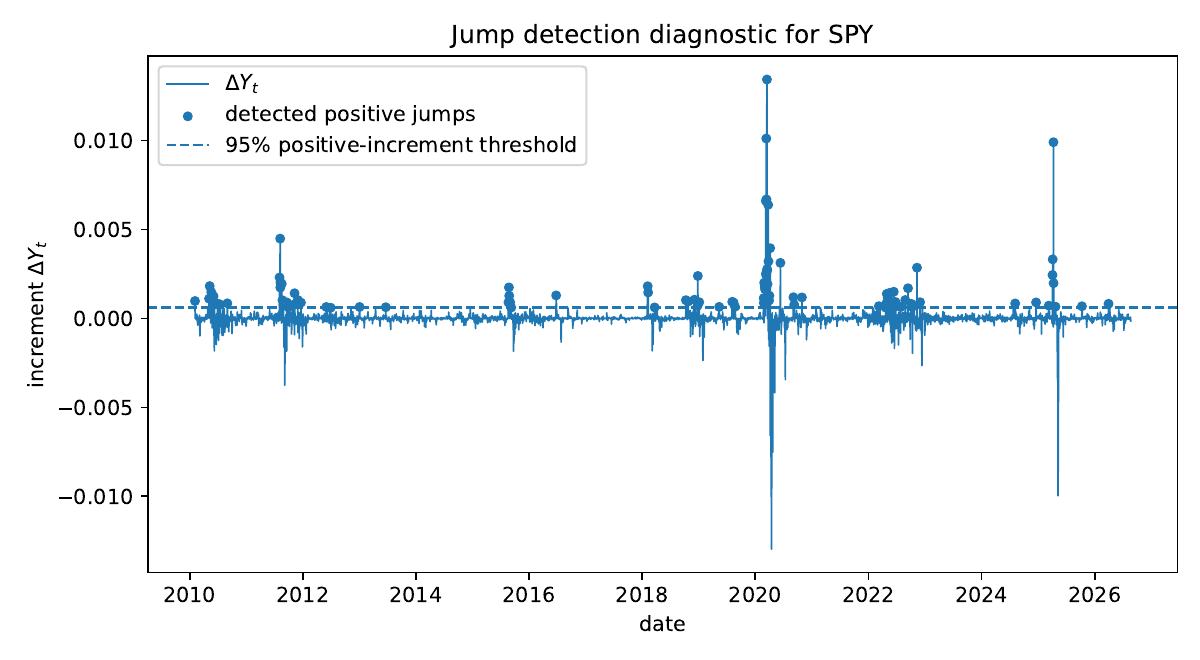}\\[-1mm]
\small (b) Positive-jump detection diagnostic.
\end{minipage}
\caption{Construction of the empirical BAJD-type sample from financial
data. Panel~(a) displays the rolling realized-volatility proxies used as
positive trajectories. Panel~(b) illustrates the quantile-based rule used
to detect empirical positive jumps.}
\label{fig:realdata-construction}
\end{figure}
\paragraph{Connection with the theory and scope of the empirical analysis.}
This empirical subsection complements the simulated BAJD study in three
ways. First, it shows that the estimator
$
    \widehat C_i
    =
    \frac{N_{T_i}^{(i)}}{T_i},
$
which has the same functional form as the likelihood-based estimator in
Proposition~\ref{prop:mle-Ci}, has a natural empirical counterpart once
positive volatility jumps have been identified using a threshold rule.
Second, it illustrates the role of the Bernstein estimator as a
population-level tool: the individual assets provide the
pseudo-observations $\widehat C_i$, which are subsequently used to
estimate the cross-sectional density of the detected jump frequencies.
Third, it illustrates the separation between the frequency and jump-size
components. The empirical analysis focuses on the cross-sectional
distribution of the detected jump frequencies, reconstructed through
$\widehat f_C$. The pooled Gamma-rate formula
$\widehat\lambda_{n,\boldsymbol T}^{(k)}$ is presented only as an
empirical counterpart of the theoretical estimator and is not assigned a
structural interpretation.

The empirical results should nevertheless be interpreted with caution.
A nonparametric density reconstructed from only $n=21$
pseudo-observations is intrinsically sensitive to the Bernstein order,
the support-padding rule, and the composition of the selected asset
universe. It should therefore be regarded as descriptive and exploratory
rather than as a precise estimate of an underlying population density.
More generally, the numerical values of $\widehat C_i$ and $\widehat f_C$
depend on the number $h$ of trading days used to construct each
realized-variance observation, the jump-detection threshold, the sampling
frequency, the support transformation, and the selected asset universe.
The fixed observation period, the choice $m=9$, the
padding rule, and the threshold-sensitivity analysis described above make
the empirical procedure reproducible, but they do not eliminate these
statistical limitations. This empirical subsection should therefore be
viewed as an illustration of the proposed methodology rather than as a
formal validation of the continuous-observation BAJD framework.

\section*{General conclusion}

This paper develops a random-effects extension of the BAJD estimation
problem in which heterogeneity acts through the jump frequency. Each
trajectory has its own positive intensity $C_i$, drawn from an unknown
density $f_C$, whereas the remaining structural parameters are common
across the population.

Under continuous-time observation, the individual jump count
$N_T^{(i)}$ is observed pathwise. The extended MLE of $C_i$ is therefore
the empirical jump frequency $N_T^{(i)}/T$; after the first observed
jump, it coincides with the ordinary MLE on $\real_{++}$. We established
the strong consistency and stable mixed-normal limit of this estimator.
At the population level, the estimated intensities serve as
pseudo-observations for a Bernstein density estimator. Under a sequential
asymptotic regime, we established the asymptotic equivalence between the
feasible and ideal Bernstein estimators, derived pointwise and integrated
bias and variance expansions, obtained the corresponding MISE expansion
and optimal Bernstein order, and proved pointwise asymptotic normality.
These results make explicit the distinct roles of the observation
horizon, the cross-sectional sample size, and the Bernstein order.

The individual intensity estimator remains unchanged for arbitrary
positive jump-size densities in the finite-activity setting, because it
depends only on the jump counts. For the Gamma$(k,\lambda)$ family, we
additionally derived a pooled MLE of the common rate parameter and
established its consistency and asymptotic normality. The counting
martingales and the centered jump-size martingale are orthogonal, yielding
first-order conditional asymptotic independence between the individual
intensity estimators and the pooled rate estimator. The Gamma
specialization also provides explicit immigration mechanisms and
conditional stationary means, together with a consistent plug-in
estimator of the stationary population mean. This composite estimator is
conditional on known or externally calibrated values of the common
parameters $a$ and $b$; their joint estimation and the propagation of
the resulting estimation uncertainty remain topics for future research.
The exponential and Erlang jump-size models are recovered as the cases
$k=1$ and $k=2$, respectively.

The numerical study illustrates the complete estimation procedure. The
controlled simulation experiment compares $\widehat C_{i,T}$,
$\widehat\lambda_{n,T}^{(k)}$, and the feasible Bernstein estimator with
their theoretical targets. The empirical illustration adapts the same
procedure to a panel of financial realized-volatility proxies. Its
threshold-sensitivity analysis shows that the absolute location of the
estimated jump-frequency distribution depends strongly on the detection
threshold and that the asset-specific quantile rule produces only limited
cross-sectional dispersion.
Because the empirical construction uses discretely observed financial
data, each realized-volatility proxy is computed from overlapping sets of
daily returns, and the panel contains only $21$ assets, the resulting
density reconstruction should be interpreted as descriptive and
exploratory rather than as a formal validation of the
continuous-observation BAJD framework.

Several extensions are natural. A first direction is to develop a
simultaneous asymptotic theory in which the number of trajectories $n$,
the observation horizon $T$, and the Bernstein order $m$ tend jointly to
infinity. A second direction is to consider bivariate random effects, such
as $(C_i,\lambda_i)$, to model joint heterogeneity in jump frequency and
jump amplitude. A third direction is to introduce random effects in the
mean-reversion parameter $b$, thereby allowing for heterogeneity in the
speed of convergence toward equilibrium. A fourth direction is to compare
Bernstein, kernel, and deconvolution estimators numerically in order to
clarify their finite-sample behavior, particularly near the boundaries of
the intensity distribution. Finally, motivated by the empirical
financial illustration, an important extension would be to develop the
theoretical results for unbalanced panels in which the observation
horizons $T_i$ vary across trajectories.

\appendix
\section{ Proofs of the auxiliary lemmas}
\label{app:proofs-auxiliary-lemmas}
This appendix collects the proofs of the four auxiliary lemmas used in the
asymptotic analysis of Section~\ref{sec:general-framework}.
Lemma~\ref{lem:stable-poisson-random-intensity} provides the stable
Poisson limit underlying Theorem~\ref{thm:Ci-mle-an}.
Lemmas~\ref{lem:ideal-bernstein-bias} and
\ref{lem:ideal-bernstein-variance} establish the pointwise and integrated
bias and variance expansions used in Theorem~\ref{thm:bias-var-mise}.
Finally, Lemma~\ref{lem:ideal-bernstein-pointwise-clt} supplies the
triangular-array central limit theorem required for
Theorem~\ref{thm:pointwise-normality}. Their proofs are deferred to this
appendix to keep the main exposition focused on the statistical results.

\begin{proof}[\bf Proof of Lemma~\ref{lem:stable-poisson-random-intensity}]

Let $H$ be a bounded $\mathcal G_i$-measurable random variable and let
$u\in\mathbb R$. Using conditional expectation given $\mathcal G_i$ and
the conditional characteristic function of the Poisson distribution, we
obtain
$$\mathbb E\left[
        H\exp\left(
            iu\frac{N_T^{(i)}-C_iT}{\sqrt T}
        \right)
    \right]
    =
    \mathbb E\left[
        H\exp\left\{
            C_iT\left(
                e^{iu/\sqrt T}
                -1
                -\frac{iu}{\sqrt T}
            \right)
        \right\}
    \right].
$$
Since
$
    T\left(
        e^{iu/\sqrt T}
        -1
        -\frac{iu}{\sqrt T}
    \right)
    \xrightarrow[T\to\infty]{}
    -\frac{u^2}{2},
$
and
$$
    \left|
        \exp\left\{
            C_iT\left(
                e^{iu/\sqrt T}
                -1
                -\frac{iu}{\sqrt T}
            \right)
        \right\}
    \right|
    =
    \exp\left\{
        C_iT\left(
            \cos\left(\frac{u}{\sqrt T}\right)-1
        \right)
    \right\}
    \leq 1,
$$
the dominated convergence theorem yields
$
    \mathbb E\left[
        H\exp\left(
            iu\frac{N_T^{(i)}-C_iT}{\sqrt T}
        \right)
    \right]
    \xrightarrow[T\to\infty]{}
    \mathbb E\left[
        H\exp\left(
            -\frac{u^2C_i}{2}
        \right)
    \right].
$
Because $Z_i$ is independent of $\mathcal G_i$,
$
    \mathbb E\left[
        \left.
        e^{iu\sqrt{C_i}Z_i}
        \,\right|\,
        \mathcal G_i
    \right]
    =
    \exp\left(-\frac{u^2C_i}{2}\right).
$
Therefore,
$$
    \mathbb E\left[
        H\exp\left(
            iu\frac{N_T^{(i)}-C_iT}{\sqrt T}
        \right)
    \right]
    \xrightarrow[T\to\infty]{}
    \mathbb E\left[
        He^{iu\sqrt{C_i}Z_i}
    \right].
$$
This is the characteristic-function characterization of stable convergence
with respect to $\mathcal G_i$, and hence proves
\eqref{eq:stable-poisson-random-intensity}.
\end{proof}

\begin{rema}
The conclusion of
Lemma~\ref{lem:stable-poisson-random-intensity} can alternatively be
recovered from the martingale GCLT stated in
Theorem~\ref{GCLTtouati} of Appendix~B. Indeed, fix
$i\in\{1,\ldots,n\}$ and consider the filtration initially enlarged by
$\mathcal G_i=\sigma(C_i)$. Then
$
    M_{i,t}:=N_t^{(i)}-C_it,
    \; t\geq0,
$
is a purely discontinuous, locally square-integrable martingale. Since the
jump times of the conditionally Poisson process $N^{(i)}$ are totally
inaccessible, for every predictable stopping time $\tau$,
$
    \Delta M_{i,\tau}=0
    \; \text{a.s. on }\{\tau<\infty\}.
$
Therefore, $M_i$ is quasi-left-continuous. Moreover, its predictable
quadratic variation satisfies
$
    \frac{1}{T}\langle M_i\rangle_T=C_i.
$
Since the jumps of $M_i$ are of size $1$, its Lindeberg term is, for every
$\varepsilon>0$,
$
    \frac{1}{T}
    \int_0^T\int_{\mathbb R}
        x^2\mathbf 1_{\{|x|>\varepsilon\sqrt T\}}
        \nu^{M_i}(dt,dx)
    =
    C_i\mathbf 1_{\{1>\varepsilon\sqrt T\}}
    \xrightarrow[T\to\infty]{\mathrm{a.s.}}0.
$
Thus, the bracket and Lindeberg conditions of
Theorem~\ref{GCLTtouati} are satisfied. Consequently, stably with respect
to $\mathcal G_i$,
$
    \frac{N_T^{(i)}-C_iT}{\sqrt T}
    =
    \frac{M_{i,T}}{\sqrt T}
    \xrightarrow[T\to\infty]{\mathcal D,\mathrm{st}}
    \sqrt{C_i}\,Z_i,
$
where $Z_i\sim\mathcal N(0,1)$ is independent of $\mathcal G_i$. This
recovers \eqref{eq:stable-poisson-random-intensity}.
\end{rema}

\begin{proof}[\bf Proof of Lemma~\ref{lem:ideal-bernstein-bias}]
By Assumption~\ref{ass:population}, the random variables
$(C_i)_{i\geq1}$ have common distribution function $F_C$ and common
density $f_C$. Since $f_C$ is supported on $[0,1]$, for every
$k=0,\ldots,m-1$,
$
    F_C\left(\frac{k+1}{m}\right)
    -
    F_C\left(\frac{k}{m}\right)
    =    \int_{k/m}^{(k+1)/m}f_C(t)\dd t.
$
Consequently,

$
    \widetilde f_{C,m}(u)
    =
    \mathbb E\bigl[\widetilde f_{C,m,n}(u)\bigr]
    =
    \sum_{k=0}^{m-1}
    \left[        m\int_{k/m}^{(k+1)/m}f_C(t)\dd t
    \right]
    p_k(m-1,u).
$
In particular, $\widetilde f_{C,m}(u)$ does not depend on $n$.
For $k=0,\ldots,m-1$, set
$
    \bar f_{C,k}
    :=
    m\int_{k/m}^{(k+1)/m}f_C(t)\dd t,
$
and let
$
    K_{m,u}
    \sim
    \operatorname{Binomial}(m-1,u).
$
Then
$
    \widetilde f_{C,m}(u)
    =
    \mathbb E\bigl[\bar f_{C,K_{m,u}}\bigr].
$
For the subinterval
$\left[k/m,(k+1)/m\right]$, denote its midpoint by
$
    c_k
    :=
    \frac{k+1/2}{m}.
$
Since $f_C\in C^2([0,1])$, a second-order Taylor expansion of $f_C$ at
$u$ gives, for every $t,u\in[0,1]$,
$
    f_C(t)
    =
    f_C(u)
    +
    f_C'(u)(t-u)
    +
    \frac{1}{2}f_C''(u)(t-u)^2
    +
    R_u(t).
$
Direct integration over the subinterval yields
$
    m\int_{k/m}^{(k+1)/m}(t-u)\dd t
    =
    c_k-u
$
and
$
    m\int_{k/m}^{(k+1)/m}(t-u)^2\dd t
    =
    (c_k-u)^2+\frac{1}{12m^2}.
$
Therefore,
$
    \bar f_{C,k}
    =
    f_C(u)
    +
    f_C'(u)(c_k-u)
    +
    \frac{1}{2}f_C''(u)
    \left[
        (c_k-u)^2+\frac{1}{12m^2}
    \right]
    +
    \bar R_{k,u},
$
where
$
    \bar R_{k,u}
    :=
    m\int_{k/m}^{(k+1)/m}R_u(t)\dd t.
$
Taking expectations with respect to $K_{m,u}$ gives
$
    \widetilde f_{C,m}(u)
    =
    f_C(u)
    +
    f_C'(u)
    \mathbb E\bigl(c_{K_{m,u}}-u\bigr)
   +
    \frac{1}{2}f_C''(u)
    \left[
        \mathbb E\bigl(c_{K_{m,u}}-u\bigr)^2
        +
        \frac{1}{12m^2}
    \right]
    +
    \mathbb E\bigl(\bar R_{K_{m,u},u}\bigr).
$
The binomial moment formulas give, uniformly in $u\in[0,1]$,
$
    \mathbb E\bigl(c_{K_{m,u}}-u\bigr)
    =
    \frac{1-2u}{2m}
$
and
$
    \mathbb E\bigl(c_{K_{m,u}}-u\bigr)^2
    =
    \frac{u(1-u)}{m}
    +
    O(m^{-2}).
$
Hence,
\begin{align}
    \widetilde f_{C,m}(u)-f_C(u)
    =
    \frac{1}{2m}
    \left[
        (1-2u)f_C'(u)
        +
        u(1-u)f_C''(u)
    \right]
    +
    O(m^{-2})
    +
    \mathbb E\bigl(\bar R_{K_{m,u},u}\bigr),
    \label{eq:preliminary-bias-expansion}
\end{align}
where the $O(m^{-2})$ term is uniform in $u\in[0,1]$.
It remains to control the Taylor remainder uniformly, including at the
endpoints. Let $V$ be uniformly distributed on $[0,1]$, independently
of $K_{m,u}$, and define
$
    X_{m,u}
    :=
    \frac{K_{m,u}+V}{m}.
$
Conditioning on $K_{m,u}$ gives
$
    \widetilde f_{C,m}(u)
    =
    \mathbb E\bigl[f_C(X_{m,u})\bigr]
$
and
$
    \mathbb E\bigl(\bar R_{K_{m,u},u}\bigr)
    =
    \mathbb E\bigl[R_u(X_{m,u})\bigr].
$
Since $f_C\in C^2([0,1])$ and $[0,1]$ is compact, the function $f_C''$
is bounded and uniformly continuous on $[0,1]$. Let
$\omega_{f_C''}$ denote its modulus of continuity. Taylor's formula with
integral remainder yields, for every $x,u\in[0,1]$,
\[
    \left|
        f_C(x)
        -
        f_C(u)
        -
        f_C'(u)(x-u)
        -
        \frac{1}{2}f_C''(u)(x-u)^2
    \right|
    \leq
    \frac{1}{2}
    \omega_{f_C''}(|x-u|)(x-u)^2.
\]
Consequently,
$
    \left|
        \mathbb E\bigl(\bar R_{K_{m,u},u}\bigr)
    \right|
    \leq
    \frac{1}{2}
    \mathbb E\left[
        \omega_{f_C''}(|X_{m,u}-u|)
        (X_{m,u}-u)^2
    \right].
$
The binomial moment formulas, together with the boundedness of $V$,
imply, uniformly in $u\in[0,1]$, that
$
    \mathbb E(X_{m,u}-u)^2
    =
    O(m^{-1}),$ $
        \mathbb E(X_{m,u}-u)^4
    =
    O(m^{-2}).
$
For every $\delta>0$, splitting according to
$\{|X_{m,u}-u|\leq\delta\}$ and its complement gives
$$
    m\,
    \mathbb E\left[
        \omega_{f_C''}(|X_{m,u}-u|)
        (X_{m,u}-u)^2
    \right]
  \leq
    K_{\mathrm r}\,
    \omega_{f_C''}(\delta)
    +
    \frac{
        2m\lVert f_C''\rVert_\infty
    }{
        \delta^2
    }
    \mathbb E(X_{m,u}-u)^4
  \leq
    K_{\mathrm r}\,
    \omega_{f_C''}(\delta)
    +
    \frac{K_{\mathrm r}}{m\delta^2},
$$
uniformly in $u\in[0,1]$, where $K_{\mathrm r}>0$ is a deterministic
constant independent of $m$, $u$, and $\delta$.
Letting first $m\to\infty$ and then $\delta\downarrow0$ yields
$
    \sup_{u\in[0,1]}
    m\left|
        \mathbb E\bigl(\bar R_{K_{m,u},u}\bigr)
    \right|
    \xrightarrow[m\to\infty]{}
    0.
$
Combining this result with
\eqref{eq:preliminary-bias-expansion} proves
\eqref{eq:lemma-uniform-bias-expansion}.
Finally, define
$
    \beta_{f_C}(u)
    :=
    \frac{1}{2}
    \left[
        (1-2u)f_C'(u)
        +
        u(1-u)f_C''(u)
    \right].
$
Since $f_C\in C^2([0,1])$, the function $\beta_{f_C}$ is continuous and
bounded on $[0,1]$. By
\eqref{eq:lemma-uniform-bias-expansion},
$
    \sup_{u\in[0,1]}
    \left|
        m\bigl(\widetilde f_{C,m}(u)-f_C(u)\bigr)
        -
        \beta_{f_C}(u)
    \right|
    \xrightarrow[m\to\infty]{}
    0.
$
Squaring this uniform expansion and integrating over $[0,1]$ gives
$$
    \int_0^1
    \bigl(\widetilde f_{C,m}(u)-f_C(u)\bigr)^2
    \dd u
    =
    \frac{1}{m^2}
    \int_0^1\beta_{f_C}(u)^2\dd u
    +
    o(m^{-2})=
    B_{f_C}m^{-2}
    +
    o(m^{-2}),
$$
which proves \eqref{eq:lemma-integrated-bias}. This completes the proof.
\end{proof}

\begin{proof}[\bf Proof of Lemma~\ref{lem:ideal-bernstein-variance}]
Fix $u\in(0,1)$ and set $N:=m-1$. For $v\in[0,1]$, define
$
    \kappa_m(v)
    :=
    \min\{\lfloor mv\rfloor,m-1\}.
$
Since the distribution of each $C_i$ is non-atomic, no $C_i$ coincides
with a grid point $k/m$ almost surely. Therefore,
$
    \widetilde f_{C,m,n}(u)
    =
    \frac{1}{n}\sum_{i=1}^n\xi_i
    \;\text{a.s.},
$
where
$
    \xi_i
    :=
    m\,p_{\kappa_m(C_i)}(m-1,u).
$
By Assumption~\ref{ass:population}, the variables
$\xi_1,\ldots,\xi_n$ are i.i.d. Hence,
\begin{equation}
    \operatorname{Var}\bigl(\widetilde f_{C,m,n}(u)\bigr)
    =
    \frac{1}{n}
    \left[
        \mathbb E(\xi_1^2)
        -
        \bigl(\mathbb E\xi_1\bigr)^2
    \right].
    \label{eq:lemma-var-iid}
\end{equation}

Define
$
    \bar f_{C,k}
    :=
    m\int_{k/m}^{(k+1)/m}f_C(t)\dd t,
    \; k=0,\ldots,m-1.
$
Since
$
    \mathbb P\left(
        C_1\in\left[\frac{k}{m},\frac{k+1}{m}\right)
    \right)
    =
    \frac{\bar f_{C,k}}{m},
$
with the choice of endpoints being irrelevant because the distribution is
non-atomic, we have
$
    \mathbb E(\xi_1^2)
    =
    m\sum_{k=0}^{m-1}
    \bar f_{C,k}p_k(m-1,u)^2.
$
The local limit estimate
\begin{equation}
    \sum_{k=0}^{N}p_k(N,u)^2
    =
    \frac{1}{\sqrt{4\pi Nu(1-u)}}
    +
    o(N^{-1/2})
    =
    \frac{\psi(u)}{\sqrt N}
    +
    o(N^{-1/2})
    \label{eq:lemma-local-limit}
\end{equation}
follows by writing the sum as
$
    \mathbb P(K_1=K_2),
$
where $K_1$ and $K_2$ are independent
$\operatorname{Binomial}(N,u)$ random variables, and applying the
classical local limit theorem to the random variable
$K_1-K_2$.
Since $f_C$ is uniformly continuous on $[0,1]$, a near--far decomposition
around $u$, together with an exponential binomial-tail bound, gives
$
    \sum_{k=0}^{m-1}
    \bar f_{C,k}p_k(m-1,u)^2
    =
    f_C(u)
    \sum_{k=0}^{m-1}p_k(m-1,u)^2
    +
    o(m^{-1/2}).
$
Combining this estimate with \eqref{eq:lemma-local-limit} and using
$N=m-1$, we obtain
$
    \mathbb E(\xi_1^2)
    =
    f_C(u)\psi(u)m^{1/2}
    +
    o(m^{1/2}).
$
Moreover,
$
    \mathbb E\xi_1
    =
    \widetilde f_{C,m}(u)
    =
    \sum_{k=0}^{m-1}
    \bar f_{C,k}p_k(m-1,u).
$
Since
$
    0\leq\bar f_{C,k}\leq\lVert f_C\rVert_\infty$  and $
        \sum_{k=0}^{m-1}p_k(m-1,u)=1,
$
it follows that
$
    0\leq
    \mathbb E\xi_1
    \leq
    \lVert f_C\rVert_\infty.
$
Consequently,
$
    \bigl(\mathbb E\xi_1\bigr)^2
    =
    O(1)
    =
    o(m^{1/2}).
$
Substituting these estimates into \eqref{eq:lemma-var-iid} proves
\eqref{eq:lemma-variance-expansion}.
It remains to establish the integrated variance expansion. 
For $k=0,\ldots,N$, set
$
    I_{m,k}
    :=
    \int_0^1p_k(N,u)^2\dd u,
    \;
    x_{m,k}
    :=
    \frac{k+1/2}{m}.
$
Since $N=m-1$, the Beta-function identity gives
\begin{align*}
    I_{m,k}
    &=
    \binom{N}{k}^{\!2}
    B(2k+1,2N-2k+1)=
    \frac{1}{2N+1}
    \frac{\binom{N}{k}^{\!2}}
         {\binom{2N}{2k}},
    \;
    k=0,\ldots,N.
\end{align*}
We now justify the uniform bound used below. For every integer $r\geq1$,
the elementary two-sided Stirling bounds give
$
    \sqrt{2\pi}\,
    r^{\,r+1/2}e^{-r}
    \leq
    r!
    \leq
    3\,r^{\,r+1/2}e^{-r}.
$
Applying these inequalities to the factorial representation
$
    I_{m,k}
    =
    \frac{
        (N!)^2(2k)!(2N-2k)!
    }{
        (k!)^2((N-k)!)^2(2N+1)!
    }
$
shows that, for $1\leq k\leq N-1$,
$
    I_{m,k}
    \leq
    \frac{K}{
        N^{3/2}
        \sqrt{(k/N)(1-k/N)}
    },
$
where $K>0$ is a deterministic constant independent of $m$ and $k$.
Since $N=m-1$, there exist deterministic constants $d_1,d_2>0$,
independent of $m$ and $k$, such that
$
    d_1x_{m,k}(1-x_{m,k})
    \leq
    \frac{k}{N}
    \left(
        1-\frac{k}{N}
    \right)
    \leq
    d_2x_{m,k}(1-x_{m,k}),
$
for every $1\leq k\leq N-1$. Moreover,
$
    \frac{m}{N}
    =
    \frac{m}{m-1}
    \leq2,
    \;
    m\geq2.
$
Consequently, there exists a deterministic constant
$K_{\mathrm{int}}>0$, independent of $m$ and $k$, such that
\[
    m^{3/2}I_{m,k}
    \leq
    \frac{K_{\mathrm{int}}}{
        \sqrt{x_{m,k}(1-x_{m,k})}
    },
    \;
    k=1,\ldots,N-1.
\]
For the boundary indices, the Beta-function identity gives explicitly
$
    I_{m,0}
    =
    I_{m,N}
    =
    \frac{1}{2m-1}.
$
Since
$
    x_{m,0}
    =
    \frac{1}{2m},
    \;
    x_{m,N}
    =
    1-\frac{1}{2m},
$
one has
$
    m^{3/2}I_{m,0}
    =
    m^{3/2}I_{m,N}
    \leq
    \frac{1}{
        \sqrt{x_{m,0}(1-x_{m,0})}
    }
    =
    \frac{1}{
        \sqrt{x_{m,N}(1-x_{m,N})}
    }.
$
Therefore, setting
$
    K_0
    :=
    \max\{K_{\mathrm{int}},1\},
$
we obtain the uniform bound
\begin{equation}
    m^{3/2}I_{m,k}
    \leq
    \frac{K_0}{
        \sqrt{x_{m,k}(1-x_{m,k})}
    },
    \;
    k=0,\ldots,N.
    \label{eq:integrated-binomial-bound}
\end{equation}
Moreover, the corresponding Stirling expansion is uniform when
$x_{m,k}$ remains in a fixed compact subset of $(0,1)$. Therefore, for
every fixed $\varepsilon\in(0,1/2)$,
\begin{equation}
    \sup_{\substack{
        0\leq k\leq m-1\\
        x_{m,k}\in[\varepsilon,1-\varepsilon]
    }}
    \left|
        m^{3/2}I_{m,k}
        -
        \psi(x_{m,k})
    \right|
    \xrightarrow[m\to\infty]{}
    0.
    \label{eq:integrated-binomial-interior}
\end{equation}

By Tonelli's theorem,
$
    \int_0^1\mathbb E\bigl(\xi_1^2\bigr)\dd u
    =
    m\sum_{k=0}^{m-1}\bar f_{C,k}I_{m,k}.
$
Since $f_C$ is uniformly continuous on $[0,1]$,
$
    \max_{0\leq k\leq m-1}
    \left|
        \bar f_{C,k}-f_C(x_{m,k})
    \right|
    \xrightarrow[m\to\infty]{}
    0.
$
On every fixed interior interval
$[\varepsilon,1-\varepsilon]$, where
$\varepsilon\in(0,1/2)$,
\eqref{eq:integrated-binomial-interior} and a Riemann-sum argument yield
\begin{align}
    &\frac{1}{\sqrt m}\,
    m
    \sum_{\substack{
        0\leq k\leq m-1\\
        x_{m,k}\in[\varepsilon,1-\varepsilon]
    }}
    \bar f_{C,k}I_{m,k}
    =
    \frac{1}{m}
    \sum_{\substack{
        0\leq k\leq m-1\\
        x_{m,k}\in[\varepsilon,1-\varepsilon]
    }}
    \bar f_{C,k}
    \bigl(m^{3/2}I_{m,k}\bigr)
    \xrightarrow[m\to\infty]{}
    \int_\varepsilon^{1-\varepsilon}
    f_C(u)\psi(u)\dd u.
    \label{eq:integrated-variance-interior}
\end{align}

It remains to control the two boundary regions. To obtain a disjoint
decomposition of $[0,1]$, define
$
    \mathcal B_\varepsilon
    :=
    [0,\varepsilon)\cup(1-\varepsilon,1].
$
By \eqref{eq:integrated-binomial-bound},
\begin{align}
    &\frac{1}{\sqrt m}\,
    m
    \sum_{\substack{
        0\leq k\leq m-1\\
        x_{m,k}\in\mathcal B_\varepsilon
    }}
    \bar f_{C,k}I_{m,k}
    \leq
    \frac{K_0\lVert f_C\rVert_\infty}{m}
    \sum_{\substack{
        0\leq k\leq m-1\\
        x_{m,k}\in\mathcal B_\varepsilon
    }}
    \frac{1}{
        \sqrt{x_{m,k}(1-x_{m,k})}
    }
    \leq
    K_1\lVert f_C\rVert_\infty\sqrt{\varepsilon},
    \label{eq:integrated-variance-boundary}
\end{align}
uniformly in $m$, for some deterministic constant $K_1>0$ independent
of $m$ and $\varepsilon$.
Combining \eqref{eq:integrated-variance-interior} and
\eqref{eq:integrated-variance-boundary}, then letting first
$m\to\infty$ and subsequently $\varepsilon\downarrow0$, yields
$    \frac{1}{\sqrt m}
    \int_0^1
    \mathbb E\bigl(\xi_1^2\bigr)\dd u
    \xrightarrow[m\to\infty]{}
    \int_0^1
    f_C(u)\psi(u)\dd u
    =
    A_{f_C}.
$
Finally, since
$
    0\leq
    \mathbb E\xi_1
    \leq
    \lVert f_C\rVert_\infty
$
uniformly in $u\in[0,1]$ and $m$,
$
    \int_0^1
    \bigl(\mathbb E\xi_1\bigr)^2
    \dd u
    =
    O(1)
    =
    o(m^{1/2}).
$
Integrating \eqref{eq:lemma-var-iid} over $[0,1]$ therefore yields
$
    \int_0^1
    \operatorname{Var}\bigl(\widetilde f_{C,m,n}(u)\bigr)
    \dd u
    =
    \frac{m^{1/2}}{n}A_{f_C}
    +
    o\left(\frac{m^{1/2}}{n}\right).
$
This proves \eqref{eq:lemma-integrated-variance} and completes the proof.
\end{proof}
\begin{proof}[\bf Proof of Lemma~\ref{lem:ideal-bernstein-pointwise-clt}]
Fix $u\in(0,1)$ such that $f_C(u)>0$. For $v\in[0,1]$, define
$
    \kappa_m(v)
    :=
    \min\{\lfloor mv\rfloor,m-1\}.
$
Thus, $\kappa_m(v)\in\{0,\ldots,m-1\}$ identifies the subinterval of the
regular partition of $[0,1]$ containing $v$. Since each $C_i$ has a
density,
$
    \mathbb P\left(
        C_i=\frac{k}{m}
    \right)
    =
    0,
    \;
    k=0,\ldots,m.
$
Hence, the assignment of the grid points to adjacent subintervals does
not affect the estimator almost surely.

The ideal Bernstein estimator admits the representation
$
    \widetilde f_{C,m,n}(u)
    =
    \frac{1}{n}
    \sum_{i=1}^n\xi_i^{(m)}
    \;\text{a.s.},
$
where
$
    \xi_i^{(m)}
    :=
    m\,p_{\kappa_m(C_i)}(m-1,u).
$
For each fixed $m$, the random variables
$\xi_1^{(m)},\ldots,\xi_n^{(m)}$ are independent and identically
distributed. Set
$
    \mu_m
    :=
    \mathbb E\bigl[\xi_1^{(m)}\bigr],
    \;
    \sigma_m^2
    :=
    \operatorname{Var}\bigl(\xi_1^{(m)}\bigr).
$
By Lemma~\ref{lem:ideal-bernstein-variance}, the variance admits the
following asymptotic expansion as $m\to\infty$:
\begin{equation}
    \sigma_m^2
    =
    f_C(u)\psi(u)m^{1/2}
    +
    o(m^{1/2}).
    \label{eq:sigma-m-asymptotic-lemma}
\end{equation}
Since $f_C(u)\psi(u)>0$, one has $\sigma_m^2>0$ for all sufficiently
large $m$.

We next establish a bound for the summands that is uniform with respect
to $i$. A standard upper bound for binomial probabilities gives
$
    \max_{0\leq k\leq m-1}
    p_k(m-1,u)
    \leq
    K_u m^{-1/2}
$
for all sufficiently large $m$, where $K_u>0$ depends only on $u$.
Consequently,
\begin{equation}
    0
    \leq
    \xi_i^{(m)}
    \leq
    K_u\sqrt m,
    \;
    i=1,\ldots,n.
    \label{eq:xi-bound-clt}
\end{equation}
To control the centering term without imposing differentiability on
$f_C$, define
$$
    \bar f_{C,k}
    :=
    m\int_{k/m}^{(k+1)/m}f_C(v)\dd v,
    \;
    k=0,\ldots,m-1,
$$
and let
$
    K_m
    \sim
    \operatorname{Binomial}(m-1,u).
$
Then
$
    \mu_m
    =
    \sum_{k=0}^{m-1}
    \bar f_{C,k}p_k(m-1,u)
    =
    \mathbb E\bigl[\bar f_{C,K_m}\bigr].
$
The binomial moment formulas give
$
    \mathbb E\left(\frac{K_m}{m}\right)
    =
    \frac{m-1}{m}u
    \xrightarrow[m\to\infty]{}
    u
$
and
$
    \operatorname{Var}\left(\frac{K_m}{m}\right)
    =
    \frac{(m-1)u(1-u)}{m^2}
    \xrightarrow[m\to\infty]{}
    0.
$
Therefore,
$
    \frac{K_m}{m}
    \xrightarrow[m\to\infty]{\mathbb P}
    u.
$
Since $f_C\in C([0,1])$, the Heine--Cantor
theorem implies that $f_C$ is uniformly continuous on $[0,1]$.
Moreover,
\[
    \left|
        \bar f_{C,K_m}-f_C(u)
    \right|
    \leq
    \sup_{
        v\in[K_m/m,(K_m+1)/m]
    }
    \left|
        f_C(v)-f_C(u)
    \right|.
\]
Since $K_m/m\to u$ in probability and the length of the corresponding
subinterval is $1/m$, uniform continuity yields
$
    \bar f_{C,K_m}
    \xrightarrow[m\to\infty]{\mathbb P}
    f_C(u).
$
Furthermore,
$
    0
    \leq
    \bar f_{C,K_m}
    \leq
    \lVert f_C\rVert_\infty.
$
Thus, the sequence $(\bar f_{C,K_m})_m$ is uniformly integrable, and
convergence in probability implies convergence in $L^1$. Consequently,
\begin{equation}
    \mu_m
    =
    \mathbb E\bigl[\bar f_{C,K_m}\bigr]
    \xrightarrow[m\to\infty]{}
    f_C(u).
    \label{eq:mu-m-convergence}
\end{equation}

In particular, there exists a constant $K_\mu>0$ such that
$
    |\mu_m|
    \leq
    K_\mu
$
for all sufficiently large $m$. Combining this bound with
\eqref{eq:xi-bound-clt}, and setting
$
    \widetilde K_u
    :=
    K_u+K_\mu,
$
we obtain
\begin{equation}
    \left|
        \xi_i^{(m)}-\mu_m
    \right|
    \leq
    \widetilde K_u\sqrt m,
    \;
    i=1,\ldots,n,
    \label{eq:centered-xi-bound-clt}
\end{equation}
for all sufficiently large $m$.
Let $m=m_n$ and define the triangular array
$
    Y_{i,n}
    :=
    \xi_i^{(m_n)}-\mu_{m_n},
    \;
    i=1,\ldots,n.
$
For each $n$, the random variables
$Y_{1,n},\ldots,Y_{n,n}$ are independent and identically distributed,
centered, and have common variance
$
    s_n^2
    :=
    \sigma_{m_n}^2.
$
By \eqref{eq:sigma-m-asymptotic-lemma},
$
    s_n
    =
    \bigl[f_C(u)\psi(u)\bigr]^{1/2}
    m_n^{1/4}
    \bigl(1+o(1)\bigr).
$

For every $\varepsilon>0$, \eqref{eq:centered-xi-bound-clt} gives
\[
    \max_{1\leq i\leq n}
    \frac{|Y_{i,n}|}
         {\varepsilon\sqrt n\,s_n}
    \leq
    \frac{\widetilde K_u\sqrt{m_n}}
         {\varepsilon\sqrt n\,s_n}
    =
    O\left(
        \left(
            \frac{m_n}{n^2}
        \right)^{1/4}
    \right)
    \xrightarrow[n\to\infty]{}
    0.
\]
Indeed,
$
    \frac{m_n}{n^2}
    =
    \frac{1}{n}
    \frac{m_n}{n}
    \xrightarrow[n\to\infty]{}
    0,
$
because $m_n/n\to0$. Hence, for every $\varepsilon>0$ and all
sufficiently large $n$,
$
    \left\{
        |Y_{i,n}|
        >
        \varepsilon\sqrt n\,s_n
    \right\}
    =
    \varnothing,
    \;
    i=1,\ldots,n.
$
It follows that
\[
    \frac{1}{ns_n^2}
    \sum_{i=1}^n
    \mathbb E\left[
        Y_{i,n}^2
        \mathds{1}_{
            \{
                |Y_{i,n}|
                >
                \varepsilon\sqrt n\,s_n
            \}
        }
    \right]
    \xrightarrow[n\to\infty]{}
    0.
\]
Thus, the Lindeberg condition is satisfied. The Lindeberg--Feller central
limit theorem therefore gives
\begin{equation}
    \frac{
        \sum_{i=1}^nY_{i,n}
    }{
        \sqrt n\,s_n
    }
    \xrightarrow[n\to\infty]{\mathcal D}
    \mathcal N(0,1).
    \label{eq:lindeberg-feller-ideal-clt}
\end{equation}
Finally,
$
n^{1/2}m_n^{-1/4}
    \left[
        \widetilde f_{C,m_n,n}(u)
        -
        \mathbb E\bigl[
            \widetilde f_{C,m_n,n}(u)
        \bigr]
    \right]
    =
    \frac{
        \sum_{i=1}^nY_{i,n}
    }{
        \sqrt n\,s_n
    }
    \frac{s_n}{m_n^{1/4}}.
$
Since
$
    \frac{s_n}{m_n^{1/4}}
    \xrightarrow[n\to\infty]{}
    \bigl[f_C(u)\psi(u)\bigr]^{1/2},
$
Slutsky's theorem and
\eqref{eq:lindeberg-feller-ideal-clt} yield
\eqref{eq:ideal-pointwise-clt-lemma}. This completes the proof.
\end{proof}
\section{Generalized central limit theorem for martingales}
\label{app:martingale-gclt}
This appendix collects the martingale limit results used throughout the
paper. The compensated counting martingales and the centered jump-size
martingales appearing in the main proofs are purely discontinuous.
Accordingly, the principal result below
is a generalized central limit theorem for quasi-left-continuous, locally
square-integrable martingales that allows both a continuous martingale part and
jumps. Its classical bracket and Lindeberg conditions are the ones verified in
the proofs of the stable limit theorems for the jump-intensity and Gamma-rate
estimators.
The following theorem, due to Touati \cite{Touati1991}, provides a generalized central limit theorem (GCLT) for martingales based on characteristic function techniques rather than the classical Lindeberg condition. 
Let $M=(M_t)_{t\geq 0}$ be a $d$-dimensional quasi-left continuous local martingale, locally square integrable, defined on a filtered probability space
$(\Omega,\mathcal{F},(\mathcal{F}_t)_{t\geq 0},\mathbb{P})$; (see Jacod and Shiryaev \cite{JS2003}). Let $V=(V_t)_{t\geq 0}$ be a deterministic family of non-singular $d\times d$ matrices.
For $u \in \mathbb{R}^d$, define the characteristic function
\begin{align*}
	\phi_t(u)
	:=
	\exp\left(
	-\frac{1}{2}u^*\langle M^c\rangle_t u
	+
	\int_0^t \int_{\mathbb{R}^d}
	\big(e^{i\langle u,x\rangle}-1-i\langle u,x\rangle\big)
	\,\nu^M(ds,dx)
	\right),
\end{align*}
where $\nu^M$ denotes the compensator of the jump measure associated with $M$.
Recall that any local martingale $M$ admits the decomposition
$
M = M_0 + M^c + M^d,
$
where $M^c$ is a continuous local martingale and $M^d$ is a purely discontinuous local martingale. Its quadratic variation is given by
$
[M]_t = \langle M^c\rangle_t + \sum_{0<s\leq t} \Delta M_s \Delta M_s^*,
$
and its predictable compensator is denoted by $\langle M\rangle_t$.
The classical martingale central limit theorem is obtained under the assumptions
$
(\mathcal{H}_1)\quad
V_t^{-1}\langle M\rangle_t (V_t^*)^{-1}
\stackrel{a.s.}{\longrightarrow} C,
\, (t\to\infty),
$
together with the Lindeberg condition
$
(\mathcal{H}')
\quad
\forall \delta>0,\quad
\int_0^t \int_{\mathbb{R}^d}
\|V_t^{-1}x\|^2
\mathbf{1}_{\{\|V_t^{-1}x\|>\delta\}}
\,\nu^M(ds,dx)
\stackrel{a.s.}{\longrightarrow} 0.
$
These conditions imply the characteristic function condition $(\mathcal{H})$ with
$
\eta = C^{1/2},
\,
\Phi_\infty(\eta,u)
=
\exp\left(-\frac{1}{2}u^*Cu\right).
$
\begin{Theo}[Generalized central limit theorem for martingales]
\label{GCLTtouati}
	Let $M=(M_{t})_{t\in \real_{+}}$ be a d-dimensional quasi-left
	continuous local martingale with $M_{0}=0$ and $V=(V_{t})_{t\in \real_{+}}$ a
	deterministic family of non-singular matrices. We define a
	probability $\mathcal{Q}$ on the space
	$\mathrm{C}(\mathcal{X},\mathbb{R}^{d})$ of continuous functions
	from $\mathcal{X}$ to $\mathbb{R}^{d}$ (where $\mathcal{X}$
	indicates a vector space of finite dimension). If
	the couple $(M,V)$ satisfies the following assumption
	$\mathcal{(H)}\, \left\{
	\begin{array}{lll}
		\phi_{t}((V_{t}^{*})^{-1}u)  \stackrel{a.s.}{\tto}  \phi_{\infty}(\eta,u),\quad \mbox{as}\;\; t\tto \infty,& &\\
		\phi_{\infty}(\eta,u)\neq 0 \quad a.s.,&&
	\end{array}
	\right.$ where $\eta$ denotes a r.v., possibly
	degenerated taking values in $\mathcal{X}$ 
    and
	$\phi_{\infty}(z,u)=\dint_{\mathbb{R}^{d}}^{}\exp\{i\langle
	u,\xi\rangle\}\pi(z,d\xi),\quad (z,u)\in \mathcal{X}\times
	\mathbb{R}^{d},$ denotes the Fourier transform of the
	conditional laws $(\pi(x,.),x\in\mathcal{X})$ of
	the probability $\mathcal{Q}$. Then
	$$(GCLT)~~~~~~~~Z_{t}:=V_{t}^{-1}M_{t}\stackrel{\mathcal{D}}{\tto}
	Z_{\infty}:=\Sigma(\eta),\quad(t\tto \infty),$$ in a
	stable
	manner where $(\Sigma(z),z\in\mathcal{X})$ is a $\mathcal{Q}$ law process independent of the r.v. $\eta$.
\end{Theo}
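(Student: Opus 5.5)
The plan is to work directly with the characteristic-function characterization of stable convergence. It suffices to show that for every $u\in\mathbb R^d$ and every bounded $\mathcal F_\infty$-measurable $H$,
\[
\mathbb E\bigl[H e^{i\langle u,Z_t\rangle}\bigr]\longrightarrow \mathbb E\bigl[H\,\phi_\infty(\eta,u)\bigr],
\]
since the right-hand side equals $\mathbb E[H e^{i\langle u,\Sigma(\eta)\rangle}]$ when $\Sigma$ is a $\mathcal Q$-law process independent of $\eta$ (by construction of $\phi_\infty$ as the Fourier transform of $\pi(\eta,\cdot)$). I will take $M_0=0$, and note that $\eta$ is $\mathcal F_\infty$-measurable because it is an a.s.\ limit of adapted quantities. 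By a density argument, $H$ may be taken $\mathcal F_s$-measurable for some fixed $s$: such $H$ are $L^1$-dense among bounded $\mathcal F_\infty$-measurable variables, and the integrands are bounded by $1$.

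The core tool is the exponential local martingale. Fix $v\in\mathbb R^d$. Because $M$ is quasi-left-continuous, its jump compensator $\nu^M$ has no atoms in time, so $\phi_r(v)$ is continuous and of finite variation in $r$. By It\^o's formula (equivalently, identifying the Dol\'eans-Dade exponential), on the set where $\phi_r(v)\neq0$ the process
\[
\mathcal E_r(v):=\frac{e^{i\langle v,M_r\rangle}}{\phi_r(v)}
\]
is a local martingale. To handle division by small values I would localize with the stopping time
\[
\tau_\varepsilon(v):=\inf\{r:\ |\phi_r(v)|<\varepsilon\}.
\]
Up to $\tau_\varepsilon(v)$ the stopped process is bounded by $1/\varepsilon$, hence a true martingale. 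Applying this at $v=v_t:=(V_t^*)^{-1}u$ between times $s$ and $t$ gives
\[
\mathbb E\bigl[H e^{i\langle v_t,M_{t\wedge\tau}\rangle}/\phi_{t\wedge\tau}(v_t)\bigr]
=\mathbb E\bigl[H e^{i\langle v_t,M_{s\wedge\tau}\rangle}/\phi_{s\wedge\tau}(v_t)\bigr].
\]
As $t\to\infty$, with $s$ fixed, $v_t\to0$ (I will use the implicit normalization $V_t^{-1}\to0$ that underlies $(\mathcal H)$), so the right-hand side tends to $\mathbb E[H]$. Rewriting the left-hand side, and using $(\mathcal H)$ together with $\phi_\infty(\eta,u)\neq0$, I will show
\[
\mathbb E\Bigl[H e^{i\langle u,Z_t\rangle}\,\frac{\phi_\infty(\eta,u)}{\phi_t(v_t)}\,\mathbf 1_{\{\tau>t\}}\Bigr]\to \mathbb E\bigl[H\,\mathbf 1_{\{|\phi_\infty|\ge 2\varepsilon\}}\bigr]+\text{error}(\varepsilon).
\]
Replacing $H$ by $H\,\overline{\phi_\infty}\cdots$ and then letting $\varepsilon\downarrow0$ recovers the claim. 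The cleaner route is to first prove
\[
\mathbb E\bigl[H\,\phi_\infty(\eta,u)^{-1}e^{i\langle u,Z_t\rangle}\mathbf 1_{\{|\phi_\infty|>2\varepsilon\}}\bigr]\to \mathbb E\bigl[H\,\mathbf 1_{\{|\phi_\infty|>2\varepsilon\}}\bigr]
\]
and then substitute $H\phi_\infty$ for $H$.

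The main obstacle is the localization step. The stopping time $\tau_\varepsilon(v_t)$ depends on $t$ through $v_t$, and $(\mathcal H)$ only controls $\phi_t(v_t)$ at the terminal time, not $\phi_r(v_t)$ for intermediate $r\in[s,t]$. The event $\{\tau_\varepsilon(v_t)\le t\}$ must therefore be shown asymptotically negligible on $\{|\phi_\infty|>2\varepsilon\}$. The first approach I would try uses the L\'evy--Khintchine structure to get a monotonicity bound. Since $\operatorname{Re}(e^{ix}-1-ix)=\cos x-1\le0$, the modulus
\[
|\phi_r(v)|=\exp\Bigl(-\tfrac12 v^*\langle M^c\rangle_r v+\int_0^r\!\!\int(\cos\langle v,x\rangle-1)\,\nu^M\Bigr)
\]
is nonincreasing in $r$. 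Hence $|\phi_r(v_t)|\ge|\phi_t(v_t)|$ for $r\le t$, so $\{\tau_\varepsilon(v_t)\le t\}\subset\{|\phi_t(v_t)|<\varepsilon\}$. By $(\mathcal H)$, the probability of this event intersected with $\{|\phi_\infty|>2\varepsilon\}$ tends to $0$. This monotonicity reduces the obstacle to dominated convergence. The remaining terms are handled by bounded convergence, since all integrands are bounded by $1/\varepsilon$ on the relevant events.
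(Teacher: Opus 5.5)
The paper does not prove this theorem. It quotes it from Touati \cite{Touati1991} and uses only its bracket/Lindeberg form, always with $V_T=\sqrt T\,I$, so there is no in-paper argument to compare with. Your route is the standard one: the exponential local martingale $\mathcal E_r(v)=e^{i\langle v,M_r\rangle}/\phi_r(v)$, localisation at $\tau_\varepsilon(v_t)$, and the fact that $|\phi_r(v)|$ is continuous and nonincreasing in $r$ because $\cos-1\le0$. The monotonicity observation does settle the intermediate-time problem you identified. Two steps, however, fail as written.

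\textbf{First, $\|V_t^{-1}\|\to0$ is not a consequence of $(\mathcal H)$; it is an extra hypothesis, and it cannot be dropped.} Take $d=1$, $M_t=W_{t\wedge1}$ and $V_t\equiv1$.
\begin{itemize}
\item Then $\phi_t(u)=e^{-u^2(t\wedge1)/2}\to e^{-u^2/2}\neq0$, so $(\mathcal H)$ holds with degenerate $\eta$.
\item But $M_t=W_1$ for $t\ge1$. With $H=e^{-iuW_1}$ one gets $\mathbb E[He^{iuM_t}]=1\neq e^{-u^2}=\mathbb E[H]e^{-u^2/2}$.
\end{itemize}
So the conclusion you are proving, $\mathbb E[He^{i\langle u,Z_t\rangle}]\to\mathbb E[H\phi_\infty(\eta,u)]$ for all bounded $H$, is false without this hypothesis. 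That is the mixing form the paper actually uses. It also requires $\Sigma$ to be independent of all of $\mathcal F$, not just of $\eta$. In your argument, the counterexample bites exactly where you claim $\mathbb E[H\mathcal E_{s\wedge\tau}(v_t)]\to\mathbb E[H]$. State $\|V_t^{-1}\|\to0$ explicitly: it is missing from the statement as printed, and it holds in every application in the paper. With it, $\mathcal E_{s\wedge\tau}(v_t)\to1$ almost surely follows from three facts:
\begin{itemize}
\item $|v_t|\sup_{r\le s}|M_r|\to0$;
\item the bound $|e^{i\langle v,x\rangle}-1-i\langle v,x\rangle|\le2|v|(|x|^2\wedge|x|)$ for $|v|\le1$;
\item $\int_0^s\int(|x|^2\wedge|x|)\,\nu^M(\mathrm dr,\mathrm dx)<\infty$.
\end{itemize}
Dominated convergence then applies with the bound $1/\varepsilon$.

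\textbf{Second, the density reduction is applied to the wrong object.} You reduce at the outset to $\mathcal F_s$-measurable $H$. Later you multiply $H$ by $\mathbf 1_{\{|\phi_\infty|>2\varepsilon\}}$ and by $\phi_\infty^{-1}$, and finally substitute $H\phi_\infty$ for $H$. These are only $\mathcal F_\infty$-measurable, so optional stopping between $s$ and $t$ no longer applies to them.

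The fix is to run the density argument on $X_t:=\mathcal E_{t\wedge\tau_t}(v_t)$, with $\tau_t:=\tau_\varepsilon(v_t)$.
\begin{itemize}
\item Since $|X_t|\le1/\varepsilon$ uniformly in $t$, the limit $\mathbb E[GX_t]\to\mathbb E[G]$ proved for $\mathcal F_s$-measurable $G$ extends to every bounded $\mathcal F_\infty$-measurable $G$. This uses $|\mathbb E[(G-\mathbb E[G\mid\mathcal F_s])X_t]|\le\varepsilon^{-1}\mathbb E|G-\mathbb E[G\mid\mathcal F_s]|$.
\item On $\Omega_\varepsilon:=\{|\phi_\infty(\eta,u)|>2\varepsilon\}$, your monotonicity bound gives $\tau_t>t$ eventually. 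Hence $e^{i\langle u,Z_t\rangle}=X_t\phi_t(v_t)$ and $|e^{i\langle u,Z_t\rangle}-X_t\phi_\infty(\eta,u)|\le\varepsilon^{-1}|\phi_t(v_t)-\phi_\infty(\eta,u)|\to0$.
\item Dominated convergence, together with the extended limit for $G=H\mathbf 1_{\Omega_\varepsilon}\phi_\infty(\eta,u)$, gives $\mathbb E[H\mathbf 1_{\Omega_\varepsilon}e^{i\langle u,Z_t\rangle}]\to\mathbb E[H\mathbf 1_{\Omega_\varepsilon}\phi_\infty(\eta,u)]$.
\item The complement $\Omega_\varepsilon^c$ contributes at most $2\|H\|_\infty\,\mathbb P(|\phi_\infty(\eta,u)|\le2\varepsilon)$, which vanishes as $\varepsilon\downarrow0$.
\end{itemize}
This argument needs only $\phi_\infty(\eta,u)$, not $\eta$ itself, to be $\mathcal F_\infty$-measurable, and that is exactly what $(\mathcal H)$ provides.
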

It is worth noting that the classical Lindeberg condition $(\mathcal{H}')$ is generally easier to verify in practice than the characteristic function condition $(\mathcal{H})$; (see Fathallah et al. \cite[Remark~1.1]{FK2012}).

\nocite{JP2012}
\nocite{B1999}
\nocite{L1992}
\nocite{L1994}
\nocite{Sato1999}
\nocite{V2000}
\nocite{Li2011}
\nocite{KS1997}
\nocite{D1989}
\nocite{LL1996}
\nocite{K2014}
\nocite{JM1976}
\nocite{JS2003}
\nocite{S1991}
\nocite{P1972}
\nocite{OR1997}
\nocite{O1998}
\nocite{LS2001}
\nocite{LM2015}
\nocite{JRT2016b}
\nocite{IW1989}
\nocite{HMZ2011}
\nocite{BBKP2019_Heston}
\nocite{BBKP2018}
\nocite{Alfonsi2005}
\nocite{BK2012}
\nocite{BK2013}
\nocite{KS2008}
\nocite{FMS2013}
\nocite{KM2012}
\nocite{BDZP2013}
\nocite{BZP2015}
\nocite{BZP2015'}
\nocite{CIR1985}
\nocite{F1951}
\nocite{FL2010}
\nocite{Touati1991}
\nocite{Touati1993}
\nocite{Kutoyants2004}
\nocite{BDF2025}

\bibliographystyle{plain}
\bibliography{biblioBAJD-Effetalea_clean}

\end{document}